\newif\ifarxiv
\newtheorem{example}{Example}
\newtheorem{theorem}{Theorem}
\newtheorem{lemma}{Lemma}
\newtheorem{proposition}{Proposition}
\newtheorem{corollary}{Corollary}
\newtheorem{definition}{Definition}
\newlist{romanenumerate*}{enumerate*}{1}
\setlist[romanenumerate*]{label=(\textit{\roman*})}
\newlist{romanenumerate}{enumerate}{1}
\setlist[romanenumerate]{label=(\textit{\roman*})}
\title{Automata Cascades: Expressivity and Sample Complexity}
\author {
    Alessandro Ronca,\textsuperscript{\rm 1}
    Nadezda Alexandrovna Knorozova,\textsuperscript{\rm 2,3}
    Giuseppe De Giacomo\textsuperscript{\rm 1,4}
}
\newcounter{runningexamplecounter}
\begin{document}

\maketitle


\begin{abstract}
  Every automaton can be decomposed into a cascade of basic \emph{prime}
  automata.
  This is the Prime Decomposition Theorem by Krohn and Rhodes.
  Guided by this theory,
  we propose \emph{automata cascades} as a structured, modular,
  way to describe automata as complex systems made of many components, each
  implementing a specific functionality.
  Any automaton can serve as a component; using specific components allows for
  a fine-grained control of the expressivity of the resulting class of
  automata;
  using prime automata as components implies specific expressivity guarantees.
  Moreover, specifying automata as cascades allows for
  describing the sample complexity of automata in terms of their components.
  We show that the sample complexity is linear in the number of
  components and the maximum complexity of a single component, modulo
  logarithmic factors.
  This opens to the possibility of learning automata representing large
  dynamical systems consisting of many parts interacting with each other.
  It is in sharp contrast with the established understanding of the sample
  complexity of automata, described in terms of the overall number of states and
  input letters, which implies that it is only possible to learn
  automata where the number of states is linear in the amount of data available.
  Instead our results show that one can learn automata with a number of states
  that is exponential in the amount of data available.
\end{abstract}

\section{Introduction}

Automata are fundamental in computer science.  
They are one of the simplest models of computation, with the expressive
power of regular languages, placed at the bottom of the Chomsky hierarchy.
They are also a mathematical model of finite-state dynamical systems.
In learning applications, automata allow for capturing targets that exhibit a
time-dependent behaviour; namely, functions over sequences such
as time-series, traces of a system, histories of interactions between an agent
and its environment.
Automata are typically viewed as \emph{state diagrams}, where states and
transitions are the building blocks of an automaton, cf.\
\cite{hopcroft1979ullman}. 
Accordingly, classes of automata are described in terms of the number of states
and input letters.
Learning from such classes requires an amount of data that is linear in the
number of states and letters~\cite{ishigami1997vc}.
Practically it means that, in order to learn large dynamical systems made of
many components, the amount of data required is exponential in the number of
components, as the number of states of a system will typically be exponential in
the number of its stateful components.

We propose \emph{automata cascades} as a structured, modular, way to describe
automata as complex systems made of many \emph{components} connected in an
acyclic way.
Our cascades are strongly based on the \emph{theory of Krohn and Rhodes},
which says that every automaton can be decomposed into a cascade of basic
components called \emph{prime automata} \cite{krohn1965rhodes}. 
Conversely, the theory can be seen as prescribing which components to use in
order to build certain classes of automata, and hence obtain a certain
expressivity.
For example, we can cascade so-called \emph{flip-flop automata} in
order to build all noncounting automata, and hence obtain the expressivity of
well-known logics such as monadic first-order logic on finite
linearly-ordered domains \cite{mcnaughton1971counter} and
the linear temporal logic on finite traces 
LTL$_f$ \cite{degiacomo2013ltlf}.

We focus on cascades as a means to learn automata. 
Our cascades are designed for a fine control of their sample complexity. 
We show that---ignoring logarithmic factors---\emph{the sample complexity of
  automata cascades is at most linear in the product of the number of components
  and the maximum complexity of a single component}.
Notably, the complexity of a single component does not grow with 
the number of components in the cascade. 
We carry out the analysis both in the setting where classes of
automata cascades are finite, and in the more general setting where they can be
infinite.
For both cases, we obtain bounds of the same shape, with one notable difference
that for infinite classes we incur a logarithmic dependency on the
maximum length of a string.
Overall, our results show that the sample complexity of automata can be
decoupled from the the number of states and input letters. Rather, it can be
described in terms of the components of a cascade capturing the automaton.
Notably, the number of states of such an automaton can be
exponential in the number of components of the cascade.


We see the opportunity for cascades to unlock a greater potential of 
automata in learning applications. 
On one hand, automata come with many favourable, well-understood, 
theoretical properties, and they admit elegant algorithmic solutions. 
On the other hand, the existing automata learning algorithms have a complexity
that depends directly on the number of states.
This hurts applications where the number of states grows very fast, such as 
non-Markov reinforcement learning
\cite{toroicarte2018reward,degiacomo2019bolts,gaon2019rl,brafman2019rdp,abadi2020learning,xu2020jirp,neider2021advice,jothimurugan2021compositional,ronca2021pac,ronca2022markov}.
Given our favourable sample complexity results, automata cascades have a great
potential to extend the applicability of automata learning in large complex
settings.


Before concluding the section, we introduce our running example, that is
representative of a class of tasks commonly considered in reinforcement
learning.
It is based on an example from \cite{andreas2017modular}.


\begin{example}
  \label{ex:running-example-1}
  Consider a Minecraft-like domain,
  where an agent has to build a bridge by first collecting 
  some raw materials, and then using a factory.
  The agent has two options.
  The first option is to complete  
  the tasks $\{\mathrm{getWood},$ $\mathrm{getIron},$ $\mathrm{getFire} \}$ in any
  order, and then $\{ \mathrm{useFactory} \}$.
  The second option is to complete $\{ \mathrm{getSteel} \}$, and then
  $\{ \mathrm{useFactory} \}$.
\end{example}

\ifarxiv
  A formal description of the example as well as full proofs of all our
  technical results are deferred to the appendix.
\else
  A formal description of the example as well as full proofs of all our
  technical results are given in the extended version \cite{extendedversion}.
\fi

\section{Preliminaries}

\paragraph{Functions.}
For $f: X \to Y$ and $g: Y \to Z$, their \emph{composition}
$f \circ g: X \to Z$ is defined as $(f \circ g)(x) = g(f(x))$. 
For $f: X \to Y$ and $h: X \to Z$, their \emph{cross product} 
$f \times h: X \to Y \times Z$ is defined as 
$(f \times h)(x) = \langle f(x), h(x) \rangle$. 
A class of functions is \emph{uniform} if all functions in the class
have the same domain and codomain.
Given a tuple $t = \langle x_1, \dots, x_n \rangle$, and 
a subset $J \subseteq [1,n]$,
the \emph{projection} $\pi_J(t)$ is
$\langle x_{j_1}, \dots, x_{j_m} \rangle$ where  
$j_1, \dots, j_m$ is the sorted sequence of elements of $J$.
Furthermore, $\pi_m^a$ denotes the class of all projections $\pi_J$ for $J$ a
subset of $[1,a]$ of cardinality $m$.
We write $I$ for the \emph{identity function}, and $\log$ for $\log_2$.

\paragraph{String Functions and Languages.}
An \emph{alphabet} $\Sigma$ is a set of elements called \emph{letters}.
A \emph{string} over $\Sigma$ is an expression $\sigma_1 \dots \sigma_\ell$
where each letter $\sigma_i$ is from $\Sigma$.
The \emph{empty string} is denoted by $\varepsilon$.
The set of all strings over $\Sigma$ is denoted by $\Sigma^*$.
A \emph{factored alphabet} is of the form $X^a$ for some set $X$ called the
\emph{domain} and some integer $a \geq 1$ called the \emph{arity}.
A \emph{string function} is of the form $f: \Sigma^* \to \Gamma$ for $\Sigma$
and $\Gamma$ alphabets.
\emph{Languages} are a special case of string functions; namely, 
when $f: \Sigma \to \{0,1\}$ is an indicator function,
it can be equivalently described by the set 
$\{ x \in \Sigma^* \mid f(x) = 1 \}$, that is called a \emph{language}.

\section{Learning Theory}

We introduce the problem of learning, following
the classical perspective of statistical learning theory 
\cite{vapnik1998statistical}.

\paragraph{Learning Problem.}
Consider an input domain $X$ and an output domain $Y$. For example, in the
setting where we classify strings over an alphabet $\Sigma$, the input domain
$X$ is  $\Sigma^*$ and the output domain $Y$ is $\{ 0,1 \}$.
The \emph{problem of learning} is that of choosing, from an 
\emph{admissible class} $\mathcal{F}$ of functions from $X$ to $Y$, a function
$f$ that best approximates an unknown \emph{target function} $f_0: X \to Y$, not
necessarily included in $\mathcal{F}$.
The quality of the approximation of $f$ is given by the overall discrepancy of 
$f$ with the target $f_0$. On a single domain element $x$, the discrepancy
between $f(x)$ and $f_0(x)$ is measured as $L(f(x),f_0(x))$, for a given 
\emph{loss function} $L: Y \times Y \to \{ 0,1 \}$.
The overall discrepancy is the expectation $\mathbb{E}[L(f(x),f_0(x))]$ with
respect to an underlying probability distribution $P$,
and it is called the \emph{risk} of the function, written $R(f)$.
Then, the goal is to choose a function $f \in \mathcal{F}$ that
minimises the risk $R(f)$, when the underlying probability distribution $P$ is
unknown, but we are given a \emph{sample} $Z_\ell$ of $\ell$ i.i.d.\ elements
$x_i \in X$ drawn according to $P(x_i)$ together with their labels $f_0(x_i)$;
specifically, $Z_\ell = z_1, \dots, z_\ell$ with 
$z_i = \langle x_i,f_0(x_i) \rangle$.

\paragraph{Sample Complexity.}
We would like to establish the minimum sample size $\ell$ sufficient to identify
a function $f \in \mathcal{F}$ such that 
\begin{equation*}
\label{eq:consistency}
R(f) - \min_{f \in \mathcal{F}} R(f) \leq \epsilon
\end{equation*}
with probability at least $1-\eta$.
We call such $\ell$ the sample complexity of $\mathcal{F}$, and we write it as
$S(\mathcal{F},\epsilon,\eta)$.
When $\epsilon$ and $\eta$ are considered fixed, we write it as $S(\mathcal{F})$.

\paragraph{Sample Complexity Bounds for Finite Classes.}

When the set of admissible functions $\mathcal{F}$ is finite, 
its sample complexity can be bounded in terms of its cardinality, cf.\ 
\cite{shalevshwartz2014book}. In particular, 
\begin{equation*}
  S(\mathcal{F},\epsilon,\eta) \in O \big( (\log |\mathcal{F}| - \log \eta) /
  \epsilon^2 \big).
\end{equation*}

Then, for fixed $\epsilon$ and $\eta$, the sample complexity $S(\mathcal{F})$ is 
$O(\log|\mathcal{F}|)$, and hence finite classes can be compared in terms of
their cardinality.

\section{Automata}
This section introduces basic notions of automata theory, with some inspiration
from \cite{ginzburg,maler1993thesis}.

An automaton is a mathematical description of a stateful machine that returns an
output letter on every input string. 
At its core lies the mechanism that updates the internal state upon reading an
input letter. This mechanism is captured by the notion of semiautomaton.
An $n$-state \emph{semiautomaton} is a tuple 
$D = \langle \Sigma, Q, \delta, q_\mathrm{init} \rangle$ where:
$\Sigma$ is an alphabet called the \emph{input alphabet};
$Q$ is a set of $n$ elements called \emph{states}; 
$\delta: Q \times \Sigma \to Q$ is a function called \emph{transition function}; 
$q_\mathrm{init} \in Q$ is called \emph{initial state}.
The transition function is recursively extended to non-empty strings as
$\delta(q,\sigma_1\sigma_2 \dots \sigma_m) = 
\delta(\delta(q,\sigma_1), \sigma_2 \dots\sigma_m)$, and to the empty string as 
$\delta(q,\varepsilon) = q$.
The result of executing semiautomaton $D$ on an input string is
$D(\sigma_1 \dots \sigma_m) = 
\delta(q_\mathrm{init},\sigma_1 \dots \sigma_m)$.
We also call such function $D: \Sigma^* \to Q$ the function implemented by
the semiautomaton.

Automata are obtained from semiautomata by adding an output function.
An $n$-state \emph{automaton} is a tuple 
$A = \langle \Sigma, Q, \delta, q_\mathrm{init}, \Gamma, \theta \rangle$ where:
$D_A = \langle \Sigma, Q, \delta, q_\mathrm{init} \rangle$ is a semiautomaton,
called the \emph{core semiautomaton} of $A$;
$\Gamma$ is an alphabet called \emph{output alphabet},
$\theta: Q \times \Sigma \to \Gamma$ is called \emph{output function}.
An \emph{acceptor} is a special kind of automaton where the output function is
an indicator function $\theta: Q \times \Sigma \to \{ 0, 1\}$.
The result of executing autmaton $A$ on an input string is
$A(\sigma_1 \dots \sigma_m) = 
\theta(D_A(\sigma_1 \dots \sigma_{m-1}),\sigma_m)$.
We also call such function $A: \Sigma^* \to \Gamma$ the function implemented by
the automaton.
The language \emph{recognised} by an acceptor is the set of strings on which it
returns $1$. 
When two automata $A_1$ and $A_2$ implement the same function, we say that $A_1$
\emph{is captured by} $A_2$, or equivalently that $A_2$ is captured by $A_1$.

The \emph{expressivity} of a class of automata is the set of functions they
implement.
Thus, for acceptors, it is the set of languages they recognise.
The expressivity of all acceptors is the \emph{regular languages}
\cite{kleene1956representation}, i.e., the languages that can be specified by
regular expressions. 
We will often point out the expressivity of acceptors because the classes of
languages they capture are widely known.


\begin{figure}[t]
  \centering
  \vspace{0.1cm}
  \includegraphics[draft=false,width=0.45\textwidth]{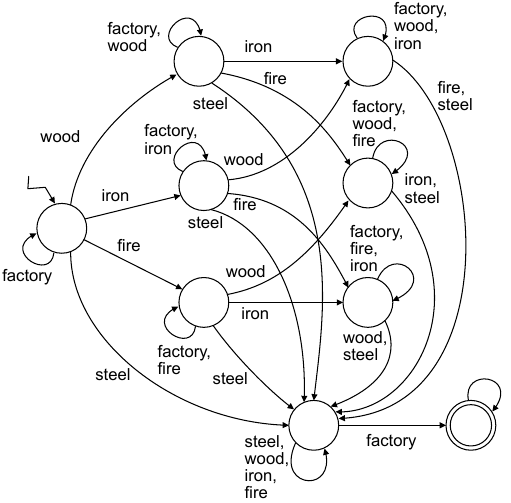}
  \caption{State diagram of the automaton for
  Example~\ref{ex:running-example-2}.}
  \label{figure:state-diagram}
\end{figure}

\begin{example}
  \label{ex:running-example-2}
  The automaton for our running example reads traces generated by the
  agent while interacting with the environment.
  The input alphabet is 
  $\Sigma = \{ \mathrm{blank}$, 
  $\mathrm{wood},$ $\mathrm{iron},$ $\mathrm{fire},$ 
  $\mathrm{steel},$ $\mathrm{factory} \}$,
  where $\mathrm{blank}$ describes that no relvant event happened. 
  The automaton returns $1$ on traces where the agent has completed the task.
  The state diagram is depicted in Figure~\ref{figure:state-diagram},
  where transitions for $\mathrm{blank}$ are omitted and they always yield the
  current state. The output is $1$ on all transitions entering the double-lined
  state, and $0$ otherwise.
\end{example}

The automaton in the example has to keep track of the subset
of tasks completed so far, requiring one state per subset.
In general, the number of states can grow exponentially with the number of
tasks. 

\subsection{Existing Sample Complexity Results}

Classes of automata are typically defined in terms of the cardinality $k$ of
the input alphabet (assumed to be finite) and the number $n$ of states.
The existing result on the sample complexity of automata is for such a family of
classes.
\begin{theorem}[Ishigami and Tani, 1997]
  \label{th:ishigami-tani}
  Let $\mathcal{A}(k,n)$ be the class of $n$-state acceptors
  over the input alphabet $[1,k]$.
  Then,
  the sample complexity of\/ $\mathcal{A}(k,n)$ is 
  $\Theta(k \cdot n \cdot \log n)$.\footnote{The original result is equivalently
    stated in terms of the VC dimension, introduced later in this
  paper.}
\end{theorem}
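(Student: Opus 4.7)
The plan is to establish matching upper and lower bounds, both of order $kn\log n$, using the finite-class bound recalled in the preliminaries for the upper side and a VC-dimension argument for the lower side.

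For the upper bound, I would count the elements of $\mathcal{A}(k,n)$ directly. An $n$-state acceptor over alphabet $[1,k]$ is fully specified by a transition function $\delta\colon [1,n]\times[1,k]\to[1,n]$ (at most $n^{kn}$ choices), an output function $\theta\colon [1,n]\times[1,k]\to\{0,1\}$ (at most $2^{kn}$ choices), and an initial state (at most $n$ choices). Hence $|\mathcal{A}(k,n)|\le n\cdot n^{kn}\cdot 2^{kn}$ and $\log|\mathcal{A}(k,n)|=O(kn\log n)$. Plugging this into the finite-class bound $S(\mathcal{F},\epsilon,\eta)\in O((\log|\mathcal{F}|-\log\eta)/\epsilon^2)$ yields $S(\mathcal{A}(k,n))=O(kn\log n)$ for fixed $\epsilon,\eta$.

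For the matching lower bound, I would exhibit a sub-family $\mathcal{B}\subseteq\mathcal{A}(k,n)$ that shatters a set of $\Omega(kn\log n)$ strings, and then invoke the classical VC lower bound, which implies $S(\mathcal{A}(k,n))=\Omega(kn\log n)$. The construction idea is to view each pair $(q,\sigma)\in[1,n]\times[1,k]$ as a storage cell holding $\lceil\log n\rceil$ bits, namely the binary encoding of the destination state $\delta(q,\sigma)$. For each cell and each bit position I would design a short probe string that (i) navigates from $q_\mathrm{init}$ to state $q$, (ii) reads letter $\sigma$ to jump to $\delta(q,\sigma)$, and (iii) passes through a ``read-out'' gadget that turns the chosen bit of the destination index into the final accept/reject decision. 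Varying the contents of the $kn$ cells independently realises every dichotomy on the resulting collection of $kn\lceil\log n\rceil$ probes, witnessing the claimed VC dimension.

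The main obstacle will be keeping the total state count within $n$ while simultaneously providing a navigation backbone that reaches every $(q,\sigma)$-cell and a read-out mechanism that extracts one chosen bit. The cleanest way is to let the $n$ states themselves play both roles: the storage states serve as the navigation backbone, and the bit selector is encoded by a short suffix of the probe rather than by additional states. A careful bookkeeping shows the overhead is at most a constant factor, which can be absorbed by replacing $n$ with $\lfloor n/c\rfloor$ for a suitable constant $c$ without affecting the asymptotic order, so one still obtains $\Omega(kn\log n)$ shatterable strings and the desired tight bound.
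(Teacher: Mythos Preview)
The paper does not contain a proof of this statement at all: Theorem~\ref{th:ishigami-tani} is quoted as an external result of Ishigami and Tani (1997), and the paper neither reproves it nor sketches an argument. The only place it is revisited is the Related Work section, where the authors observe that the \emph{upper} bound is recovered as a special case of their Theorem~\ref{th:finite-complexity-gautomaton} (taking $|\pi^a_m|=1$, $\Phi=\{I\}$, $|\Delta|=n^{kn}$, $|\Theta|=2^n$). That derivation is exactly your counting argument, so on the upper side your proposal and the paper coincide.

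There is therefore nothing in the paper to compare your lower-bound argument against; the $\Omega(kn\log n)$ direction is simply attributed to the cited reference. On its own merits, your sketch has the right shape---encode $\lceil\log n\rceil$ bits in each transition $\delta(q,\sigma)$ and read them out via probe strings---but the independence claim is where the real work lies and you have not discharged it. The navigation backbone (``reach state $q$ from $q_{\mathrm{init}}$'') and the read-out gadget both use transitions, yet you also want to vary \emph{all} transitions freely to realise every dichotomy; as written, changing $\delta(q,\sigma)$ may break the path that some other probe uses to reach its target cell. Saying ``the bit selector is encoded by a short suffix of the probe rather than by additional states'' does not resolve this, because the suffix is still processed by the very transition table you are perturbing. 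The actual Ishigami--Tani construction handles this tension carefully (reserving part of the alphabet or part of the state space for fixed routing), and your final paragraph gestures at such a trade-off without carrying it out. So the upper bound is complete and matches the paper; the lower bound remains a plausible plan rather than a proof.
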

Consequently, learning an acceptor from the class of all acceptors
with $k$ input letters and $n$ states requires an amount of data that is
at least $k \cdot n$.
Such a dependency is also observed in the existing automata learning algorithms,
e.g., 
\cite{angluin1987learning,ron1996amnesia,ron1998learnability,clark2004pac,palmer2007pac,balle2013pdfa,balle2014adaptively}.
More recently, there has been an effort in overcoming the direct dependency on
the cardinality $k$ of the input alphabet, through 
\emph{symbolic automata}
\cite{mens2015learning,maler2017generic,argyros2018symbolic}, but their sample
complexity has not been studied.

\section{Automata Cascades}
\label{sec:cascades}

We present the formalism of automata cascades, strongly based on the cascades
from \cite{krohn1965rhodes}---see also
\cite{ginzburg,maler1993thesis,domosi2005algebraic}.
The novelty of our formalism is that every cascade component is equipped with 
mechanisms for processing inputs and outputs. This allows for
(\emph{i}) controlling the complexity of components as the size of the cascade
increases; and 
(\emph{ii}) handling large (and even infinite) input alphabets.


\begin{definition}
An \emph{automata cascade} is a sequence of automata 
$A_1 \cascade \cdots \cascade A_d$ where
each $A_i$ is called a \emph{component} of the cascade and it is of the form 
$$\langle \Sigma_1 \times \cdots \times \Sigma_i, Q_i, \delta_i,
q^\mathrm{init}_i, \Sigma_{i+1}, \theta_i \rangle.$$
The function implemented by the cascade is the one implemented by the
automaton 
$\langle \Sigma_1, Q, \delta, q^\mathrm{init}, \Sigma_{d+1}, \theta
\rangle$
having
set of states
$Q = Q_1 \times \dots \times Q_d,$
initial state
$q_\mathrm{init} = \langle q_1^\mathrm{init}, \dots, q_d^\mathrm{init} \rangle$,
transition and output functions defined as
\begin{align*}
  \delta(\langle q_1, \dots, q_d \rangle, \sigma) & = \langle
\delta_1(q_1,\sigma_1), \dots, \delta_d(q_d,\sigma_d)\rangle,
\\
\theta(\langle q_1, \dots, q_d \rangle, \sigma) & = \theta_d(q_d, \sigma_d),
\end{align*}
where each component reads the recursively-defined input
$$
\sigma_1 = \sigma \text{, and }
\sigma_{i+1} = \langle \sigma_i, \theta_i(q_i,\sigma_i) \rangle.
$$
A cascade is \emph{simple} if $\theta_i(q,\sigma) = q$, for every 
$i \in [1,d-1]$.
An \emph{acceptor cascade} is a cascade where $\Gamma_{d+1} = \{0,1\}$.
\end{definition}
The components of a cascade are arranged in a sequence.
Every component reads the input and output of the preceding component, and
hence, recursively, it reads the external input together with the output of all
the preceding components.
%
The external input is the input to the first component, and the overall output
is the one of the last component.

As components of a cascade, we consider 
automata on factored alphabets that first apply a \emph{projection} operation on
their input, then apply a map to a smaller \emph{internal alphabet}, and finally
transition based on the result of the previous operations.
\begin{definition}
  An $n$-state \emph{automaton} is a tuple 
  $A = \langle X^a, J, \Pi, \phi, Q, \delta, q_\mathrm{init}, \Gamma, \theta
  \rangle$ where $X^a$ is the factored input alphabet;
  $J \subseteq [1,a]$ is the \emph{dependency set} and its cardinality $m$ is
  the \emph{degree of dependency};
  $\Pi$ is the finite \emph{internal alphabet};
  $\phi: X^m \to \Pi$ is an \emph{input function} that operates on the projected
  input tuples;
  $Q$ is a set of $n$ states;
  $\delta: Q \times \Pi \to Q$ is the transition function on the internal
  letters;
  $q_\mathrm{init} \in Q$ is the initial state;
  $\Gamma$ is the output alphabet; 
  and
  $\theta: Q \times X^m \to \Gamma$ is an output function that operates on the
  projected input tuples.
  The automaton induced by $A$ is the automaton
  $A' = 
  \langle \Sigma, Q, \delta_{J,\phi}, q_\mathrm{init}, \Gamma, \theta_J \rangle$
  where the input alphabet is $\Sigma = X^a$, the transition function is 
  $\delta_{J,\phi}(q,\sigma) =
  \delta(q,\phi(\pi_J(\sigma)))$, and the output function is 
  $\theta_J(q,\sigma) = \theta(q,\pi_J(\sigma))$.
  The core semiautomaton of $A$ is the core semiautomaton of $A'$.
  The string function implemented by $A$ is the one implemented by the induced
  automaton.
\end{definition}

The above definition adds two key aspects to the standard definition of
automaton.
First, the \emph{projection operation} $\pi_J$, that allows for capturing the
dependencies between the components in a cascade.
Although every component receives the output of all the preceding components,
it may use only some of them, and hence the others can be projected away. 
The dependency set $J$ corresponds to the indices of the input tuple that are
relevant to the component. 
Second, the \emph{input function} $\phi$, that maps the result of the projection
operation to an internal letter.
The rationale is that many inputs trigger the same transitions, and hence
they can be mapped to the same internal letter. 
This particularly allows for decoupling the size of the core
semiautomaton from the cardinality of the input alphabet---in line with
the mechanisms of symbolic automata
\cite{maler2017generic,argyros2018symbolic}.

\section{Expressivity of Automata Cascades}

Cascades can be built out of any set of components. However, the 
theory by Krohn and Rhodes identifies a set of \emph{prime automata} that is a
sufficient set of components to build cascades, as it allows for capturing all
automata. 
They are, in a sense, the building blocks of automata.
Moreover, using only some prime automata, we obtain specialised
expressivity results.

\subsection{Prime Components}

\begin{figure}[t]
  \centering
  \vspace{0.1cm}
  \includegraphics[draft=false,width=0.47\textwidth]{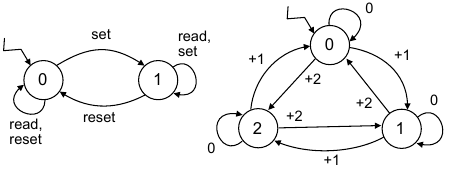}
  \caption{State diagrams of some of the simplest prime automata: a flip-flop on
    the left, and a $3$-counter on the right.}
  \label{figure:prime-automata}
\end{figure}

Prime automata are partitioned into two classes.
The \emph{first class of prime automata} are flip-flops, a kind of automaton that 
allows for storing one bit of information. 
\begin{definition}
A \emph{flip-flop} is a two-state automaton 
$\langle X^a, J, \Pi, \phi, Q, \delta, q_\mathrm{init}, \Gamma, \theta \rangle$
where $\Pi = \{ \mathrm{set}, \mathrm{reset}, \mathrm{read} \}$, 
$Q = \{ 0,1 \}$ and
the transition function satisfies the following three identities:
$$
\delta(q,\mathrm{read}) = q,\quad
\delta(q,\mathrm{set}) = 1,\quad
\delta(q,\mathrm{reset}) = 0.
$$
\end{definition}

We capture the task of our running example with a cascade where each
task is captured exactly by a flip-flop.

\begin{example}
  \label{ex:running-example-cascade-simpler}
  The sequence task of our running example is captured by the cascade 
  $$
  A_\mathrm{wood} 
  \cascade 
  A_\mathrm{iron} 
  \cascade 
  A_\mathrm{fire} 
  \cascade 
  A_\mathrm{steel} 
  \cascade 
  A_\mathrm{factory} 
  $$
  where each component is a flip-flop that outputs its current state.
  The diagram for the cascade is shown in  
  Figure~\ref{figure:cascade-for-example}, where $\mathrm{getWood}$ corresponds
  to $A_\mathrm{wood}$, and similarly for the other components.
  All components read the input, and only $A_\mathrm{factory}$ also reads the
  output of the other components.
  Thus, 
  the dependency set of
  $A_\mathrm{wood},$ $A_\mathrm{iron},$ $A_\mathrm{fire},$ and
  $A_\mathrm{steel}$ is the singleton~$\{ 1 \}$, and the dependency set of
  $A_\mathrm{factory}$ is $\{ 1,2, 3, 4, 5 \}$---note that
  the indices correspond to positions of the components in the cascade.
  Then, $A_\mathrm{wood}$ has input function $\phi_\mathrm{wood}(x)$ that
  returns $\mathrm{set}$ if 
  $x = \mathrm{wood}$, and returns $\mathrm{read}$ otherwise.
  Similarly, $A_\mathrm{iron},$ $A_\mathrm{fire},$ and $A_\mathrm{steel}$.
  Instead, the component $A_\mathrm{factory}$ has 
  input function
  $\phi_\mathrm{factory}(x,\mathit{wood},\mathit{iron},\mathit{fire},\mathit{steel})$
  that returns $\mathrm{set}$ if 
  \begin{align*}
    (x = \mathrm{factory}) \land [(\mathit{wood} \land \mathit{iron} \land
  \mathit{fire}) \lor \mathit{steel}] 
  \end{align*}
  and returns $\mathrm{read}$ otherwise.
\end{example}

The \emph{second class of prime automata} is a class of automata that have a
correspondence with simple groups from group theory.
Their general definition is beyond the scope of this paper. For that, see
\cite{ginzburg}. Here we present the
class of prime counters, as a subclass that seems particularly
relevant from a practical point of view.
\begin{definition}
An \emph{$n$-counter} is an $n$-state automaton 
$\langle X^a, J, \Pi, \phi, Q, \delta, q_\mathrm{init}, \Gamma, \theta \rangle$
where $\Pi = Q = [0,n-1]$, and 
the transition function satisfies the following identity:
$$
\delta(i,j) = i+j\, (\operatorname{mod} n).
$$
An $n$-counter is \emph{prime} if $n$ is a prime number.
\end{definition}
An $n$-counter implements a counter modulo $n$. 
The internal letters correspond to numbers that allow for reaching any value of
the counter in one step. In particular, they also allow for implementing
the functionality of decreasing the counter, e.g., adding $n-1$ amounts to
subtracting $1$.
Note also that the internal letter $0$ plays the same role as $\mathit{read}$ does in
a flip-flop.
When we are interested just in counting---i.e., increment by one---the modulo
stands for the overflow due to finite memory.
On the other hand, we might actually be interested in counting modulo $n$; for
instance, to capture periodic events such as `something happens every 24 hours'.
A $3$-counter, that is a prime counter, is depicted in
Figure~\ref{figure:prime-automata}.

\begin{example}
  \label{ex:counters}
  Resuming our running example, say that now, in order to use the factory, we
  need 
  \begin{romanenumerate*}
  \item
    $13$ pieces of wood, $5$ pieces of iron, and fire, or alternatively 
  \item 
    $7$ pieces of steel.
  \end{romanenumerate*}
  From the cascade in Example~\ref{ex:running-example-cascade-simpler}, it
  suffices to change the cascade components 
  $A_\mathrm{wood}, A_\mathrm{iron}, A_\mathrm{steel}$ into counters (e.g.,
  $16$-counters) and change the 
  input function of $A_\mathrm{factory}$ so that 
  $\phi_\mathrm{factory}(x,\mathit{wood},\mathit{iron},\mathit{fire},\mathit{steel})$
  returns $\mathrm{set}$ if 
  \begin{align*}
    & (x = \mathrm{factory}) 
    \land {}
    \\
    & \big[\big((\mathit{wood} \geq 13) \land (\mathit{iron}
    \geq 5) \land \mathit{fire}\big) \lor (\mathit{steel} \geq 7)\big],
  \end{align*}
  and it returns $\mathrm{read}$ otherwise.
  The rest is left unchanged.
  The cascade, despite its simplicity, corresponds to an automaton that has over
  $700$ states.
  Furthermore, suppose that we need to learn to detect $\mathrm{wood}$,
  $\mathrm{iron}$, $\mathrm{fire}$, $\mathrm{steel}$ from video frames
  represented as vectors of\/ $\mathbb{R}^a$.
  It suffices to replace the input function of the corresponding components with
  a function over\/ $\mathbb{R}^a$ such as a neural network.
\end{example}

\subsection{Expressivity Results}

A key aspect of automata cascades is their expressivity.
As an immediate consequence of the
Krohn-Rhodes theorem \cite{krohn1965rhodes}---see also 
\cite{ginzburg,maler1993thesis,domosi2005algebraic}---we have that simple
cascades of prime automata capture all automata, and simple cascades of
flip-flops capture the so-called \emph{group-free automata}.

%

\begin{restatable}{theorem}{thkrohnrhodes}
  \label{th:krohn-rhodes}
  Every automaton is captured by a simple cascade of prime automata.
  Furthermore, 
  every group-free automaton is captured by a 
  simple cascade of flip-flops.
  The converse of both claims holds as well.
\end{restatable}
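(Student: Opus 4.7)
The plan is to reduce both directions of both claims to the classical Krohn--Rhodes theorem and its aperiodic refinement (Schützenberger--McNaughton), bridging the gap between the classical wreath-product formulation and the cascade formalism introduced in this paper. The only real content to add beyond citing the classical results is the bookkeeping needed to realize a wreath decomposition as a \emph{simple} cascade in our sense, i.e.\ one where each non-final component outputs its state, uses a dependency set $J$, and filters its projected input through an input function $\phi$ into a finite internal alphabet.

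For the forward direction of the first claim, I would start from an arbitrary automaton $A$ with input alphabet $\Sigma$, apply the classical Krohn--Rhodes theorem to its core semiautomaton to obtain a wreath-product decomposition by prime components $D_1,\dots,D_d$, and then lift this to a cascade in the paper's sense. Each component $A_i$ is obtained from $D_i$ by setting $X^a = \Sigma \times Q_1 \times \cdots \times Q_{i-1}$ (which matches the product structure of the upstream outputs in a \emph{simple} cascade), taking $J = [1,i]$ so the component sees the external input and all preceding states, defining the internal alphabet $\Pi$ to be that of the prime $D_i$ (the three flip-flop symbols or the two counter symbols), and letting $\phi_i$ be the map implicit in the wreath product that converts a tuple of upstream states plus the external letter into the relevant internal symbol for $D_i$. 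For the output function I set $\theta_i(q,\sigma) = q$ on all but the last component (thus enforcing simplicity), and let $\theta_d$ reproduce the original output of $A$ using the information carried by the final state tuple. The semantic equivalence is then a direct unfolding of the definitions: the global semiautomaton induced by the cascade coincides with the wreath product, which by Krohn--Rhodes captures $A$.

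For the counter-free specialisation I would use the same construction but invoke the aperiodic version of Krohn--Rhodes: the transition monoid of a counter-free automaton is aperiodic (Schützenberger's theorem), and aperiodic monoids admit wreath decompositions using only flip-flops. The flip-flop has transition monoid $U_2$, which is aperiodic; the wreath product of aperiodic monoids is aperiodic; hence the cascade so obtained has an aperiodic (i.e.\ counter-free) transition monoid and the same language as $A$.

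The converse direction is the easier half. A cascade $A_1 \cascade \cdots \cascade A_d$ of prime automata \emph{is} an automaton by definition of the cascade semantics (the product state space $Q_1 \times \cdots \times Q_d$ together with the induced $\delta$ and $\theta$), so the first converse is immediate. For the second converse, I would argue that the transition monoid of a flip-flop is aperiodic, that cascading preserves aperiodicity of the global transition monoid (since the global monoid divides an iterated wreath product of the components' monoids), and conclude via Schützenberger again that the language is counter-free. The main obstacle I anticipate is not mathematical depth but notational friction: translating the classical wreath-product construction into the paper's $\langle X^a, J, \Pi, \phi, Q, \delta, q_\mathrm{init}, \Gamma, \theta\rangle$ signature cleanly enough that ``simple'' falls out of the construction rather than having to be enforced after the fact.
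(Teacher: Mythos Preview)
Your proposal is correct and takes essentially the same approach as the paper: invoke Krohn--Rhodes to decompose the core semiautomaton into primes, promote each prime semiautomaton to a cascade component that outputs its own state (yielding simplicity), and let the last component reproduce the original output function. The paper packages the classical results (Krohn--Rhodes plus the counter-free refinement) into a single cited intermediate theorem (attributed to Krohn--Rhodes and Ginzburg) rather than invoking Sch\"utzenberger and wreath-product aperiodicity directly, and it handles the flip-flop converse by appealing to that theorem's converse instead of redoing the monoid argument you sketch; but this is a packaging difference, not a different idea.
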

Group-free automata are important because they capture \emph{noncounting
automata}, c.f.\ \cite{ginzburg},
whose expressivity is the \emph{star-free regular languages}
\cite{schutzenberger1965finite}---i.e., the languages that can be specified by
regular expressions without using the Kleene star but using
complementation. This is the expressivity of well-known logics such as
\emph{monadic first-order logic on finite linearly-ordered domains}
\cite{mcnaughton1971counter}, and 
the \emph{linear temporal logic on finite traces} LTL$_f$ 
\cite{degiacomo2013ltlf}.
This observation, together with the fact that the expressivity of all acceptors
is the regular languages, allows us to derive the following theorem from the one
above.

%
%
%
%

\begin{theorem}
  The expressivity of simple acceptor cascades of prime automata is the regular
  languages.
  The expressivity of simple acceptor cascades of flip-flops is the star-free
  regular languages.
\end{theorem}

\begin{figure}[t]
  \centering
  \vspace{0.1cm}
  \includegraphics[draft=false,width=0.45\textwidth]{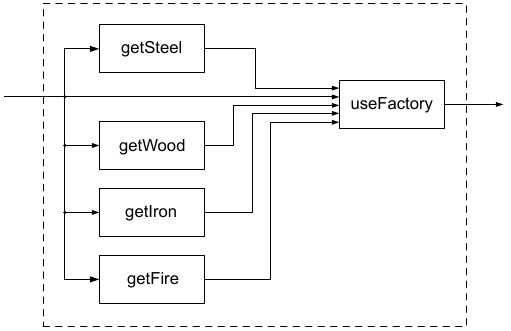}
  \caption{Diagram for the cascade of the running example.}
  \label{figure:cascade-for-example}
\end{figure}

\section{Sample Complexity of Automata Cascades}
\label{sec:sample-complexity}

We study the sample complexity of classes of automata (and cascades thereof)
built from given classes of input functions, semiautomata, and output functions. 
This allows for a fine-grained specification of automata and cascades. 
\begin{definition}
  The class $\mathcal{A}(\Phi,\Delta,\Theta)$ over a given input alphabet $X^a$ 
  consists of each automaton with input function from a given class $\Phi$, 
  core semiautomaton from a given class $\Delta$, 
  and output function from a give class $\Theta$.
  The classes $\Phi$, $\Delta$, $\Theta$ are uniform, and hence
  all automata in $\mathcal{A}(\Phi,\Delta,\Theta)$ have the 
  same degree of dependency $m$,
  same internal alphabet $\Pi$, 
  and 
  same output alphabet $\Gamma$;
  hence, sometimes we write $\mathcal{A}(\Phi,\Delta,\Theta; a, m)$ and
  $\mathcal{A}(\Phi,\Delta,\Theta; a, m, \Pi, \Gamma)$
  to have such quantities at hand.
\end{definition}

\subsection{Results for Finite Classes of Cascades}
\label{sec:sample-complexity-finite}

The following two theorems
establish upper bounds on the cardinality and sample complexity of finite
classes of automata, and finite classes of cascades, respectively.
The results are obtained by counting the number of ways in which we can
instantiate the parts of a single automaton. 
\begin{restatable}{theorem}{thfinitecomplexitygautomaton}
  \label{th:finite-complexity-gautomaton}
  The cardinality of a class of automata
  $\mathcal{A} = \mathcal{A}(\Phi,\Delta,\Theta; a, m)$ 
  is bounded as
  $$
  \left|\mathcal{A}\right| 
  \leq 
  \left|\pi_m^a\right| 
  \cdot 
  \left|\Phi\right| 
  \cdot 
  \left|\Delta\right| \cdot \left|\Theta\right|,
  $$
  and its sample complexity is asymptotically bounded as
  $$
  S(\mathcal{A}) \in 
    O\big(
        \log \left|\pi_m^a\right| 
        + 
        \log \left|\Phi\right| 
        + 
        \log \left|\Delta\right| 
        + 
        \log \left|\Theta\right|
        \big).
  $$
\end{restatable}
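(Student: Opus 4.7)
The plan is to bound $|\mathcal{A}|$ by counting the independent choices that specify an automaton in the class, and then to invoke the standard bound $S(\mathcal{F}) \in O(\log |\mathcal{F}|)$ for finite hypothesis classes stated in the preliminaries to get the sample complexity bound.

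First I would unfold the definition of an automaton in $\mathcal{A}(\Phi,\Delta,\Theta;a,m)$. Such an automaton is a tuple $\langle X^a, J, \Pi, \phi, Q, \delta, q_\mathrm{init}, \Gamma, \theta \rangle$ where $X^a$, $\Pi$, $\Gamma$, and the degree of dependency $m$ are fixed by the class. The remaining degrees of freedom are: the dependency set $J \subseteq [1,a]$ of cardinality $m$, which determines the projection $\pi_J \in \pi_m^a$; the input function $\phi \in \Phi$; the tuple $\langle Q,\delta,q_\mathrm{init}\rangle$ encoding a core semiautomaton, which by the definition of $\Delta$ can be chosen from $\Delta$; and the output function $\theta \in \Theta$. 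Since these choices are independent, the number of distinct automata is at most the product of the sizes, giving $|\mathcal{A}| \leq |\pi_m^a|\cdot|\Phi|\cdot|\Delta|\cdot|\Theta|$. The inequality (rather than equality) accounts for the possibility that different parameter tuples yield the same induced automaton, e.g.\ when the chosen $J$ is irrelevant because $\phi$ ignores some coordinates.

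From the cardinality bound, the sample complexity bound is a direct consequence of the finite-class bound $S(\mathcal{F},\epsilon,\eta)\in O((\log|\mathcal{F}|-\log\eta)/\epsilon^2)$ recalled earlier: with $\epsilon$ and $\eta$ fixed, one has $S(\mathcal{A})\in O(\log|\mathcal{A}|)$, and taking logarithms of the product bound turns it into a sum, yielding the claimed
\[
S(\mathcal{A})\in O\bigl(\log|\pi_m^a|+\log|\Phi|+\log|\Delta|+\log|\Theta|\bigr).
\]

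There is no real obstacle to overcome: the main ``work'' is simply to verify that, once $\Phi$, $\Delta$, $\Theta$, $a$, $m$, $\Pi$, $\Gamma$ and the ambient input alphabet $X^a$ are given, an automaton is determined up to its induced string function by the four listed choices, so that the product bound is genuine and not an overcount that misses some hidden parameter. The only subtle point to double-check is the role of $\pi_m^a$: the class $\pi_m^a$ consists of all projections onto $m$-element subsets of $[1,a]$, so selecting the dependency set $J$ is equivalent to selecting a member of $\pi_m^a$, which is why that factor appears instead of $\binom{a}{m}$ spelled out explicitly. Once this correspondence is acknowledged, the proof is complete.
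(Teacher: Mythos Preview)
Your proposal is correct and follows essentially the same argument as the paper: count the independent choices (projection $J$, input function $\phi$, semiautomaton from $\Delta$, output function $\theta$), take the product for the cardinality bound, and then apply the finite-class sample-complexity bound $S(\mathcal{F})\in O(\log|\mathcal{F}|)$ to obtain the sum of logarithms.
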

\begin{restatable}{theorem}{thfinitecomplexitycascade}
  \label{th:finite-complexity-cascade}
  The cardinality of a class of automata cascades
  $\mathcal{C} = 
  \mathcal{A}_1 \cascade \cdots \cascade \mathcal{A}_d$
  where the automata classes are 
  $\mathcal{A}_i =
  \mathcal{A}(\Phi_i,\Delta_i,\Theta_i;a_i,m_i)$
  is bounded as 
  $$
    \left|\mathcal{C}\right| 
    \leq 
    \prod_{i=1}^d 
    \left|\pi_{m_i}^{a_i}\right| 
    \cdot 
    \left|\Phi_i\right| 
    \cdot 
    \left|\Delta_i\right| 
    \cdot 
    \left|\Theta_i\right|,
  $$
  and its sample complexity is asymptotically bounded as
  $$
  S(\mathcal{C}) \in 
    O\Big(
        d \cdot (\log \left|\pi_m^a\right|
                 \,+\, 
                 \log \left|\Phi\right| 
                 \,+\, 
                 \log \left|\Delta\right| 
                 \,+\, 
                 \log \left|\Theta\right|) 
    \Big),
  $$
  where dropping the indices denotes the maximum.
\end{restatable}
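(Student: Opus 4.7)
The plan is to reduce the statement to the single-automaton bound (Theorem~\ref{th:finite-complexity-gautomaton}) together with the generic $O(\log|\mathcal{F}|)$ bound on the sample complexity of finite classes stated in the preliminaries.

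First I would unpack what a cascade in $\mathcal{C}$ is: by the definition in Section~\ref{sec:cascades}, it is a sequence $A_1 \cascade \cdots \cascade A_d$ where each $A_i$ is chosen from $\mathcal{A}_i$, and the input alphabet of $A_i$ is fixed by the external alphabet together with the (uniformly specified) output alphabets of $\mathcal{A}_1, \dots, \mathcal{A}_{i-1}$. Since a cascade is fully determined by the tuple $\langle A_1, \dots, A_d \rangle$, the class $\mathcal{C}$ injects into the Cartesian product $\mathcal{A}_1 \times \cdots \times \mathcal{A}_d$, giving
\[
|\mathcal{C}| \,\leq\, \prod_{i=1}^d |\mathcal{A}_i|.
\]
Applying Theorem~\ref{th:finite-complexity-gautomaton} to each factor yields
\[
|\mathcal{C}| \,\leq\, \prod_{i=1}^d |\pi_{m_i}^{a_i}| \cdot |\Phi_i| \cdot |\Delta_i| \cdot |\Theta_i|,
\]
which is the cardinality bound.

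For the sample complexity, I would invoke the standard bound $S(\mathcal{F}) \in O(\log|\mathcal{F}|)$ from the preliminaries. Taking logarithms of the product above gives
\[
\log |\mathcal{C}| \,\leq\, \sum_{i=1}^d \bigl(\log|\pi_{m_i}^{a_i}| + \log|\Phi_i| + \log|\Delta_i| + \log|\Theta_i|\bigr).
\]
Bounding each of the $d$ summands by the maximum over $i$, and adopting the index-free notation announced in the statement (so that $|\pi_m^a|$, $|\Phi|$, $|\Delta|$, $|\Theta|$ stand for the respective maxima), I obtain
\[
\log|\mathcal{C}| \,\leq\, d \cdot \bigl(\log|\pi_m^a| + \log|\Phi| + \log|\Delta| + \log|\Theta|\bigr),
\]
and the sample complexity bound follows by combining this with $S(\mathcal{C}) \in O(\log|\mathcal{C}|)$.

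There is no real obstacle here; the only point that warrants care is the first inequality, namely checking that the cascade is genuinely determined by the tuple of its components. This is immediate from the semantics: the transition and output functions of the induced monolithic automaton in the definition of cascade are expressed componentwise in terms of the $\delta_i$ and $\theta_i$, so distinct cascades in $\mathcal{C}$ correspond to distinct tuples in $\mathcal{A}_1 \times \cdots \times \mathcal{A}_d$ (whereas the reverse direction need not hold, which is why we have $\leq$ rather than equality). Everything else is arithmetic with logarithms and the finite-class sample-complexity lemma.
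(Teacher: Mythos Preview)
Your proposal is correct and follows essentially the same approach as the paper: bound $|\mathcal{C}|$ by $\prod_i |\mathcal{A}_i|$, apply Theorem~\ref{th:finite-complexity-gautomaton} to each factor, then take logarithms and replace each indexed quantity by its maximum to obtain the sample-complexity bound. The paper's own proof is slightly terser about the injection step (it simply asserts $|\mathcal{C}| \leq \prod_i |\mathcal{A}_i|$), but the structure is identical.
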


Consequently, the complexity of a cascade is bounded by the product of the
number $d$ of components and a second factor that bounds the complexity of a
single component.

\subsection{Aspects and Implications of The Results}

The term $\log |\pi_m^a|$, accounting for the projection functions, ranges from
$0$ to $\min(m,a_1+d-m) \cdot \log (a_1+d)$ where $a_1$ is the input arity of
the first component, and hence of the external input.
In particular, it is minimum when we allow each component to depend on all or
none of its preceding components, and it is maximum when each component has to
choose half of its preceding components as its dependencies. 

The term $\log |\Phi|$ plays an important role, since the set of input functions
has to be sufficiently rich so as to map the external input and the outputs of
the preceding components to the internal input.
First, its cardinality can be controlled by the degree of dependency $m$; for
instance, taking all Boolean functions yields $\log |\Phi| = 2^m$. 
Notably, it does not depend on $d$, and
hence, the overall sample complexity grows linearly with the number of cascade
components as long as the degree of dependency $m$ is bounded.
Second, the class of input functions can be tailored towards the application at
hand. For instance, in our running example, input functions are chosen from a
class $\Phi$ for which $\log |\Phi|$ is linear in the number of tasks---see
Example~\ref{ex:running-example-complexity-finite} below.

The term $\log |\Delta|$ is the contribution of the \emph{number of
semiautomata}---note that the number of letters and states of each semiautomaton
plays no role.
Interestingly, very small classes of semiautomata suffice to build very
expressive cascades, by the results in Section~\ref{sec:cascades}.
In general, it is sufficient to build $\Delta$ out of prime components. 
But we can also include other semiautomata implementing some interesting
functionality.

The term $\log |\Theta|$ admits similar considerations as $\log |\Phi|$.
It is worth noting that one can focus on simple cascades,
by the results in Section~\ref{sec:cascades}, where output functions of all
but last component are fixed. 
Then, the contribution to the sample complexity is given by the class of output
functions of the last component. 

Overall, the sample complexity depends linearly on the number $d$ of components,
if we ignore logarithmic factors.
Specifically, the term $\log |\pi^a_m|$ has only a logarithmic dependency
on $d$, and the other terms $\log |\Phi|,$  
$\log |\Delta|,$  and $\log |\Theta|$ are independent of $d$. 

\begin{corollary}
  Let us recall the quantities from Theorem~\ref{th:finite-complexity-cascade},
  and fix the quantities $a$, $m$, $\Phi$, $\Delta$, and $\Theta$.
  Then, the sample complexity of $\mathcal{C}$ is bounded as
  $S(\mathcal{C}) \in O(d \cdot \log d)$.
\end{corollary}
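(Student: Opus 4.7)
The plan is to apply Theorem~\ref{th:finite-complexity-cascade} directly, and then simplify the four terms inside the parentheses using the hypothesis that $a$, $m$, $\Phi$, $\Delta$, $\Theta$ are fixed while $d$ is the only quantity growing. Three of the four terms are immediately constants: since $\Phi$, $\Delta$, $\Theta$ do not depend on $d$, the quantities $\log |\Phi|$, $\log |\Delta|$, $\log |\Theta|$ are $O(1)$. So the whole analysis reduces to bounding $\log |\pi_m^a|$ as a function of $d$.

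The key observation is that, although the external input arity is fixed at $a$, the input arity of the $i$-th cascade component is $a + (i-1)$, because that component reads the external input together with the output of each of the $i-1$ preceding components. Hence the relevant projection class for any single component lives over at most $[1, a+d-1]$ coordinates. I would bound the size of that projection class by counting $m$-subsets:
\[
|\pi_m^a| \;\leq\; \binom{a+d-1}{m} \;\leq\; (a+d-1)^m,
\]
so that $\log|\pi_m^a| \leq m \log(a+d-1)$. Since $a$ and $m$ are fixed, this is $O(\log d)$.

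Plugging these estimates into the bound of Theorem~\ref{th:finite-complexity-cascade},
\[
S(\mathcal{C}) \in O\bigl(d \cdot (O(\log d) + O(1) + O(1) + O(1))\bigr) \;=\; O(d \log d),
\]
which is the claim. There is no real obstacle here: the corollary is a rewriting of the theorem with asymptotic bookkeeping, where the only nontrivial step is observing that fixing the \emph{external} arity $a$ still allows the per-component arity to grow linearly with $d$, contributing the lone $\log d$ factor.
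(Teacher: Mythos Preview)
Your proposal is correct and matches the paper's own reasoning: the paper does not give a separate proof of this corollary but states it immediately after the discussion noting that $\log|\Phi|$, $\log|\Delta|$, $\log|\Theta|$ are independent of $d$ while $\log|\pi_m^a|$ is at most $\min(m,a_1+d-m)\cdot\log(a_1+d)$, which for fixed $m$ and external arity is $O(\log d)$. Your explicit bound $|\pi_m^a|\le\binom{a+d-1}{m}\le(a+d-1)^m$ is exactly the computation underlying that sentence.
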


In the next example we instantiate a class of cascades for our
running example, and derive its sample complexity.

\begin{example}
  %
  \label{ex:running-example-complexity-finite}
  The cascade described 
  in Example~\ref{ex:running-example-cascade-simpler} has one component per
  task, and all components have the same output function and
  semiautomaton.
  The input function $\phi_\mathrm{factory}$ is $2$-term monotone DNF over $9$
  propositional variables. Every other component has an input function that is 
  $1$-term monotone DNF over $5$ propositional variables.
  Using these observations, we can design a class of cascades for similar
  sequence tasks where the goal task depends on two groups of arbitrary size,
  having $d$ basic tasks overall.
  Such a class will consist of cascades of $d$ components.
  For every $i \in [1,d-1]$,
  the class of input functions $\Phi_i$ is the class of $1$-term monotone DNF
  over $d$ variables;
  and $\Phi_d$ is the class of $2$-term monotone DNF over $2d-1$ variables.
  For all $i \in [1,d]$, 
  $\Delta_i$ is a singleton consisting of a flip-flop semiautomaton;
  and $\Theta_i$ is a singleton consisting of the function that returns the
  state.
  The cardinality of $\Phi_i$ is $e^2 \cdot 2^{(4d-4)}$, and hence
  its logarithm is less than $4 d$---see 
  Corollary~5\ of \cite{schmitt2004sparse}.
  By Theorem~\ref{th:finite-complexity-cascade}, 
  considering that we have $d$ cascade components, 
  we obtain that our class of cascades
  has sample complexity $O(d^2)$.
  At the same time, the minimum automaton for the considered family of sequence
  tasks has $\Omega(2^d)$ states. 
  Going back to the bound of Theorem~\ref{th:ishigami-tani},
  if we had to learn from the class of all automata with $2^d$ states, we would
  incur a sample complexity exponential in the number of tasks.
\end{example}

\subsection{Learning Theory for Infinite Classes}
\label{sec:advanced-learning-theory}
Infinite classes of cascades naturally arise when the input alphabet is 
infinite. In this case, one may consider to pick input and output functions
from an infinite class. For instance, the class of all threshold functions over 
the domain of integers, or as in Example~\ref{ex:counters} the neural
networks over the vectors of real numbers. 

When considering infinite classes of functions, the sample complexity bounds
based on cardinality become trivial. 
However, even when the class is infinite, the number of functions from the class
that can be distinguished on a given sample is finite, and the way it grows as a
function of the sample size allows for establishing sample complexity bounds.
In turn, such \emph{growth} can be bounded in terms of the \emph{dimension} of
the class of functions, a single-number characterisation of its complexity.
Next we present these notions formally. 

\paragraph{Growth and Dimension.}
Let $\mathcal{F}$ be a class of functions from a set $X$ to a finite
set $Y$.
Let $X_\ell = x_1, \dots, x_\ell$ be a sequence of $\ell$ elements from $X$.
The set of \emph{patterns} of a class $\mathcal{F}$ on $X_\ell$ is 
    $$\mathcal{F}(X_\ell) = \{ \langle f(x_1), \dots, f(x_\ell) \rangle \mid f \in
    \mathcal{F} \},$$
and the \emph{number of distinct patterns} of class $\mathcal{F}$ on $X_\ell$ 
is 
$N(\mathcal{F},X_\ell) = |\mathcal{F}(X_\ell)|$.
The \emph{growth} of $\mathcal{F}$ is
$$
G(\mathcal{F},\ell) = \operatorname{sup}_{X_\ell} N(\mathcal{F}, X_\ell).
$$
The growth of $\mathcal{F}$ can be bounded using its \emph{dimension},
written as $\dimension(\mathcal{F})$. 
When $|Y| = 2$, we define the dimension of $\mathcal{F}$ to be its 
\emph{VC dimension} \cite{vapnik1971uniform}---see also
\cite{vapnik1998statistical}. It is the largest integer
$h$ such that $G(\mathcal{F},h) = 2^{h}$ and $G(\mathcal{F},h+1) < 2^{h+1}$ if
such an $h$ exists, and infinity otherwise.
When $|Y| > 2$, we define the dimension of $\mathcal{F}$ to be its \emph{graph
dimension} \cite{natarajan1989learning,haussler1995generalization}. 
It is defined by first binarising the class of functions. 
For a given function $f: X \to Y$, its binarisation 
$f_\mathrm{bin}: X \times Y \to \{ 0,1 \}$ is defined as 
$f_\mathrm{bin}(x,y) = \mathbf{1}[f(x) = y]$.
The \emph{binarisation} of $\mathcal{F}$ is 
$\mathcal{F}_\mathrm{bin} = \{ f_\mathrm{bin} \mid f \in \mathcal{F} \}$.
Then, the graph dimension of $\mathcal{F}$ is defined as the VC dimension
of its binarisation $\mathcal{F}_\mathrm{bin}$.

The growth of $\mathcal{F}$ can be bounded in terms of
its dimension \cite{haussler1995generalization}, as follows:
$$G(\mathcal{F},\ell) \leq 
\left(e \cdot \ell \cdot |Y| \right)^{\dimension(\mathcal{F})}.$$

\paragraph{Sample Complexity.}
The sample complexity can be bounded in terms of the dimension, cf.\ 
\cite{shalevshwartz2014book}. In particular,
\begin{equation*}
S(\mathcal{F},\epsilon,\eta) \in 
O\left((\dimension(\mathcal{F}) \cdot \log|Y| - \log \eta)/\epsilon^2\right).
\end{equation*}
For fixed $\epsilon$, $\eta$, and $Y$,
the sample complexity is $O(\dimension(\mathcal{F}))$, and hence
arbitrary classes over the same outputs can be compared in terms of their 
dimension.

\subsection{Results for Infinite Classes of Cascades}
\label{sec:sample-complexity-infinite}

We generalise our sample complexity bounds to infinite classes of automata and
cascades.
The bound have the same shape of the bounds derived in
Section~\ref{sec:sample-complexity-finite} for finite classes, with the 
dimension replacing the logarithm of the cardinality.
One notable difference is the logarithmic dependency on the maximum length $M$
of a string. It occurs due to the (stateless) input and output functions. 
Their growth is on single letters, regardless of the way they are grouped into
strings. Thus, the growth on a sample of $\ell$ strings is the growth on a sample
of $\ell \cdot M$ letters.

The bounds are derived using a functional description of automata and cascades.
It is a shift of perspective, from stateful machines that process one letter at
a time, to black-boxes that process an entire input string at once.
We first introduce function constructors to help us to build such descriptions.

\begin{definition}[Function constructors]
  \label{def:function-constructors}
  For $f: \Sigma \to \Gamma$,
  $f^*: \Sigma^* \to \Gamma$ is defined as 
  $f^*(\sigma_1 \dots \sigma_n) = f(\sigma_n)$.
  For $g: \Sigma^* \to \Gamma$,
  $\overline{g}: \Sigma^* \to \Gamma^*$ is defined as 
  $\overline{g}(\sigma_1 \dots \sigma_n) = g(\sigma_1) \dots g(\sigma_1 \dots
  \sigma_n)$;
  furthermore,
  $g^\triangleleft: \Sigma^* \to \Gamma^*$ is defined as 
  $g^\triangleleft(\sigma_1 \dots \sigma_n) = g(\sigma_1 \dots \sigma_{n-1})$.
\end{definition}

\begin{restatable}{lemma}{lemmafunctioncompositiongeneralisedautomaton}
  \label{lemma:function-composition-generalised-automaton}
  The function $A$ implemented by an automaton 
  $\langle X^a, J, \Pi, \phi, Q, \delta, q_\mathrm{init}, \Gamma, \theta
  \rangle$
  can be expressed as
  \begin{equation*}
    A = 
    \overline{\pi^*_J} \circ ((\overline{\phi^*} \circ D^\popfunc)
    \times I^*) \circ \theta,
  \end{equation*}
  where $D$ is the function implemented by the semiautomaton
  $\langle \Pi, Q, \delta, q_\mathrm{init} \rangle$.
\end{restatable}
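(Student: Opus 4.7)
The plan is to verify the claimed identity by tracing an arbitrary non-empty input string $s = \sigma_1 \cdots \sigma_n \in (X^a)^*$ through both sides and checking that they produce the same letter in $\Gamma$. The statement is essentially a bookkeeping identity, so the main work is carefully unpacking each function constructor from Definition~\ref{def:function-constructors}, respecting the paper's convention that $(f \circ g)(x) = g(f(x))$.

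First, I would unwind the right-hand side step by step. Applying $\overline{\pi_J^*}$ to $s$ yields the letter-wise projection $u = \pi_J(\sigma_1) \cdots \pi_J(\sigma_n) \in (X^m)^*$, because $\pi_J^*$ returns the projection of the last letter of its argument and the bar records that value on every prefix. The right branch $I^*$ of the cross product extracts the last letter $\pi_J(\sigma_n) \in X^m$. The left branch first applies $\overline{\phi^*}$, producing $\phi(\pi_J(\sigma_1)) \cdots \phi(\pi_J(\sigma_n)) \in \Pi^*$, and then $D^\popfunc$ drops the final letter and applies $D$, yielding the state $D(\phi(\pi_J(\sigma_1)) \cdots \phi(\pi_J(\sigma_{n-1}))) \in Q$. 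Feeding this pair to $\theta$ gives
\[
\theta\bigl(D(\phi(\pi_J(\sigma_1)) \cdots \phi(\pi_J(\sigma_{n-1}))),\; \pi_J(\sigma_n)\bigr).
\]

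Second, I would expand the left-hand side $A(s)$ through the induced automaton $A' = \langle X^a, Q, \delta_{J,\phi}, q_\mathrm{init}, \Gamma, \theta_J \rangle$: by the execution rule for automata, $A(s) = \theta_J(D_{A'}(\sigma_1 \cdots \sigma_{n-1}), \sigma_n) = \theta(D_{A'}(\sigma_1 \cdots \sigma_{n-1}), \pi_J(\sigma_n))$. What remains is to show
\[
D_{A'}(\sigma_1 \cdots \sigma_{n-1}) \;=\; D(\phi(\pi_J(\sigma_1)) \cdots \phi(\pi_J(\sigma_{n-1}))),
\]
which I would prove by a direct induction on the prefix length $k$. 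The base case $k=0$ gives $q_\mathrm{init}$ on both sides, and the inductive step uses the defining identity $\delta_{J,\phi}(q,\sigma) = \delta(q,\phi(\pi_J(\sigma)))$, so that one transition of $A'$ on $\sigma_{k+1}$ coincides with one transition of $D$ on $\phi(\pi_J(\sigma_{k+1}))$. Combining this with the computation above closes the proof.

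The only real hazard is the one-letter offset introduced by $\popfunc$: $\theta$ in the original automaton reads the state reached after consuming $n-1$ letters together with the projection of the $n$-th letter, and the $\popfunc$ on $D$ is precisely what encodes this asymmetry in the functional form. Once this alignment is made explicit, the rest is mechanical definition-chasing.
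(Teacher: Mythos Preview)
Your proposal is correct and is essentially the same definition-chasing argument as the paper's own proof. The only cosmetic difference is that the paper rewrites $A(\sigma_1\dots\sigma_m)$ forward into the functional form via a chain of equalities (invoking the left-distributivity of composition over cross product at the final step), whereas you evaluate the two sides separately on an arbitrary string and match them via the inductive identity for $D_{A'}$; the substance is identical.
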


Note that $I^*$ propagates the projected input to the output function.
Also note that the output function reads the state before the last update.
The functional description allows us to bound the growth, by making use of the
fact that the growth of composition and cross product of two classes of
functions is upper bounded by the product of their respective growths.
From there, we derive the dimension, and the sample complexity.
\begin{theorem}
  \label{theorem:vc-generalised-automata}
  Let $\mathcal{A}$ be a class 
  $\mathcal{A}(\Phi,\Delta,\Theta;a,m,\Pi,\Gamma)$, 
  let $M$ be the maximum length of a string, and
  let $w = \log\left|\pi_m^a\right| + \log\left|\Delta\right| +
  \dimension(\Phi) + \dimension(\Theta) \geq 2$.
  \smallskip\par\noindent
  (i) The growth of $\mathcal{A}$ is bounded as:
  $$
    G(\mathcal{A},\ell) 
    \leq 
    \left|\pi^a_m\right|
    \cdot 
    \left|\Delta \right|
    \cdot 
    G(\Phi,\ell \cdot M) 
    \cdot 
    G(\Theta,\ell).
  $$
  (ii) The dimension of $\mathcal{A}$ is bounded as:
  $$
    \dimension(\mathcal{A}) 
    \leq 2 \cdot w \cdot \log(w \cdot e \cdot M \cdot |\Pi| \cdot |\Gamma|).
  $$
  (iii) The sample complexity of $\mathcal{A}$ is
  bounded as:
  $$
    S(\mathcal{A}) \in 
    O\big(w \cdot \log(w \cdot M \cdot |\Pi| \cdot |\Gamma|)\big).
  $$
\end{theorem}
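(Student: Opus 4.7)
The plan is to exploit the functional decomposition
$A = \overline{\pi^*_J} \circ ((\overline{\phi^*} \circ D^\popfunc) \times I^*) \circ \theta$
from Lemma~\ref{lemma:function-composition-generalised-automaton}, combined with the elementary fact that the growth function of a composition or cross product of uniform classes is submultiplicative. For part~(i), I would fix an arbitrary sample of $\ell$ strings of length at most $M$ and account for each factor in turn: the outer projection $\overline{\pi_J^*}$ contributes at most $|\pi_m^a|$ patterns; the pointwise application $\overline{\phi^*}$ evaluates $\phi$ at the letters of the $\ell$ projected strings, whose total count is at most $\ell \cdot M$, giving at most $G(\Phi, \ell \cdot M)$ patterns; the semiautomaton shift $D^\popfunc$ contributes at most $|\Delta|$ distinct state tuples once the $\Pi$-valued images are fixed; and finally $\theta$ is applied to exactly $\ell$ (state, projected-letter) pairs, contributing at most $G(\Theta,\ell)$. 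Multiplying these four factors gives the claimed growth bound.

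For part~(ii), set $h = \dimension(\mathcal{A})$. The standard binarisation argument gives $2^h \leq G(\mathcal{A},h)$: for any sample $(x_i,y_i)_{i=1}^h$ shattered by $\mathcal{A}_{\mathrm{bin}}$, two functions producing different binarised patterns must already differ on some $x_i$, so there are at least $2^h$ distinct patterns of $\mathcal{A}$ on $(x_i)_i$. Plugging in part~(i) together with the generic bound $G(\mathcal{F},\ell) \leq (e \ell |Y|)^{\dimension(\mathcal{F})}$ applied to $\Phi$ and $\Theta$, then taking logarithms, and using $w \geq 2$ to absorb the additive terms $\log|\pi_m^a|$ and $\log|\Delta|$ into a common coefficient in front of $\log(e h M |\Pi| |\Gamma|)$, I would arrive at the implicit inequality $h \leq w \cdot \log(e h M |\Pi| |\Gamma|)$. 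Solving it via the standard substitution lemma (if $h \leq a \log(b h)$ with $a \geq 1$ then $h \leq 2 a \log(a b)$) yields the stated bound.

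Part~(iii) then follows by plugging the dimension bound into the generic sample-complexity bound $S(\mathcal{F},\epsilon,\eta) \in O((\dimension(\mathcal{F}) \cdot \log|Y|)/\epsilon^2)$ recalled in Section~\ref{sec:advanced-learning-theory}, fixing $\epsilon$ and $\eta$, and noting that $\log|\Gamma|$ is dominated by the logarithmic factor already present. The main obstacle I expect is step~(ii): unifying the heterogeneous contributions (two cardinalities and two dimensions, each scaled by a different logarithm) into a single clean $w \log(\cdots)$ bound requires a careful choice of the aggregate complexity $w$, and it is precisely the hypothesis $w \geq 2$ that licenses folding the additive cardinality terms into the multiplicative coefficient before inverting the implicit inequality. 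The submultiplicative growth arithmetic of step~(i) and the plug-in of step~(iii) are routine once this analytic obstacle is cleared.
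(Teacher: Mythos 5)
Your proposal is correct and follows essentially the same route as the paper: part (i) via the functional decomposition of Lemma~\ref{lemma:function-composition-generalised-automaton} and submultiplicativity of growth under composition and cross product, part (ii) via $G(\mathcal{A}_{\mathrm{bin}},\ell)\leq G(\mathcal{A},\ell)$ and Haussler's bound $G(\mathcal{F},\ell)\leq(e\ell|Y|)^{\dimension(\mathcal{F})}$, and part (iii) by plugging the dimension into the generic sample-complexity bound. The only cosmetic difference is in finishing (ii): you invert the implicit inequality $h\leq w\log(ehM|\Pi||\Gamma|)$ with a generic substitution lemma, whereas the paper directly verifies that the candidate $\ell=2w\log(weM|\Pi||\Gamma|)$ satisfies $G(\mathcal{A},\ell)<2^{\ell}$ by a chain of sufficient conditions; these are equivalent.
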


Next we show the functional description of an automata cascade.
It captures the high-level idea that each component of a cascade can be
fully executed before executing any of the subsequent automata, since it does
not depend on them.
\begin{restatable}{lemma}{lemmafunctioncompoisitionofcascade}
  \label{lemma:function-compoisition-of-cascade}
  The function implemented by an automata cascade 
  $A_1 \cascade \cdots \cascade A_d$ with $d \geq 2$ can be expressed as
  \begin{equation*}
    \overline{(I^* \times A_1)} \circ \dots \circ
    \overline{(I^* \times A_{d-1})} \circ A_d.
  \end{equation*}
\end{restatable}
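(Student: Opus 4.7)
The plan is to induct on the number $d$ of components, reading the claim directly off the cascade's recursive input rule $\sigma_{i+1} = \langle \sigma_i, \theta_i(q_i, \sigma_i) \rangle$. For the base case $d = 2$, I would apply both sides to a string $w = \sigma^{(1)} \ldots \sigma^{(n)}$ and unfold the definitions. Under the cascade semantics, at time step $t$ the second component reads the pair $\langle \sigma^{(t)}, \theta_1(q_1^{(t)}, \sigma^{(t)}) \rangle$, whose second coordinate equals $A_1(\sigma^{(1)} \ldots \sigma^{(t)})$ by the definition of the function implemented by $A_1$. Hence the string that $A_2$ processes is the sequence of pairs $\langle \sigma^{(t)}, A_1(\sigma^{(1)} \ldots \sigma^{(t)}) \rangle$ for $t = 1, \ldots, n$, which by Definition~\ref{def:function-constructors} is exactly $\overline{(I^* \times A_1)}(w)$.

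For the inductive step, I would peel off the first component. Let $B = A_2 \cascade \cdots \cascade A_d$ be the subcascade viewed as a cascade whose external alphabet is $\Sigma_1 \times \Sigma_2$, so that its first component $A_2$ reads a pair as external input. By the cascade semantics, every component $A_i$ with $i \geq 2$ inside the full cascade on $w$ sees at each time step the same tuple---external letter together with the current outputs of $A_1, \ldots, A_{i-1}$---as it would see inside $B$ on input $\overline{(I^* \times A_1)}(w)$. Consequently $(A_1 \cascade \cdots \cascade A_d)(w) = B(\overline{(I^* \times A_1)}(w))$, and applying the inductive hypothesis to the $(d-1)$-component cascade $B$ rewrites $B$ as $\overline{(I^* \times A_2)} \circ \cdots \circ \overline{(I^* \times A_{d-1})} \circ A_d$, closing the induction.

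The hard part will be the bookkeeping of nested tuples: $\overline{(I^* \times A_i)}$ formally produces a string of pairs whose first coordinate is the current letter and whose second coordinate is $A_i$'s output, whereas the cascade definition declares the input alphabet of $A_{i+1}$ to be the flat product $\Sigma_1 \times \cdots \times \Sigma_{i+1}$. These coincide only up to the canonical associativity isomorphism of the cross product, which has to be invoked (implicitly) at each composition so that the formula typechecks and matches, level by level, the recursive rule for $\sigma_{i+1}$ that defines what each component of the cascade reads.
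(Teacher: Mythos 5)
Your proposal is correct and follows essentially the same route as the paper: an induction on $d$ whose engine is the observation that the string fed to the downstream components is exactly $\overline{(I^* \times A_1)}(w)$, with the nested-versus-flat tuple issue treated (as in the paper's own definition of $\sigma_{i+1}$) up to the canonical associativity identification. The only cosmetic difference is that you peel off the first component and apply the hypothesis to the tail subcascade, whereas the paper peels off the last component and appends $A_d$ to the $(d-1)$-component cascade; both hinge on the same functional characterisation of the intermediate input stream.
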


The cross product with $I^*$ in the functional description above captures
the fact that the input of each component is propagated to the next one.

Then, a bound on the growth is derived from the functional description, and
hence the dimension and sample complexity bounds.
\begin{theorem}
  \label{theorem:v-dimension-cascades}
  Let $\mathcal{C}$ be a class 
  $\mathcal{A}_1 \cascade \dots \cascade \mathcal{A}_d$ 
  where automata classes are
  $
  \mathcal{A}_i = 
  \mathcal{A}(\Phi_i,\Delta_i,\Theta_i;a_i,m_i,\Pi_i,\Gamma_i)
  $,
  let $M$ be the maximum length of a string, and 
  let $w = \log\left|\pi_m^a\right| + \log\left|\Delta\right| +
  \dimension(\Phi) + \dimension(\Theta) \geq 2$ where dropping the indices
  denotes the maximum.
  \smallskip\par\noindent
  (i) The growth of\/ $\mathcal{C}$ is bounded as:
  $$
    G(\mathcal{C}, \ell) 
    \leq 
    \prod_{i=1}^d 
    \left|\pi_{m_i}^{a_i}\right|
    \cdot 
    \left|\Delta_i \right| 
    \cdot
    G(\Phi_i,\ell \cdot M) 
    \cdot
    G(\Theta_i,\ell \cdot M).
  $$
  (ii) The dimension of\/ $\mathcal{C}$ is bounded as:
  $$
    \dimension(\mathcal{C}) 
    \leq 
    2 \cdot d \cdot w 
    \cdot \log(d \cdot w \cdot e \cdot M \cdot |\Pi| \cdot |\Gamma|).
  $$
  (iii) The sample complexity of\/ $\mathcal{C}$ is bounded as:
  $$
    S(\mathcal{C}) 
    \in 
    O\big(
      d \cdot w \cdot \log(d \cdot w \cdot M \cdot |\Pi| \cdot |\Gamma|)
    \big).
  $$
\end{theorem}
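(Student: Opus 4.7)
The plan is to prove parts (i)--(iii) in sequence, leveraging the cascade decomposition from Lemma~\ref{lemma:function-compoisition-of-cascade} and the per-component bounds from Theorem~\ref{theorem:vc-generalised-automata}.

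For (i), I would decompose each $f \in \mathcal{C}$ as $\overline{(I^* \times A_1)} \circ \cdots \circ \overline{(I^* \times A_{d-1})} \circ A_d$. Since growth is sub-multiplicative under composition and cross product, and $I^*$ is a single fixed function contributing factor $1$, we get $G(\mathcal{C},\ell) \leq \prod_{i<d} G(\overline{\mathcal{A}_i}, \ell) \cdot G(\mathcal{A}_d, \ell)$. The key step is to bound $G(\overline{\mathcal{A}_i}, \ell)$: expanding $\overline{A_i}$ via the single-automaton functional decomposition from Lemma~\ref{lemma:function-composition-generalised-automaton}, but with the final output layer also broadcast over every prefix, one sees that on $\ell$ strings of length at most $M$ both $\phi_i$ and $\theta_i$ are evaluated on at most $\ell \cdot M$ inputs in total (sharing a common letter budget), while the core semiautomaton and the projection contribute only the finite factors $|\Delta_i|$ and $|\pi_{m_i}^{a_i}|$. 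This yields $G(\overline{\mathcal{A}_i}, \ell) \leq |\pi_{m_i}^{a_i}| \cdot |\Delta_i| \cdot G(\Phi_i, \ell M) \cdot G(\Theta_i, \ell M)$. For the final block, Theorem~\ref{theorem:vc-generalised-automata}(i) already gives the even tighter $G(\mathcal{A}_d,\ell) \leq |\pi_{m_d}^{a_d}| \cdot |\Delta_d| \cdot G(\Phi_d,\ell M) \cdot G(\Theta_d,\ell)$, which is absorbed by the uniform $G(\Theta_d,\ell M)$ form; multiplying the per-block bounds gives the claim.

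For (ii), I would pass from growth to graph dimension. Applying the standard inequality $G(\mathcal{F},\ell) \leq (e\,\ell\,|Y|)^{\dimension(\mathcal{F})}$ to each $G(\Phi_i,\cdot)$ and $G(\Theta_i,\cdot)$ factor and taking logarithms gives $\log G(\mathcal{C},\ell) \leq d \cdot w \cdot \log(e \cdot \ell \cdot M \cdot |\Pi| \cdot |\Gamma|)$ after collecting the per-component contributions into $w$. Since $\dimension(\mathcal{C})$ is the largest $h$ for which $\mathcal{C}_{\mathrm{bin}}$ can shatter $h$ points, it satisfies $h \leq d w \log(e\,h\,M\,|\Pi|\,|\Gamma|)$ up to an absorbed $\log|\Gamma|$ from binarisation. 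Solving this transcendental inequality via the standard fact that $x \leq a \log(b x)$ implies $x \leq 2 a \log(a b)$ (using the assumption $w \geq 2$ to keep the constants clean) produces the bound $\dimension(\mathcal{C}) \leq 2\,d\,w\,\log(d\,w\,e\,M\,|\Pi|\,|\Gamma|)$.

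Finally, (iii) is immediate from the dimension-based sample-complexity bound recalled in Section~\ref{sec:advanced-learning-theory}: for fixed $\epsilon$ and $\eta$, one has $S(\mathcal{C}) \in O(\dimension(\mathcal{C}) \cdot \log|\Gamma|)$, and substituting the bound from (ii) (absorbing the outer $\log|\Gamma|$ into the existing logarithm) yields the stated asymptotic. The main obstacle lies in (i): one must argue carefully that evaluating each automaton on every prefix of every input string does not yield $\ell M$ independent automaton evaluations, but rather shares the underlying letter budget, so that each $\phi_i$ and $\theta_i$ is applied to $\ell M$ inputs rather than the much larger $\ell M^2$. Once this accounting is correct, the reduction to the single-automaton theorem and the subsequent steps (ii) and (iii) are routine.
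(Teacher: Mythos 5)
Your proposal is correct and follows essentially the same route as the paper: the cascade is decomposed via Lemma~\ref{lemma:function-compoisition-of-cascade}, the growth of each $\overline{\mathcal{A}_i}$ is bounded by pushing the prefix operator through the functional description so that $\Phi_i$ and $\Theta_i$ are charged for $\ell\cdot M$ letter evaluations rather than $\ell\cdot M^2$ (the paper formalises your ``shared letter budget'' observation through Propositions~\ref{prop:bar}, \ref{prop:stateless-bar} and \ref{prop:bar-growth}), and the dimension and sample-complexity bounds then follow by inverting the growth bound and invoking Proposition~\ref{prop:sample-complexity-derivation}. The only cosmetic difference is that the paper verifies the specific value $\ell = 2dw\log(dw\,e\,M|\Pi||\Gamma|)$ directly rather than citing a generic inversion of $x\leq a\log(bx)$, and no extra $\log|\Gamma|$ actually arises from binarisation since $G(\mathcal{C}_{\mathrm{bin}},\ell)\leq G(\mathcal{C},\ell)$.
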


By a similar reasoning as in the results for finite classes, the sample
complexity has a linear dependency on the number $d$ of cascade components,
modulo a logarithmic factor.

\begin{corollary}
  Let us recall the quantities from Theorem~\ref{theorem:v-dimension-cascades},
  and fix the quantities $a_1$, $m$, $\Phi$, $\Delta$, $\Theta$, $M$, $\Pi$, and
  $\Gamma$.
  Then, the sample complexity of $\mathcal{C}$ is bounded as
  $S(\mathcal{C}) \in O(d \cdot \log d)$.
\end{corollary}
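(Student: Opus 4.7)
The proof is a direct computation starting from part (iii) of Theorem~\ref{theorem:v-dimension-cascades}, which already gives
\[
S(\mathcal{C}) \in O\bigl(d \cdot w \cdot \log(d \cdot w \cdot M \cdot |\Pi| \cdot |\Gamma|)\bigr),
\]
so the plan is simply to show that, once the listed quantities are fixed, all the factors beyond $d$ collapse to (poly)logarithmic functions of $d$ and the bound reduces to $O(d \log d)$.

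First I would handle $w = \log|\pi_m^a| + \log|\Delta| + \dimension(\Phi) + \dimension(\Theta)$ term by term. Since $\Phi$, $\Delta$, and $\Theta$ are fixed classes, the quantities $\log|\Delta|$, $\dimension(\Phi)$, and $\dimension(\Theta)$ are constants independent of $d$. The only term that can scale with $d$ is $\log|\pi_m^a|$, where $a$ denotes the maximum input arity across the $d$ components of the cascade. By the cascade construction, component $A_i$ reads the external input together with the outputs of all preceding components, so its input arity is at most $a_1 + (i-1)$; in particular, $a \le a_1 + d - 1$. Then, using $|\pi_m^a| = \binom{a}{m}$, I would bound
\[
\log|\pi_m^a| \;\le\; \log\binom{a_1+d-1}{m} \;=\; O\bigl(m \log(a_1 + d)\bigr) \;=\; O(\log d),
\]
because $a_1$ and $m$ are fixed.

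Having shown $w = O(\log d)$, I would substitute into the bound from Theorem~\ref{theorem:v-dimension-cascades}(iii). With $M$, $|\Pi|$, and $|\Gamma|$ also fixed, the argument of the outer logarithm is $d \cdot O(\log d) \cdot O(1)$, whose logarithm is $O(\log d)$. Plugging everything back yields $S(\mathcal{C}) \in O(d \log d)$ after absorbing lower-order logarithmic factors into the asymptotic notation, matching the corresponding corollary for the finite case in Section~\ref{sec:sample-complexity-finite}.

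The proof is essentially bookkeeping and presents no real obstacle; the only care point is remembering that the arity $a$ is \emph{not} fixed by the hypothesis (only $a_1$ is), and therefore that $\log|\pi_m^a|$ must be explicitly argued to be $O(\log d)$ rather than treated as a constant. Everything else is immediate from the master theorem.
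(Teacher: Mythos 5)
Your route is the same one the paper intends (the paper gives no formal proof of this corollary, only the remark that it follows ``by a similar reasoning as in the results for finite classes''), and you are right to flag that only $a_1$ is fixed while the maximal arity grows as $a \le a_1 + d - 1$, so that $\log|\pi_m^a| \le \log\binom{a_1+d-1}{m} = O(m\log(a_1+d)) = O(\log d)$. That part is exactly the paper's own accounting in the discussion of the finite-class bounds.

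The gap is in your last step. Having established $w = \Theta(\log d)$ (it is not merely $O(\log d)$: for fixed $m \ge 1$ the term $\log\binom{a_1+d-1}{m}$ genuinely grows like $m\log d$), substituting into Theorem~\ref{theorem:v-dimension-cascades}(iii) gives
\[
d \cdot w \cdot \log(d \cdot w \cdot M \cdot |\Pi| \cdot |\Gamma|) = d \cdot \Theta(\log d) \cdot \Theta(\log d) = \Theta(d\log^2 d),
\]
and a \emph{multiplicative} $\log d$ factor cannot be ``absorbed into the asymptotic notation'': $d\log^2 d \notin O(d\log d)$. This is precisely the difference from the finite-class corollary, where the bound is $d \cdot (\log|\pi_m^a| + \cdots)$ and the $O(\log d)$ term enters \emph{additively}, so $O(d\log d)$ follows. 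In the infinite-class bound the $w$ and the outer logarithm multiply. To actually reach $O(d\log d)$ by your route you would need to additionally fix the maximal arity $a$ (making $w$ constant, so the only $\log d$ is the outer one), or else bypass part (iii) and argue directly from the growth bound in part (i), where $\log G(\mathcal{C},\ell)$ is a \emph{sum} of a $O(d\log d)$ term and a $O(d\log \ell)$ term and the self-consistent solution for the sample size is $O(d\log d)$. As written, your derivation establishes only $S(\mathcal{C}) \in O(d\log^2 d)$, and the final ``absorption'' step is invalid.
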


\section{Related Work}

The main related work is the sample complexity bounds from
\cite{ishigami1997vc}, stated in Theorem~\ref{th:ishigami-tani}.
Our bounds are qualitatively different, as they allow for describing the sample
complexity of a richer variety of classes of automata, and cascades thereof.
With reference to Theorem~\ref{th:finite-complexity-gautomaton},
their specific case is obtained when:
the input alphabet is non-factored and hence $|\pi^a_m| = 1$;
$\Phi = \{ I \}$ and hence $|\Pi| = |\Sigma| = k$;
$\Delta$ is the class of all semiautomata on $k$ letters and $n$ states, and
hence $|\Delta| = n^{k \cdot n}$; 
and $\Theta$ is the class of all indicator functions on $n$ states, and hence
$|\Theta| = 2^n$.
In this case, it is easy to verify that our bound matches theirs.

Learning automata expressed as a cross product
is considered in \cite{moerman2018product}. They correspond to cascades where
all components read the input, but no component reads the output of the others. 
The authors provide a so-called \emph{active learning} algorithm, that asks
membership and equivalence queries. Although it is a different setting from
ours, it is interesting that they observe an exponential gain in some specific
cases, compared to ignoring the product structure of the automata.

The idea of decoupling the input alphabet from the core functioning of an
automaton is found in \emph{symbolic automata}.
The existing results on learning symbolic automata are in the active
learning setting
\cite{berg2006regular,mens2015learning,maler2017generic,argyros2018symbolic}.


\section{Conclusion}

Given the favourable sample complexity of automata cascades, the next step is to
devise learning algorithms, able to learn automata as complex systems consisting
of many components implementing specific functionalities.

\section*{Acknowledgments}
This work has been supported by the ERC Advanced Grant WhiteMech (No.\
834228), by the EU ICT-48 2020 project TAILOR (No.\ 952215), by the PRIN project
RIPER (No.\ 20203FFYLK), by the EU's Horizon 2020 research and
innovation programme under grant agreement No.\ 682588.



\bibliography{extracted-bibliography}

\ifarxiv
  \onecolumn
  \clearpage
  \appendix
  
\section{Proofs}

\subsection{Proof of Theorem~\ref{th:krohn-rhodes} (Expressivity of Automata
Cascades)}


We first introduce the \emph{cascade product} between semiautomata.
Given two semiautomata 
\begin{align*}
  D_1 & = \langle \Sigma, Q, \delta_1, q_\mathrm{init}^1 \rangle,
  \\
  D_2 & = \langle \Sigma \times Q, Q, \delta_2, q_\mathrm{init}^2 \rangle,
\end{align*}
their cascade product $D_1 \cascade D_2$ yields the semiautomaton
$D = \langle \Sigma, Q, \delta, q_\mathrm{init} \rangle$
where 
\begin{align*}
  Q & = Q_1 \times Q_2,
  \\
  q_\mathrm{init} & = \langle q_\mathrm{init}^1, q_\mathrm{init}^2 \rangle,
  \\
  \delta(\langle q_1, q_2 \rangle, \sigma) & = \langle \delta_1(q_1,
\sigma), \delta_2(q_2, \langle q_1, \sigma \rangle) \rangle.
\end{align*}

The cascade product on semiautomata is left-associative. Hence, given
semiautomata 
$D_i = \langle \Sigma \times Q_1 \times \dots \times Q_{i-1}, Q_i, \delta_i,
q_\mathrm{init}^i \rangle$ for $i \in [1,d]$,
we can build a semiautomaton as $D_1 \cascade \cdots \cascade D_d$.

The proof of Theorem~\ref{th:krohn-rhodes} is based on the following
theorem, which is an immediate consequence of the Krohn-Rhodes Prime
Decomposition Theorem \cite{krohn1965rhodes}.

\begin{theorem}[Krohn and Rhodes]
  \label{th:krg}
  Every automaton $A$ is captured by an automaton $A'$ whose core semiautomaton
  is expressed by a cascade product $A_1 \cascade \cdots \cascade A_d$ where
  each $A_i$ is a prime semiautomaton.  
  Furthermore,
  every group-free automaton $A$ is captured by an automaton $A'$ whose core
  semiautomaton is expressed by a cascade product $A_1 \cascade \cdots \cascade
  A_d$ where each $A_i$ is a flip-flop semiautomaton.
  The converse of both claims holds as well.
\end{theorem}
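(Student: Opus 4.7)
The plan is to derive the theorem from the classical Krohn-Rhodes Prime Decomposition Theorem, together with Ginzburg's refinement characterising counter-free automata, and then translate between the classical cascade product on semiautomata and the paper's notion of a simple cascade of automata. Concretely, I would first set up the classical cascade product on semiautomata, where $D_1 \cascade D_2$ feeds the external input together with the state of $D_1$ as the input to $D_2$, and iterate this to $D_1 \cascade \cdots \cascade D_d$. The classical Krohn-Rhodes theorem then gives, for every automaton $A$, a chain of prime semiautomata whose cascade product captures the core semiautomaton of $A$; Ginzburg's refinement ensures this chain can be chosen to consist of flip-flop semiautomata when $A$ is counter-free. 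I would package these two classical facts as a single preliminary statement before invoking them.

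Second, I would convert such a chain into a simple cascade of automata in the paper's formalism. For each $D_i$ reading inputs from $\Sigma \times Q_1 \times \cdots \times Q_{i-1}$, I build the $i$-th component of a simple cascade having factored input alphabet $\Sigma \times Q_1 \times \cdots \times Q_{i-1}$, dependency set $[1,i]$, identity input function, core semiautomaton equal to $D_i$, and output function $\theta_i(q,\sigma) = q$, which is exactly the simplicity requirement; the original output function of $A$ is attached to the last component. A step-by-step verification on an arbitrary input string then shows that the state sequence and overall output of this cascade coincide with those of $A$, so $A$ is captured.

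For the converse, the direction ``every simple cascade of prime automata implements some automaton'' is essentially definitional: by the construction in Section~\ref{sec:cascades}, a cascade of automata is itself an automaton on the product state space. The non-trivial half is ``every simple cascade of flip-flops implements a counter-free automaton''; here I would invoke the classical closure of counter-free semiautomata under the cascade product, observing that the core semiautomaton of a simple cascade of flip-flops is precisely the classical cascade product of the component flip-flop semiautomata, and that attaching a static output function preserves counter-freeness of the implemented language.

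The main obstacle is the bookkeeping in the translation step: one must check that the additional machinery in the paper's component definition (the projections $\pi_J$, internal alphabets $\Pi$, and input functions $\phi$) can be instantiated trivially so that each component's induced automaton coincides with the corresponding $D_i$, and that the recursively defined inputs $\sigma_{i+1} = \langle \sigma_i, \theta_i(q_i,\sigma_i) \rangle$ in a simple cascade reproduce exactly the tupling of external input with previous states used by the classical cascade product. A secondary subtlety is that restricting to \emph{simple} cascades costs nothing, because the classical product already propagates full states downstream, and in fact this restriction is what makes the global state space factor as $Q_1 \times \cdots \times Q_d$ in alignment with the Krohn-Rhodes decomposition.
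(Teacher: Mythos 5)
Your proposal is correct and matches the paper's treatment: Theorem~\ref{th:krg} is established in the paper simply by citing the classical Krohn--Rhodes Prime Decomposition Theorem together with Theorem~5.6 of Ginzburg, exactly as in your first paragraph. Note that the bulk of your proposal --- the translation of the semiautomaton cascade product into the paper's simple-cascade-of-automata formalism, and the two converse directions --- is not actually needed for Theorem~\ref{th:krg}, which is stated purely in terms of the classical cascade product on semiautomata; that material is instead the content of the paper's proof of Theorem~\ref{th:krohn-rhodes}, and there too your construction (identity input functions, state-outputting components, the original output function attached to the last component) coincides with what the paper does.
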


%
%
%
\thkrohnrhodes*
\begin{proof}
  Consider an automaton 
  $A = \langle \Sigma, Q, \delta, q_\mathrm{init}, \Gamma, \theta \rangle$.
  By Theorem~\ref{th:krg},
  automaton
  $A$ is captured by an automaton
  $A' = \langle \Sigma, Q', \delta', q'_\mathrm{init}, \Gamma,
  \theta' \rangle$
  such that its semiautomaton $\langle \Sigma, Q', \delta', q_\mathrm{init}'
  \rangle$ is expressed by
  a cascade product
  $D_1 \cascade \cdots \cascade D_d$ where each $D_i$ is a prime semiautomaton
  of the form
  $\langle \Sigma \times Q_1 \times \dots \times Q_{i-1}, Q_i, \delta_i,
  q_\mathrm{init}^i \rangle$.
  For every $i \in [1,d-1]$, let $A_i$ be the automaton with core semiautomaton 
  $D_i$ and output function 
  $\theta_i(q_i,\langle \sigma, q_1, \dots, q_{i-1} \rangle) = q_i$.
  Also, let $A_d$ be the automaton with core semiautomaton $D_d$ and 
  output function
  $\theta_d(q_d,\langle \sigma, q_1, \dots, q_{d-1} \rangle) =
  \theta'(\langle q_1, \dots, q_d \rangle,\sigma)$.
  Now, consider the cascade $C = (A_1, \dots, A_d)$ made of the automata just
  introduced.
  Since the cascade is simple and its components are prime,
  it suffices to show that $A$ is captured by $C$.
  Namely, that $A$ and $C$ implement the same function.
  Let 
  $A_C = \langle \Sigma, Q_C, \delta_C, q^\mathrm{init}_C, \Gamma, \theta_C
  \rangle$ 
  be the automaton corresponding to the cascade $C$.
  Consider an input string $\sigma_1 \dots \sigma_m$.
  Let $q = \langle q_1, \dots, q_d \rangle$ be the state of $A_C$ after
  reading $\sigma_1 \dots \sigma_{m-1}$; namely, 
  $q = \delta_C(q^\mathrm{init}_C, \sigma_1 \dots \sigma_{m-1})$.
  The function implemented by $C$ is
  \begin{align*}
    C(\sigma_1 \dots \sigma_m) = \theta_d(q_d, \langle \sigma_m, q_1, \dots,
    q_{d-1} \rangle)
    = \theta'(\langle q_1, \dots, q_d \rangle,\sigma_m) = 
    A'(\sigma_1 \dots \sigma_m) = A(\sigma_1 \dots \sigma_m).
  \end{align*}
  For the second claim, it suffices to note that $A_1, \dots, A_d$ can be taken
  to be flip-flops when $A$ is group-free, by Theorem~\ref{th:krg}.

  Next we show the converse of the two claims.
  The converse of the first claim holds immediately by the definition of
  cascade, which says that the function it implements is the one implemented by
  the corresponding automaton.  
  For the converse of the second claim, consider a simple cascade 
  $C = (A_1, \dots, A_d)$. For every $i \in [1,d]$, the core semiautomaton $D_i$
  of $A_i$ is a flip-flop. Also, for every $i \in [1,d-1]$ the output function 
  of $A_i$ is $\theta_i(q_i,\langle \sigma, q_1, \dots, q_{i-1} \rangle) = q_i$.
  Thus, the core semiautomaton of the automaton corresponding to $C$ is
  expressed by the cascade product $D_1 \cascade \cdots \cascade D_d$.
  Thus, by Theorem~\ref{th:krg}, cascade $C$ is captured by a 
  group-free automaton.
\end{proof}

\subsection{Proof of Theorem~\ref{th:finite-complexity-gautomaton} (Cardinality
  and Sample Complexity of Finite Classes of Automata)}

\thfinitecomplexitygautomaton*
\begin{proof}
  We bound the number of automata in the class
  $\mathcal{A}(m,\Phi,\Delta,\Theta)$.
  Consider that each automaton in the class is of the form
  $$
    A = \langle 
        X^a, J, \Pi, \phi, Q, \delta, q_\mathrm{init}, \Gamma, \theta
    \rangle.
  $$
  To establish the cardinality of the class, it suffices to count the number of
  distinct components of the tuple above.
  The input alphabet $X^a$ is fixed. 
  The number of possible sets of indices of $J \subseteq [1,a]$ is the 
  cardinality of
  the set $\left|\pi_m^a\right| = \left| \binom{a}{m}\right|$.
  The number of input functions $\phi$ is $|\Phi|$.
  Each input function determines the internal alphabet $\Pi$.
  Each semiautomaton $\langle \Pi, Q, \delta, q_\mathrm{init} \rangle$ is from
  $\Delta$, and hence we have at most $|\Delta|$.
  The number of output functions $\theta$ is $|\Theta|$, and
  the output alphabet $\Gamma$ is determined by the output function.
  The bound follows by taking the product of the sets of components whose
  cardinality has been discussed above.

  For the case where $\Delta$ is the set of all semiautomaton with $k$ letters
  and $n$ states, we have $|\Delta| \leq n^{k \cdot n}$---assuming w.l.o.g.\
  that $Q = [1,n]$ and $q_\mathrm{init} = 1$.

  The sample complexity is immediate by taking the logarithm of the cardinality
  bound established above.
\end{proof}

\subsection{Proof of Theorem~\ref{th:finite-complexity-cascade} (Cardinality and
Sample Complexity of Finite Classes of Automata Cascades)}

Note that the theorem is restated by defining some of the quantities more
clearly.

\medskip
\par
\noindent
\textbf{Theorem~\ref{th:finite-complexity-cascade}.}
\textit{%
The cardinality of a class of automata cascades
  $\mathcal{C} = 
  \mathcal{A}_1 \cascade \cdots \cascade \mathcal{A}_d$
  where the automata are from
  $\mathcal{A}_i =
  \mathcal{A}(\Phi_i,\Delta_i,\Theta_i;a_i,m_i)$
  is bounded as 
  $$
    \left|\mathcal{C}\right| 
    \leq 
    \prod_{i=1}^d 
    \left|\pi_{m_i}^{a_i}\right| 
    \cdot 
    \left|\Phi_i\right| 
    \cdot 
    \left|\Delta_i\right| 
    \cdot 
    \left|\Theta_i\right|,
  $$
  and its sample complexity is asymptotically bounded as
  $$
  S(\mathcal{C}) \in 
    O\Big(
        d \cdot (\log \left|\pi_m^a\right|
                 \,+\, 
                 \log \left|\Phi\right| 
                 \,+\, 
                 \log \left|\Delta\right| 
                 \,+\, 
                 \log \left|\Theta\right|) 
    \Big),
  $$
  where dropping the indices denotes the maximum as follows:
  \begin{alignat*}{2}
    \pi_m^a & = \argmax_{\pi_{m_i}^{a_i}} |\pi_{m_i}^{a_i}| 
            && \qquad\mbox{the class of projection functions with maximum cardinality,}
    \\
    \Phi & = \argmax_{\Phi_i} |\Phi_i|
         && \qquad\mbox{the class of input functions with maximum cardinality,}
    \\
    \Delta & = \argmax_{\Delta_i} |\Delta_i|
           && \qquad\mbox{the class of semiautomata with maximum cardinality,}
    \\
    \Theta & = \argmax_{\Theta_i} |\Theta_i|
           && \qquad\mbox{the class of output functions with maximum cardinality.}
  \end{alignat*}
}
\begin{proof}
  We bound the number of cascades in the class 
  $\mathcal{C} = \mathcal{A}_1 \cascade \cdots \cascade \mathcal{A}_d$.
  In order to do so, we bound the cardinality of each class of automata
  $\mathcal{A}_i = \mathcal{A}(\Phi_i,\Delta_i,\Theta_i;a_i,m_i)$.
  Thus, by the cardinality bound in
  Theorem~\ref{th:finite-complexity-gautomaton}, we have that 
  $$
      \left|\mathcal{A}_i\right| 
      \leq 
      \left|\pi_{m_i}^{a_i}\right| 
      \cdot 
      \left|\Phi_i\right| 
      \cdot 
      \left|\Delta_i\right| 
      \cdot 
      \left|\Theta_i\right|.
  $$
  Thus, the bound on $\left|\mathcal{C}\right|$ is given by 
  $\prod_{i=1}^d\left|\mathcal{A}_i\right|$. 
  Then, for the sample complexity, we loosen the bound 
  taking the maximum of each indexed quantity.  Namely,
  \begin{align*}
    \left|\mathcal{C}\right| 
    & \leq 
    \prod_{i=1}^d \left|\mathcal{A}_i\right| 
    \leq
    \prod_{i=1}^d 
    \left|\pi_m^a\right|
    \cdot 
    \left|\Phi\right| 
    \cdot 
    \left|\Delta\right| 
    \cdot 
    \left|\Theta\right|,
  \end{align*}
  where $\pi_m^a,$ $\Phi,$ $\Delta,$ and $\Theta$ 
  are defined in the statement of the theorem.
  Then,
  $$
      \left|\mathcal{C}\right| 
      \leq 
      \left|\pi_m^a\right|^d
      \cdot 
      \left|\Phi\right|^d 
      \cdot 
      \left|\Delta\right|^d 
      \cdot
      \left|\Theta\right|^d.
  $$
  The sample complexity bound follows by taking the logarithm of the
  cardinality bound established above.
\end{proof}

\subsection{Further Details on the Growth and Sample Complexity}

This section restates two claims from
Section~\ref{sec:advanced-learning-theory}, providing details on the way they
are derived.
The following Proposition~\ref{prop:bound-growth-vc-haussler} restates
the upper bound on the growth in terms of the dimension.
\begin{proposition}
  \label{prop:bound-growth-vc-haussler}
  Let $\mathcal{F}$ be a class of functions from $X$ to $Y$.
  Then, 
  $$G(\mathcal{F},\ell) 
  \;\leq \;
  \sum_{i=0}^{\dimension(\mathcal{F})} \binom{\ell}{i} \cdot (|Y|-1)^i
  \;\leq \;
  \sum_{i=0}^{\dimension(\mathcal{F})} \binom{\ell}{i} \cdot |Y|^i
  \;\leq\;
  \left(e \cdot \ell \cdot |Y|\right)^{\dimension(\mathcal{F})}.$$
\end{proposition}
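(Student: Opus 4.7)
The plan is to prove the three inequalities in order. Only the leftmost inequality has substantive combinatorial content—it is Haussler's theorem on the growth of multi-class function families—while the remaining two follow from elementary algebra. Throughout, let $d = \dimension(\mathcal{F})$.

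For the first inequality, I would split on the definition of $\dimension(\mathcal{F})$. When $|Y|=2$, the dimension equals the VC dimension and $(|Y|-1)^i = 1$, so the bound collapses to the classical Sauer--Shelah lemma $N(\mathcal{F},X_\ell) \leq \sum_{i=0}^{d}\binom{\ell}{i}$. I would prove this by the standard downward-shifting argument: for each $x \in X_\ell$, replace each function $f$ by one that agrees with $f$ except possibly assigning $0$ at $x$, whenever the resulting pattern is still fresh; then show that the shifted family has the same cardinality, no larger VC dimension, and is downward-closed, at which point it suffices to count downward-closed families on $X_\ell$ with VC dimension $d$, which is exactly $\sum_{i=0}^{d}\binom{\ell}{i}$.

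When $|Y|>2$, $\dimension(\mathcal{F})$ is the graph dimension, i.e., the VC dimension of $\mathcal{F}_\text{bin}$. A naive reduction applying Sauer--Shelah to $\mathcal{F}_\text{bin}$ on $X_\ell \times Y$ (of size $\ell \cdot |Y|$) would give a bound of the form $\sum_{i=0}^{d}\binom{\ell|Y|}{i}$, which is strictly weaker than the claim. To obtain the sharper $(|Y|-1)^i$ factor, I would invoke Haussler's (1995) inductive proof, which performs a shifting argument inside the multi-class family directly. The key structural fact it exploits is that in each column $\{x\}\times Y$ the binarizations of functions in $\mathcal{F}$ admit only $|Y|$ distinct binary patterns (the indicator of the label $f(x)$), not the $2^{|Y|}$ patterns allowed a priori; this constraint is precisely what trades the crude factor $|Y|^i$ for the sharper $(|Y|-1)^i$.

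The second inequality is immediate, since $(|Y|-1)^i \leq |Y|^i$ termwise for every $i \geq 0$ and $|Y| \geq 1$. For the third inequality, I would factor out $|Y|^d \geq |Y|^i$ (valid since $i \leq d$) to obtain $\sum_{i=0}^{d}\binom{\ell}{i}|Y|^i \leq |Y|^d \sum_{i=0}^{d}\binom{\ell}{i}$, and then apply the standard estimate $\sum_{i=0}^{d}\binom{\ell}{i}\leq (e\ell/d)^d$, yielding $(e\ell|Y|/d)^d \leq (e\ell|Y|)^d$ for $d \geq 1$. The main obstacle is the multi-class case of the leftmost inequality: since the binarization shortcut loses a factor, Haussler's direct shifting argument on multi-valued families is essential to obtain the constant $(|Y|-1)$ rather than $|Y|$.
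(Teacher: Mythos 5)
Your proof is correct and takes essentially the same route as the paper: the first inequality is exactly Corollary~3 of Haussler (1995), which the paper simply cites and which you also ultimately invoke (your sketch of the multi-class shifting argument and the observation that naive binarization would only give $\sum_i \binom{\ell|Y|}{i}$ are accurate, but the substance is the same citation). The second and third inequalities are handled identically --- termwise loosening of $(|Y|-1)^i$ to $|Y|^i$ followed by the standard bound on the partial sum of binomial coefficients.
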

\begin{proof}
  The first inequality is Corollary~3 of \cite{haussler1995generalization}. We 
  loosen it by  substituting $(|Y|-1)^i$ with $|Y|^i$ and then
  applying the well-known bound on the partial sum of binomials 
  $\sum_{i=0}^d \binom{m}{i} \leq (e \cdot m)^d$ to derive the last inequality.
\end{proof}

The following Proposition~\ref{prop:sample-complexity-derivation} restates the
asymptotic bound on the sample complexity in terms of the dimension.
\begin{proposition}
  \label{prop:sample-complexity-derivation}
  The following asymptotic bound on the sample complexity holds true:
\begin{equation*}
S(\mathcal{F},\epsilon,\eta) \in 
O\left((\dimension(\mathcal{F}) \cdot \log|Y| - \log \eta)/\epsilon^2\right).
\end{equation*}
\end{proposition}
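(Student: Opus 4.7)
The plan is to invoke the empirical risk minimisation (ERM) principle together with a uniform convergence argument driven by the growth function. For a sample $Z_\ell$ drawn i.i.d.\ from $P$, write $\hat R(f) = (1/\ell) \sum_{i=1}^\ell L(f(x_i), f_0(x_i))$ for the empirical risk. A standard triangle-inequality argument shows that if
$$\sup_{f \in \mathcal{F}} |R(f) - \hat R(f)| \leq \epsilon/2$$
with probability at least $1-\eta$, then any ERM minimiser $\hat f$ of $\hat R$ in $\mathcal{F}$ satisfies $R(\hat f) - \min_{g \in \mathcal{F}} R(g) \leq \epsilon$ with the same probability. Thus the task reduces to estimating the sample size needed to make this uniform-deviation event small.

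Next, I would pass from $\mathcal{F}$ to the induced loss class $\mathcal{L} = \{(x,y) \mapsto L(f(x),y) : f \in \mathcal{F}\}$, whose functions take values in $\{0,1\}$. Since $L(f(x),y)$ is a deterministic function of the pair $(f(x),y)$, the growth of $\mathcal{L}$ is controlled by that of $\mathcal{F}$: concretely $G(\mathcal{L},\ell) \leq G(\mathcal{F},\ell)$. To the binary class $\mathcal{L}$ I would then apply the classical Vapnik--Chervonenkis symmetrisation-and-Hoeffding computation, which yields, for $\ell \geq 2/\epsilon^2$, a tail bound of the shape
$$\mathbb{P}\!\left(\sup_{f \in \mathcal{F}} |R(f) - \hat R(f)| > \epsilon/2\right) \leq 4 \cdot G(\mathcal{L}, 2\ell) \cdot \exp(-\ell \epsilon^2 / 32).$$
This step is the technical heart of the argument (ghost sample, symmetrisation by random swap, Hoeffding conditional on the double sample), but it is entirely standard and I would cite \cite{shalevshwartz2014book} rather than reproduce it.

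Substituting the growth bound of Proposition~\ref{prop:bound-growth-vc-haussler}, namely $G(\mathcal{L},2\ell) \leq G(\mathcal{F},2\ell) \leq (2 e \ell \cdot |Y|)^{\dimension(\mathcal{F})}$, produces an overall tail of the form $4 \cdot (2 e \ell |Y|)^{\dimension(\mathcal{F})} \cdot \exp(-\ell \epsilon^2 / 32)$. Requiring this to be at most $\eta$ and taking logarithms yields the transcendental inequality
$$\ell \epsilon^2 / 32 \;\geq\; \dimension(\mathcal{F}) \cdot \log(2 e \ell \cdot |Y|) + \log(4/\eta).$$
A routine manipulation---using, e.g., the standard fact that $\ell \geq a \log \ell + b$ holds whenever $\ell \geq 4(a \log a + b)$ for $a, b \geq 1$---shows that any $\ell$ of order $(\dimension(\mathcal{F}) \log|Y| - \log \eta)/\epsilon^2$ suffices, which is precisely the claimed asymptotic bound.

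The only genuinely delicate step is the symmetrisation-and-concentration inequality in the second paragraph; everything else is bookkeeping---bounding $G(\mathcal{L},\ell)$ by $G(\mathcal{F},\ell)$, plugging in Proposition~\ref{prop:bound-growth-vc-haussler}, and solving the logarithmic inequality above. Since the paper explicitly defers these details to \cite{shalevshwartz2014book}, the only care needed in a self-contained write-up is to expose the $\log|Y|$ factor cleanly; I would do so by bounding the loss-class growth through $\mathcal{F}$ itself and invoking the $|Y|$-aware estimate of Proposition~\ref{prop:bound-growth-vc-haussler}, rather than passing through the binarisation $\mathcal{F}_{\mathrm{bin}}$.
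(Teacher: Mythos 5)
Your route is genuinely different from the paper's. The paper proves this proposition in two lines of citations: Point~2 of Theorem~19.3 of \cite{shalevshwartz2014book} gives the bound with the \emph{Natarajan dimension} in place of $\dimension(\mathcal{F})$, and Equations~(24) and~(26) of \cite{haussler1995generalization} give $d_{\mathrm{Nat}}(\mathcal{F}) \leq \dimension(\mathcal{F})$ (graph dimension dominates Natarajan dimension), so the bound only gets weaker. You instead attempt a self-contained uniform-convergence argument: ERM reduction, loss class $\mathcal{L}$ with $G(\mathcal{L},\ell) \leq G(\mathcal{F},\ell)$ (which is fine, and is essentially Proposition~\ref{prop:inequality-two-growths}), symmetrisation, and Proposition~\ref{prop:bound-growth-vc-haussler}.

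The gap is in your last step. The symmetrisation route leaves you with $\ell \geq a \log(2e|Y|\ell) + b$ where $a = \Theta(\dimension(\mathcal{F})/\epsilon^2)$, and your own auxiliary fact resolves this only for $\ell \gtrsim a \log a + b$. Since $a \log a = \Theta\big((\dimension(\mathcal{F})/\epsilon^2)\cdot\log(\dimension(\mathcal{F})/\epsilon^2)\big)$, what you actually obtain is $S(\mathcal{F},\epsilon,\eta) \in O\big((\dimension(\mathcal{F})\cdot\log(\dimension(\mathcal{F})\cdot|Y|/\epsilon) - \log\eta)/\epsilon^2\big)$, which is \emph{not} the claimed $O\big((\dimension(\mathcal{F})\cdot\log|Y| - \log\eta)/\epsilon^2\big)$: the extra $\log(\dimension(\mathcal{F})/\epsilon)$ factor does not vanish in the big-$O$. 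This is the well-known limitation of the plain ghost-sample/Hoeffding argument; removing that logarithmic factor in the agnostic setting requires sharper machinery (chaining/Dudley entropy integrals or Talagrand-type concentration, which is what underlies the multiclass fundamental theorem you would need to cite). So either weaken the statement you prove to include the extra logarithm (harmless for the paper's downstream ``modulo logarithmic factors'' conclusions, but not what the proposition asserts), or fall back on citing the Natarajan-dimension bound and the inequality $d_{\mathrm{Nat}} \leq \dimension(\mathcal{F})$ as the paper does.
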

\begin{proof}
  By Point~2 of Theorem~19.3 of \cite{shalevshwartz2014book}, 
  the proposition holds with the \emph{Natarajan dimension} of
  $\mathcal{F}$ in place of $\dimension(\mathcal{F})$.
  Then the proposition follows immediately since 
  $\dimension(\mathcal{F})$ is an upper bound on the Natarajan dimension, 
  by Equations~(24) and~(26) of \cite{haussler1995generalization}.
\end{proof}

\subsection{Basic Propositions}

In this section we state and prove some basic propositions that we use 
in the next sections of the appendix, where we prove the
results for Section~\ref{sec:sample-complexity-infinite}.

The next Proposition~\ref{prop:inequality-two-growths} describes a relationship
between the growth of $\mathcal{F}$ and the growth of its binarisation
$\mathcal{F}_\mathrm{bin}$. 
\begin{proposition}
  \label{prop:inequality-two-growths}
Let $\mathcal{F}$ be a class of functions. 
Then, $G(\mathcal{F}_\mathrm{bin},\ell) \leq G(\mathcal{F},\ell)$.
\end{proposition}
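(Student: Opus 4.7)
The plan is to exploit the observation that two functions in $\mathcal{F}$ that agree on a sample $x_1, \dots, x_\ell$ induce binarisations that agree on any extended sample of the form $(x_1, y_1), \dots, (x_\ell, y_\ell)$. This reduces bounding the number of distinct binarisation patterns to bounding the number of distinct patterns of $\mathcal{F}$ itself.

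More concretely, I would begin by fixing an arbitrary sample $S_\ell = (x_1, y_1), \dots, (x_\ell, y_\ell)$ drawn from the domain $X \times Y$ of $\mathcal{F}_\mathrm{bin}$, and let $X_\ell = x_1, \dots, x_\ell$ be the projection to the first coordinate. The key step is to define the map sending $f \in \mathcal{F}$ to the pattern $\langle f_\mathrm{bin}(x_1, y_1), \dots, f_\mathrm{bin}(x_\ell, y_\ell) \rangle = \langle \mathbf{1}[f(x_1) = y_1], \dots, \mathbf{1}[f(x_\ell) = y_\ell]\rangle$, and observe that this map factors through the restriction $f \mapsto \langle f(x_1), \dots, f(x_\ell)\rangle$. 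That is, if two functions $f, g \in \mathcal{F}$ produce the same pattern on $X_\ell$, then for each $i$ we have $\mathbf{1}[f(x_i) = y_i] = \mathbf{1}[g(x_i) = y_i]$, so $f_\mathrm{bin}$ and $g_\mathrm{bin}$ produce the same pattern on $S_\ell$.

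From this factorisation it follows immediately that $N(\mathcal{F}_\mathrm{bin}, S_\ell) \leq N(\mathcal{F}, X_\ell) \leq G(\mathcal{F}, \ell)$. Taking the supremum over all samples $S_\ell$ on the left-hand side yields $G(\mathcal{F}_\mathrm{bin}, \ell) \leq G(\mathcal{F}, \ell)$, completing the proof.

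There is no serious obstacle here; the argument is essentially a one-line observation about how the indicator $\mathbf{1}[f(x) = y]$ depends on $f$ only through the value $f(x)$. The only thing to be slightly careful about is that the samples for $\mathcal{F}_\mathrm{bin}$ live in the enlarged domain $X \times Y$, so one must explicitly project to $X$ before applying the definition of $G(\mathcal{F}, \ell)$.
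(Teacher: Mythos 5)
Your argument is correct and is essentially the paper's own proof: both proofs fix a sample $Z_\ell$ over $X \times Y$, project to $X_\ell$, and observe that the binarised pattern of $f$ on $Z_\ell$ is determined by the pattern of $f$ on $X_\ell$, so $N(\mathcal{F}_\mathrm{bin}, Z_\ell) \leq N(\mathcal{F}, X_\ell)$ and the bound follows by taking suprema. Your write-up is a slightly more explicit rendering of the paper's "every tuple in $\mathcal{F}(X_\ell)$ corresponds to at most one tuple in $\mathcal{F}_\mathrm{bin}(Z_\ell)$" observation.
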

\begin{proof}
  Let $Z_\ell = \langle x_1, y_1 \rangle, \dots, \langle x_\ell, y_\ell \rangle$
  and let $X_\ell = x_1, \dots, x_\ell$.
  The proposition is immediate by observing that
  every tuple in 
  $$
  \mathcal{F}(X_\ell) = \{ \langle f(x_1), \dots, f(x_\ell) \rangle \mid f \in
  \mathcal{F} \}
  $$
  corresponds to at most one tuple in
  $$
  \mathcal{F}_\mathrm{bin}(Z_\ell) =
  \{  
      \langle \mathbf{1}[y_1 \!=\! f(x_1)], \dots, \mathbf{1}[y_\ell \!=\!
        f(x_\ell)] \rangle 
  \mid f \in \mathcal{F} \}.
  $$
\end{proof}

The next Propositions~\ref{prop:growth-composition}
and~\ref{prop:growth-parallel-composition} provide bounds on the growth for
composition and cross product of functions. 
They are well-known. 
For instance, they are left as an exercise in the book
\cite{shalevshwartz2014book}. 
However, we were not able to find a peer-reviewed reference for the proofs.
We report proofs based on the lecture notes (Kakade and Tewari 2008).

%
%
\begin{restatable}{proposition}{propgrowthcomposition}
\label{prop:growth-composition}
  Let $\mathcal{F}_1: X \to W$, and let 
      $\mathcal{F}_2: W \to Y$.
  Then,
  $
    G(\mathcal{F}_1 \circ \mathcal{F}_2,\ell) 
    \leq 
    G(\mathcal{F}_1,\ell) \cdot G(\mathcal{F}_2,\ell).
  $
\end{restatable}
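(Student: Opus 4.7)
The plan is a direct double-counting argument on patterns. Fix an arbitrary sequence $X_\ell = x_1, \dots, x_\ell$ of elements of $X$; by the definition of growth, it suffices to show that $N(\mathcal{F}_1 \circ \mathcal{F}_2, X_\ell) \leq G(\mathcal{F}_1, \ell) \cdot G(\mathcal{F}_2, \ell)$, and then take the supremum over $X_\ell$. Every element of $(\mathcal{F}_1 \circ \mathcal{F}_2)(X_\ell)$ is a tuple of the form $\langle f_2(f_1(x_1)), \dots, f_2(f_1(x_\ell)) \rangle$ for some $f_1 \in \mathcal{F}_1$ and $f_2 \in \mathcal{F}_2$, so each such tuple is determined by a pair consisting of (i) the intermediate tuple $W_\ell(f_1) := \langle f_1(x_1), \dots, f_1(x_\ell) \rangle \in W^\ell$ and (ii) the tuple $\langle f_2(w_1), \dots, f_2(w_\ell) \rangle$ obtained by applying $f_2$ componentwise to $W_\ell(f_1)$.

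The first coordinate ranges over the set $\mathcal{F}_1(X_\ell)$, so it takes at most $N(\mathcal{F}_1, X_\ell) \leq G(\mathcal{F}_1, \ell)$ distinct values. Once $W_\ell(f_1) = w_1, \dots, w_\ell$ is fixed, the second coordinate ranges over $\mathcal{F}_2(W_\ell(f_1))$, which contains at most $N(\mathcal{F}_2, W_\ell(f_1)) \leq G(\mathcal{F}_2, \ell)$ distinct values, since $W_\ell(f_1)$ is itself a length-$\ell$ sequence of elements of the domain $W$ of $\mathcal{F}_2$. Multiplying these two counts yields the desired bound, and taking the supremum over $X_\ell$ gives the proposition.

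The only mildly subtle point, which I would make sure to articulate carefully, is that the bound $G(\mathcal{F}_2, \ell)$ on the number of $f_2$-induced patterns holds uniformly over the choice of intermediate sequence $W_\ell(f_1)$: this is precisely because $G(\mathcal{F}_2, \ell)$ is a supremum over all length-$\ell$ input sequences in $W$, so the same bound applies regardless of which $W_\ell(f_1)$ a particular $f_1$ happened to produce. There is no real obstacle beyond this observation; the argument is purely combinatorial and does not require any properties of $\mathcal{F}_1$ or $\mathcal{F}_2$ other than their growth functions.
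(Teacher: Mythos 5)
Your proof is correct and follows essentially the same route as the paper's: both decompose each composite pattern into the intermediate pattern $\langle f_1(x_1),\dots,f_1(x_\ell)\rangle \in \mathcal{F}_1(X_\ell)$ and the $f_2$-pattern on that intermediate sequence, bound the former by $G(\mathcal{F}_1,\ell)$ and the latter uniformly by $G(\mathcal{F}_2,\ell)$ (the paper phrases this as a union over $U_\ell \in \mathcal{F}_1(X_\ell)$ followed by a sum, which is the same counting), and take the supremum over $X_\ell$. The ``subtle point'' you flag is exactly the step the paper uses when it replaces $N(\mathcal{F}_2,U_\ell)$ by $G(\mathcal{F}_2,\ell)$ inside the sum.
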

\begin{proof}
  Let $X_\ell = x_1, \dots, x_\ell$ be a sequence of $\ell$ elements from
  $X$.
  Let $\mathcal{F} = \mathcal{F}_1 \circ \mathcal{F}_2$.
  We have
  \begin{align*}
    \mathcal{F}(X_\ell) 
    & = 
    \{ \langle f_2(f_1(x_1)), \dots, f_2(f_1(x_\ell))
  \rangle \mid f_1 \in \mathcal{F}_1\setcomma f_2 \in \mathcal{F}_2 \}
  \\
  & = \bigcup_{U_\ell \in \mathcal{F}_1(X_\ell)} \{
  \langle f_2(u_1), \dots, f_2(u_\ell) \rangle \mid f_2 \in \mathcal{F}_2 \},
  \end{align*}
and therefore,
\begin{align*}
  N(\mathcal{F},X_\ell) 
  = |\mathcal{F}(X_\ell)| 
  & \leq 
    \sum_{U_\ell \in \mathcal{F}_1(X_\ell)} 
    \left|\{
          \langle f_2(u_1), \dots, f_2(u_\ell) \rangle 
           \mid f_2 \in \mathcal{F}_2 \}
    \right| 
    \\
    & \leq 
  \sum_{U_\ell \in \mathcal{F}_1(X_\ell)} 
  N(\mathcal{F}_2,U_\ell)
  \\
  & \leq 
  \sum_{U_\ell \in \mathcal{F}_1(X_\ell)} 
  G(\mathcal{F}_2,\ell)
  \\
  & =
  N(\mathcal{F}_1,X_\ell) \cdot
  G(\mathcal{F}_2,\ell)
  \\
  & \leq
  G(\mathcal{F}_1,\ell) \cdot
  G(\mathcal{F}_2,\ell).
\end{align*}
Since $X_\ell$ is arbitrary, this concludes the proof.
\end{proof}
%
%
\begin{restatable}{proposition}{propgrowthparallelcomposition}
  \label{prop:growth-parallel-composition}
  Let $\mathcal{F}_1: X \to Y$ and let $\mathcal{F}_2: X \to Z$.
  Then,
  $
    G(\mathcal{F}_1 \times \mathcal{F}_2,\ell) 
    \leq
    G(\mathcal{F}_1,\ell) \cdot G(\mathcal{F}_2,\ell)
  $.
\end{restatable}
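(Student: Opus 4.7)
The plan is to follow the same template as the preceding Proposition~\ref{prop:growth-composition}, but the argument is simpler because both $f_1$ and $f_2$ read the same input. First I would fix an arbitrary sequence $X_\ell = x_1, \dots, x_\ell$ of elements from $X$ and write out the set of patterns of $\mathcal{F} = \mathcal{F}_1 \times \mathcal{F}_2$ on $X_\ell$ using the definition of cross product:
\[
  \mathcal{F}(X_\ell)
  = \{ \langle \langle f_1(x_1), f_2(x_1)\rangle, \dots, \langle f_1(x_\ell), f_2(x_\ell)\rangle \rangle \mid f_1 \in \mathcal{F}_1,\, f_2 \in \mathcal{F}_2 \}.
\]

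Next I would exhibit an injection from $\mathcal{F}(X_\ell)$ into $\mathcal{F}_1(X_\ell) \times \mathcal{F}_2(X_\ell)$ given by splitting each length-$\ell$ tuple of pairs into the pair of its coordinate-wise tuples, i.e., mapping the pattern above to the pair $(\langle f_1(x_1),\dots,f_1(x_\ell)\rangle,\; \langle f_2(x_1),\dots,f_2(x_\ell)\rangle)$. This mapping is clearly injective: distinct patterns produce distinct pairs since the original pattern can be reconstructed by zipping the two tuples. Hence $N(\mathcal{F},X_\ell) \leq N(\mathcal{F}_1,X_\ell) \cdot N(\mathcal{F}_2,X_\ell) \leq G(\mathcal{F}_1,\ell) \cdot G(\mathcal{F}_2,\ell)$.

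Finally, since $X_\ell$ was arbitrary, taking the supremum over all such sequences on the left-hand side yields the desired bound $G(\mathcal{F}_1 \times \mathcal{F}_2, \ell) \leq G(\mathcal{F}_1,\ell) \cdot G(\mathcal{F}_2,\ell)$. There is no real obstacle here: the only subtlety, compared to the composition case, is that we do not need to re-express the inner evaluations on a different sample (there is no analogue of $U_\ell$), so the double-counting that required a sum over $U_\ell$ in Proposition~\ref{prop:growth-composition} collapses to a single product. The argument therefore amounts to noting that a pattern of pairs is determined by the pair of its component patterns.
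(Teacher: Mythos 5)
Your proof is correct and follows essentially the same route as the paper's: both reduce $N(\mathcal{F}_1 \times \mathcal{F}_2, X_\ell)$ to the product $N(\mathcal{F}_1,X_\ell) \cdot N(\mathcal{F}_2,X_\ell)$ for an arbitrary $X_\ell$ and then pass to the supremum. The only cosmetic difference is that you establish an injection (giving $\leq$), whereas the paper asserts the count as an exact equality; either suffices for the stated bound.
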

\begin{proof}
$$
G(\mathcal{F}_1 \times \mathcal{F}_2,\ell) 
= 
\sup_{X_\ell} N(\mathcal{F}_1 \times \mathcal{F}_2,X_\ell)
= 
\sup_{X_\ell} \big[N(\mathcal{F}_1,X_\ell) \cdot N(\mathcal{F}_2,X_\ell)\big]
\leq 
G(\mathcal{F}_1,\ell) \cdot G(\mathcal{F}_2,\ell).
$$
\end{proof}

The next Propositions~\ref{prop:stateless-growth}--\ref{prop:bar-growth}
describe properties of the growth with respect to string functions.
%
\begin{restatable}{proposition}{propstatelessgrowth}
  \label{prop:stateless-growth}
  Let $\mathcal{F}$ be a class of functions $\Sigma \to \Gamma$.
  Then, $G(\mathcal{F}^*,\ell) \leq G(\mathcal{F},\ell)$.
\end{restatable}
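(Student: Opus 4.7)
The plan is to exploit the fact that every $f^\ast \in \mathcal{F}^\ast$ depends only on the last letter of its input. This should reduce the computation of the growth of $\mathcal{F}^\ast$ on a sample of strings to the growth of $\mathcal{F}$ on a derived sample of single letters, which immediately yields the bound.

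Concretely, I would fix an arbitrary sample $X_\ell = x_1, \dots, x_\ell$ of strings over $\Sigma$, and for each string $x_i$ let $y_i$ denote its last letter. Let $Y_\ell = y_1, \dots, y_\ell$ be the induced sample of $\ell$ letters from $\Sigma$. By the definition of $f^\ast$ from Definition~\ref{def:function-constructors}, we have $f^\ast(x_i) = f(y_i)$ for every $f \in \mathcal{F}$ and every $i \in [1,\ell]$, so
$$
\langle f^\ast(x_1), \dots, f^\ast(x_\ell) \rangle = \langle f(y_1), \dots, f(y_\ell) \rangle.
$$
As $f$ ranges over $\mathcal{F}$, the left-hand sides range over $\mathcal{F}^\ast(X_\ell)$, while the right-hand sides range over $\mathcal{F}(Y_\ell)$. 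Thus $\mathcal{F}^\ast(X_\ell) = \mathcal{F}(Y_\ell)$ as sets, giving $N(\mathcal{F}^\ast, X_\ell) = N(\mathcal{F}, Y_\ell) \leq G(\mathcal{F}, \ell)$.

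Finally, since $X_\ell$ was arbitrary, I would take the supremum of the left-hand side over all $X_\ell$ to conclude $G(\mathcal{F}^\ast, \ell) \leq G(\mathcal{F}, \ell)$. There is no real obstacle here; the only thing to watch out for is the convention regarding the empty string, but since $G(\mathcal{F}, \ell)$ is a supremum one can simply restrict attention to samples of non-empty strings without loss of generality.
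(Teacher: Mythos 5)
Your proposal is correct and follows essentially the same argument as the paper: both map each string $x_i$ to its last letter, observe that $\mathcal{F}^*(X_\ell)$ equals the pattern set of $\mathcal{F}$ on the induced letter sample, and conclude by bounding $N(\mathcal{F},Y_\ell)$ by $G(\mathcal{F},\ell)$ and taking the supremum over $X_\ell$. No issues to report.
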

\begin{proof}
  Let $X_\ell = x_1, \dots, x_\ell$ be a sequence of strings over $\Sigma$, and
  let $\sigma_i$ be the last letter of $x_i$.
  Since $f^*(x_i) = f(\sigma_i)$ for every function $f \in \mathcal{F}$,
  we have that 
  $\mathcal{F}^*(X_\ell) = \mathcal{F}^*(\sigma_1, \dots, \sigma_\ell) =
  \mathcal{F}(\sigma_1, \dots, \sigma_\ell)$.
  Therefore,
  $$
    N(\mathcal{F}^*,X_\ell) 
    = |\mathcal{F}^*(X_\ell)|
    = |\mathcal{F}(\sigma_1, \dots, \sigma_\ell)|
    = N(\mathcal{F},\sigma_1, \dots, \sigma_\ell)
    \leq G(\mathcal{F},\ell).
  $$
  Since $X_\ell$ is arbitrary, this concludes the proof.
\end{proof}
%
\begin{restatable}{proposition}{propparallelequigrowth}
  \label{prop:parallel-equigrowth}
  Let $\mathcal{F}_1: \Sigma^* \to Y$ and let $\mathcal{F}_2:\Sigma^* \to Z$.
  Then,
  $G(\overline{\mathcal{F}_1 \times \mathcal{F}_2},\ell) =
  G(\overline{\mathcal{F}_1} \times \overline{\mathcal{F}_2},\ell)$.
\end{restatable}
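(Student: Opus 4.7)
The plan is to prove equality by exhibiting a natural bijection between $\overline{\mathcal{F}_1 \times \mathcal{F}_2}(X_\ell)$ and $(\overline{\mathcal{F}_1} \times \overline{\mathcal{F}_2})(X_\ell)$ for every fixed sample $X_\ell = x_1, \dots, x_\ell$, and then taking the supremum on both sides. Intuitively, the two classes carry exactly the same information on each input string $x_i$: one returns a sequence whose elements are pairs (a $\langle Y, Z\rangle$-valued trace), while the other returns a pair whose components are sequences (a trace in $Y^*$ paired with a trace in $Z^*$). Since $\overline{f}$ preserves the length of the input, the rearrangement between these two shapes is purely syntactic and depends neither on $f_1$ nor on $f_2$.

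First I would unfold the definitions from Definition~\ref{def:function-constructors} to write, for $x_i = \sigma_1 \dots \sigma_{n_i}$,
\begin{equation*}
\overline{f_1 \times f_2}(x_i) = \langle f_1(\sigma_1), f_2(\sigma_1)\rangle \dots \langle f_1(x_i), f_2(x_i)\rangle,
\end{equation*}
and
\begin{equation*}
(\overline{f_1} \times \overline{f_2})(x_i) = \bigl\langle f_1(\sigma_1) \dots f_1(x_i),\; f_2(\sigma_1) \dots f_2(x_i) \bigr\rangle.
\end{equation*}
Then I would define the map $\beta_{n_i}: (Y \times Z)^{n_i} \to Y^{n_i} \times Z^{n_i}$ that ``unzips'' a length-$n_i$ sequence of pairs into a pair of sequences of length $n_i$. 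This $\beta_{n_i}$ is a bijection, and by the display above it sends $\overline{f_1 \times f_2}(x_i)$ to $(\overline{f_1} \times \overline{f_2})(x_i)$ for all $f_1, f_2$.

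Next I would lift this pointwise bijection to the sample level by applying it coordinate-wise, obtaining a bijection $B_{X_\ell}$ between $\overline{\mathcal{F}_1 \times \mathcal{F}_2}(X_\ell)$ and $(\overline{\mathcal{F}_1} \times \overline{\mathcal{F}_2})(X_\ell)$. Since $B_{X_\ell}$ does not depend on the particular $f_1, f_2$, it is well defined on the pattern sets, so
\begin{equation*}
N(\overline{\mathcal{F}_1 \times \mathcal{F}_2}, X_\ell) = N(\overline{\mathcal{F}_1} \times \overline{\mathcal{F}_2}, X_\ell).
\end{equation*}
Taking the supremum over $X_\ell$ of both sides gives the claimed equality of growths. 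There is no real obstacle here; the only care needed is to handle the fact that the lengths $n_i$ of the strings in $X_\ell$ may differ across $i$, which is handled by choosing a separate $\beta_{n_i}$ per sample element before taking the Cartesian product of these bijections.
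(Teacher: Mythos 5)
Your proposal is correct and follows essentially the same route as the paper's proof: both unfold the definitions to see that $\overline{f_1 \times f_2}(x_i)$ and $(\overline{f_1} \times \overline{f_2})(x_i)$ carry the same data rearranged, conclude that the pattern sets on any fixed $X_\ell$ have equal cardinality, and take the supremum. Your explicit ``unzip'' bijection $\beta_{n_i}$ just makes precise what the paper states informally as ``the same occurrences of letters, arranged differently.''
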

\begin{proof}
  Consider a function $f_1 \in \mathcal{F}_1$, 
  a function $f_2 \in \mathcal{F}_2$,
  and a sequence $X_\ell = x_1, \dots, x_\ell$ of strings over $\Sigma$.
  We have that
  $$
    (\overline{f_1 \times f_2})(X_\ell)
    = 
    (\overline{f_1 \times f_2})(x_1),
    \dots,
    (\overline{f_1 \times f_2})(x_\ell),
  $$
  and
  $$
    (\overline{f_1} \times \overline{f_2})(X_\ell)
    = 
    (\overline{f_1} \times \overline{f_2})(x_1),
    \dots,
    (\overline{f_1} \times \overline{f_2})(x_\ell).
   $$
  Now, consider $x_i = \sigma_1 \dots \sigma_s$.  We have
  $$
    (\overline{f_1 \times f_2})(x_i) 
    = 
    (f_1 \times f_2)(\sigma_1) \dots
    (f_1 \times f_2)(\sigma_1 \dots \sigma_s)
    = 
    \langle f_1(\sigma_1), f_2(\sigma_1) \rangle 
    \dots
    \langle f_1(\sigma_1 \dots \sigma_s), f_2(\sigma_1
    \dots \sigma_s) \rangle,
  $$
  and
  $$
    (\overline{f_1} \times \overline{f_2})(x_i) 
    = \langle \overline{f_1}(x_i), \overline{f_2}(x_i) \rangle
    = \langle f_1(\sigma_1) \dots f_1(\sigma_1 \dots \sigma_s), 
    f_2(\sigma_1) \dots f_2(\sigma_1 \dots \sigma_s) \rangle.
  $$
  Note that the output of $(\overline{f_1 \times f_2})(x_i)$ and 
  $(\overline{f_1} \times \overline{f_2})(x_i)$ contain the same occurrences of
  letters, arranged differently.  Hence, the same holds for
  $(\overline{f_1 \times f_2})(X_\ell)$
  and
  $(\overline{f_1} \times \overline{f_2})(X_\ell)$.
  Therefore,
  $$
    |\{(\overline{f_1 \times f_2})(X_\ell) \mid f_1 \in
    \mathcal{F}_1\setcomma f_2 \in \mathcal{F}_2 \}|
    = 
    |\{(\overline{f_1} \times \overline{f_2})(X_\ell) \mid f_1 \in
    \mathcal{F}_1\setcomma f_2 \in \mathcal{F}_2 \}|,
  $$
  and hence
  $$
    N(\overline{\mathcal{F}_1 \times \mathcal{F}_2},X_\ell) 
    = N(\overline{\mathcal{F}_1} \times \overline{\mathcal{F}_2},X_\ell).
  $$
  Since $X_\ell$ is arbitrary, this concludes the proof.
\end{proof}

%
\begin{proposition} \label{prop:bar-growth}
  Let $\mathcal{F}$ be a class of functions $\Sigma^* \to \Gamma$.
  Then,
$G(\overline{\mathcal{F}},\ell) \leq G(\mathcal{F},\ell \cdot M)$ where 
$M$ is the maximum length of a string.
\end{proposition}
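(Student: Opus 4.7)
The plan is to directly compare patterns of $\overline{\mathcal{F}}$ on a sample of $\ell$ strings with patterns of $\mathcal{F}$ on the sample of all prefixes of those strings. Fix an arbitrary sequence $X_\ell = x_1, \dots, x_\ell$ of strings over $\Sigma$, each of length at most $M$. By the definition of $\overline{g}$ in Definition~\ref{def:function-constructors}, for each $i$ the value $\overline{f}(x_i)$ is the concatenation of $f$ applied to every non-empty prefix of $x_i$, and there are at most $M$ such prefixes. Hence $\overline{f}(X_\ell)$ is fully determined by the values of $f$ on the set of prefixes of the strings in $X_\ell$.

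Concretely, I would form the sequence $Y$ listing, in order, all non-empty prefixes of $x_1$, then of $x_2$, and so on. Its length is $|Y| = \sum_{i=1}^\ell |x_i| \leq \ell \cdot M$. The key observation is that the map $f \mapsto \overline{f}(X_\ell)$ factors through $f \mapsto f(Y)$: two functions $f, f' \in \mathcal{F}$ that agree on every element of $Y$ produce identical tuples $\overline{f}(X_\ell) = \overline{f'}(X_\ell)$. Therefore the number of distinct patterns satisfies $N(\overline{\mathcal{F}}, X_\ell) \leq N(\mathcal{F}, Y)$.

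It remains to bound $N(\mathcal{F}, Y)$ by $G(\mathcal{F}, \ell \cdot M)$. Since $|Y| \leq \ell \cdot M$, I would argue that the growth function is non-decreasing: any sequence of length $\ell'$ can be extended to a longer sequence by appending arbitrary elements, and the number of distinct patterns cannot decrease, so $G(\mathcal{F}, |Y|) \leq G(\mathcal{F}, \ell \cdot M)$. Combining these inequalities yields $N(\overline{\mathcal{F}}, X_\ell) \leq G(\mathcal{F}, \ell \cdot M)$, and taking the supremum over $X_\ell$ gives the claim.

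This argument is essentially bookkeeping; no step looks like a serious obstacle. The only point requiring minor care is the monotonicity of $G$ in its length argument, which I would either state explicitly as a one-line subclaim or handle by padding $Y$ with repetitions of a single prefix to reach length exactly $\ell \cdot M$, observing that repeating elements cannot increase the pattern count and hence gives the same inequality.
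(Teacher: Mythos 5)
Your proof is correct and follows essentially the same route as the paper's: both reduce the pattern count of $\overline{\mathcal{F}}$ on $X_\ell$ to the pattern count of $\mathcal{F}$ on the concatenated sequence of non-empty prefixes (your $Y$, the paper's $X^{\mathrm{p}}_\ell$) and then bound by $G(\mathcal{F},\ell\cdot M)$ using that the prefix sequence has length at most $\ell\cdot M$. Your explicit handling of the monotonicity of $G$ (via padding) is a small point the paper leaves implicit, but the argument is the same.
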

\begin{proof}
  Consider a string $x = \sigma_1 \dots \sigma_s$ with 
  $\sigma_i \in \Sigma$.
  Consider the sequence of non-empty prefixes of $x$,
  $$x^\mathrm{p} = \sigma_1,\, \sigma_1 \sigma_2,\,
  \dots,\, \sigma_1 \sigma_2 \dots \sigma_s.$$ 
  For every $f \in \mathcal{F}$, the following two identities hold:
  \begin{align}
    \label{eq:prop:bar-growth-1}
    \overline{f}(x) & = f(\sigma_1) \dots f(\sigma_1 \dots
    \sigma_s),
    \\
    \label{eq:prop:bar-growth-2}
    f(x^\mathrm{p}) & = f(\sigma_1), \dots, f(\sigma_1 \dots
    \sigma_s).
  \end{align}
  Now, let $X_\ell = x_1, \dots, x_\ell$ be a sequence of $\ell$ strings over
  $\Sigma$, and let 
  $X^\mathrm{p}_\ell = x_1^\mathrm{p}, \dots, x_\ell^\mathrm{p}$ where
  $x_i^\mathrm{p}$ is the sequence of non-empty prefixes of $x_i$ defined as
  above.  
  Then, the identities~\eqref{eq:prop:bar-growth-1}
  and~\eqref{eq:prop:bar-growth-2} imply
  $N(\overline{\mathcal{F}},X_\ell) = N(\mathcal{F},X^\mathrm{p}_\ell)$.
  Thus,
  $N(\mathcal{F},X^\mathrm{p}_\ell) \leq G(\mathcal{F},\ell \cdot M)$,
  since $X^\mathrm{p}_\ell$ is a sequence of length at most $\ell \cdot M$,
  and hence
  $N(\overline{\mathcal{F}},X^\mathrm{p}_\ell) \leq G(\mathcal{F},\ell \cdot
  M)$.
  Since $X_\ell$ is arbitrary, we conclude that 
  $G(\overline{\mathcal{F}},\ell) \leq G(\mathcal{F},\ell \cdot
  M)$.
\end{proof}

The next Propositions~\ref{prop:bar}--\ref{prop:distributivity} provide
some algebraic identities for functions expressed in terms of composition, cross
product, and our constructors introduced in
Definition~\ref{def:function-constructors}.  
%
\begin{restatable}{proposition}{propbar}
  \label{prop:bar}
  Let $f_1: \Sigma^* \to \Gamma$, and 
  let $f_2: \Gamma^* \to Y$.
  Then,
  $\overline{\overline{f_1} \circ f_2} = \overline{f_1} \circ
  \overline{f_2}$.
\end{restatable}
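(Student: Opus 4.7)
The plan is to prove the identity by unfolding both sides on an arbitrary string and observing that they produce exactly the same sequence of symbols in $Y^*$. Since $\overline{(\cdot)}$ is defined coordinatewise on prefixes (Definition~\ref{def:function-constructors}), everything reduces to recognising that applying $\overline{f_2}$ to the prefix-sequence produced by $\overline{f_1}$ yields the same string as applying $f_2$ directly to each prefix-sequence produced by $\overline{f_1}$.

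More concretely, I would fix an arbitrary $x = \sigma_1 \dots \sigma_n \in \Sigma^*$ and introduce the shorthand $y_i = f_1(\sigma_1 \dots \sigma_i) \in \Gamma$, so that $\overline{f_1}(\sigma_1 \dots \sigma_i) = y_1 \dots y_i$ by definition of the bar constructor. Unfolding the right-hand side first, I would compute
\[
(\overline{f_1} \circ \overline{f_2})(x)
= \overline{f_2}\bigl(\overline{f_1}(x)\bigr)
= \overline{f_2}(y_1 \dots y_n)
= f_2(y_1)\, f_2(y_1 y_2)\, \dots\, f_2(y_1 \dots y_n).
\]
Then unfolding the left-hand side, one peel of the outer bar gives a sequence indexed by prefixes of $x$, and a further application of the definitions of composition and of $\overline{f_1}$ yields
\[
\overline{\overline{f_1} \circ f_2}(x)
= (\overline{f_1} \circ f_2)(\sigma_1)\, \dots\, (\overline{f_1} \circ f_2)(\sigma_1 \dots \sigma_n)
= f_2(y_1)\, \dots\, f_2(y_1 \dots y_n),
\]
which matches the right-hand side symbol for symbol. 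Since $x$ was arbitrary, the two functions are equal.

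There is no real obstacle here; the only thing to watch is being careful with the types (note that $f_1 : \Sigma^* \to \Gamma$ is a string function on the left of the composition, so $\overline{f_1}(\sigma_1 \dots \sigma_i) = f_1(\sigma_1) f_1(\sigma_1 \sigma_2) \dots f_1(\sigma_1 \dots \sigma_i)$ applies $f_1$ to prefixes, not to single letters) and with the order of application in our composition convention $(f \circ g)(x) = g(f(x))$ fixed in the Preliminaries. Once these conventions are applied consistently, the proof is a direct calculation.
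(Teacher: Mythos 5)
Your proof is correct and follows essentially the same route as the paper's: both unfold the two sides on an arbitrary string $\sigma_1 \dots \sigma_n$, use the prefix-wise definition of the bar constructor and the convention $(f \circ g)(x) = g(f(x))$, and observe that both sides reduce to $f_2(y_1)\, f_2(y_1 y_2) \dots f_2(y_1 \dots y_n)$ with $y_i = f_1(\sigma_1 \dots \sigma_i)$. Nothing is missing.
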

\begin{proof}
  \begin{align*}
    & (\overline{\overline{f_1} \circ f_2})(\sigma_1 \dots \sigma_n)
    \\
    & = (\overline{f_1} \circ f_2)(\sigma_1) \dots
    (\overline{f_1} \circ f_2)(\sigma_1 \dots \sigma_n)
    \\
    & = f_2\big(\overline{f_1}(\sigma_1)\big) \dots
    f_2\big(\overline{f_1}(\sigma_1 \dots \sigma_n)\big)
    \\
    & = f_2\big(f_1(\sigma_1)\big) \dots
    f_2\big(f_1(\sigma_1) \dots f_1(\sigma_1 \dots \sigma_n)\big)
    \\
    & = \overline{f_2}\big(f_1(\sigma_1) \dots f_1(\sigma_1 \dots \sigma_n)\big)
    \\
    & = \overline{f_2}\big(\overline{f_1}(\sigma_1 \dots \sigma_n)\big)
    \\
    & = (\overline{f_1} \circ \overline{f_2})(\sigma_1 \dots \sigma_n).
  \end{align*}
\end{proof}
\begin{restatable}{proposition}{propstatelessbar}
  \label{prop:stateless-bar}
  Let $f_1: \Sigma^* \to \Gamma$, and 
  let $f_2: \Gamma \to Y$.
  Then,
  $\overline{f_1 \circ f_2} = \overline{f_1} \circ
  \overline{f_2^*}$.
\end{restatable}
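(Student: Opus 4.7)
The plan is to prove the identity $\overline{f_1 \circ f_2} = \overline{f_1} \circ \overline{f_2^*}$ by a direct unfolding of the three function constructors from Definition~\ref{def:function-constructors}, evaluating both sides on an arbitrary input string $\sigma_1 \dots \sigma_n \in \Sigma^*$ and showing that they yield the same string in $Y^*$. Since all constructors are defined by simple equations, no induction on string length is needed; one pass of symbolic simplification will suffice.

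First I would compute the left-hand side. Applying the overline constructor to the function $f_1 \circ f_2 : \Sigma^* \to Y$ gives
\begin{equation*}
  \overline{f_1 \circ f_2}(\sigma_1 \dots \sigma_n)
  = (f_1 \circ f_2)(\sigma_1) \dots (f_1 \circ f_2)(\sigma_1 \dots \sigma_n),
\end{equation*}
and then expanding the composition pointwise yields
$f_2(f_1(\sigma_1)) \dots f_2(f_1(\sigma_1 \dots \sigma_n))$.

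Next I would compute the right-hand side. By definition of composition,
$(\overline{f_1} \circ \overline{f_2^*})(\sigma_1 \dots \sigma_n)
= \overline{f_2^*}\bigl(\overline{f_1}(\sigma_1 \dots \sigma_n)\bigr)$.
Unfolding $\overline{f_1}$ produces the string $\gamma_1 \dots \gamma_n \in \Gamma^*$ where $\gamma_i = f_1(\sigma_1 \dots \sigma_i)$. Unfolding $\overline{f_2^*}$ on this string gives $f_2^*(\gamma_1) \dots f_2^*(\gamma_1 \dots \gamma_n)$, and then applying the definition of $f_2^*$ (which returns $f_2$ of the last letter of its input) to each prefix $\gamma_1 \dots \gamma_i$ reduces the $i$-th entry to $f_2(\gamma_i) = f_2(f_1(\sigma_1 \dots \sigma_i))$. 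The overall result is therefore $f_2(f_1(\sigma_1)) \dots f_2(f_1(\sigma_1 \dots \sigma_n))$, which matches the left-hand side.

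There is no real obstacle here: the identity is essentially a bookkeeping calculation. The only subtlety worth flagging is the role of the starred constructor, which converts the letter-level function $f_2 : \Gamma \to Y$ into a string-level function $f_2^* : \Gamma^* \to Y$ that extracts the last letter; this is exactly what is needed so that the application of $\overline{f_2^*}$ to the cumulative prefixes produced by $\overline{f_1}$ lines up with the pointwise composition on the left. This mirrors the structure of Proposition~\ref{prop:bar}, the difference being that $f_2$ is stateless, which is precisely why the $(\,\cdot\,)^*$ lifting appears.
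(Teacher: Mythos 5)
Your proposal is correct and follows essentially the same approach as the paper's proof: a direct unfolding of the constructors on an arbitrary string $\sigma_1\dots\sigma_n$, with the key observation that $f_2^*$ applied to each prefix $\gamma_1\dots\gamma_i$ of $\overline{f_1}(\sigma_1\dots\sigma_n)$ extracts $f_2(\gamma_i)=f_2(f_1(\sigma_1\dots\sigma_i))$. The paper presents this as a single chain of equalities from the left-hand side to the right-hand side, whereas you expand both sides separately and compare, but the content is identical.
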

\begin{proof}
  \begin{align*}
    & (\overline{f_1 \circ f_2})(\sigma_1 \dots \sigma_n)
    \\
    & = (f_1 \circ f_2)(\sigma_1) \dots
      (f_1 \circ f_2)(\sigma_1 \dots \sigma_n)
    \\
    & = f_2\big(f_1(\sigma_1)\big) \dots
    f_2\big(f_1(\sigma_1 \dots \sigma_n)\big)
    \\
    & = f^*_2\big(f_1(\sigma_1)\big) \dots
    f^*_2\big(f_1(\sigma_1 \dots \sigma_n)\big)
    \\
    & = f^*_2\big(f_1(\sigma_1)\big) \dots
    f^*_2\big(f_1(\sigma_1) \dots f_1(\sigma_1 \dots \sigma_n)\big)
    \\
    & = \overline{f^*_2}\big(f_1(\sigma_1) \dots f_1(\sigma_1 \dots
    \sigma_n)\big)
    \\
    & = \overline{f_2^*}\big(\overline{f_1}(\sigma_1 \dots \sigma_n)\big)
    \\
    & = (\overline{f_1} \circ \overline{f^*_2})(\sigma_1 \dots \sigma_n).
  \end{align*}
\end{proof}

\begin{proposition}[Left-distributivity of function \emph{composition} over
  \emph{cross product}]
  \label{prop:distributivity}
  Consider three 
  functions $f_1:X \to Y$, $f_2: Y \to Z$, and $f_3: Y \to W$.
  Then,
  $(f_1 \circ f_2) \times (f_1 \circ f_3) = f_1 \circ (f_2 \times f_3)$.
\end{proposition}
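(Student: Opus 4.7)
The plan is to verify the identity pointwise: since both sides of the claimed equation are functions from $X$ to $Z \times W$, it suffices to show they agree on every $x \in X$. I would apply each side to an arbitrary $x$ and expand using the definitions of $\circ$ and $\times$ given in the Preliminaries, recalling the paper's convention $(f \circ g)(x) = g(f(x))$ and $(f \times h)(x) = \langle f(x), h(x) \rangle$.

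Applying the left-hand side $(f_1 \circ f_2) \times (f_1 \circ f_3)$ to $x$ produces the pair whose first component is obtained by running $f_1$ and then $f_2$ on $x$, and whose second component is obtained by running $f_1$ and then $f_3$ on $x$. Applying the right-hand side $f_1 \circ (f_2 \times f_3)$ to $x$ first runs $f_1$ and then feeds the result $f_1(x)$ to $f_2 \times f_3$, producing the pair whose components are $f_2(f_1(x))$ and $f_3(f_1(x))$. The two expressions therefore coincide.

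There is no real obstacle here: the identity is an immediate consequence of unfolding definitions, and no auxiliary lemma is needed. The only mild subtlety is to apply the paper's left-to-right composition convention consistently on both sides; once one observes that $f_1$ is the \emph{inner} function on each side and that the cross product simply duplicates the shared input $f_1(x)$ before dispatching it to $f_2$ and $f_3$, the equality is transparent. Note that the symmetric right-distributivity statement would \emph{not} hold in general, since the two copies of the outer function would need to agree on distinct inputs; our statement avoids this by sharing the single inner function $f_1$.
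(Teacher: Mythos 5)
Your proposal is correct and matches the paper's own proof exactly: both unfold the definitions of $\circ$ and $\times$ pointwise to show each side of the identity sends $x$ to $\langle f_2(f_1(x)), f_3(f_1(x))\rangle$. No further comment is needed.
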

\begin{proof}
  We have
  \begin{align*}
    & [(f_1 \circ f_2) \times (f_1 \circ f_3)](x)
    \\
    & = \big\langle (f_1 \circ f_2)(x), (f_1 \circ f_3)(x) \big\rangle
    \\
    & = \big\langle f_2(f_1(x)), f_3(f_1(x)) \big\rangle
    \\
    & = (f_2 \times f_3)(f_1(x))
    \\
    & = (f_1 \circ (f_2 \times f_3))(x).
  \end{align*}
  This concludes the proof.
\end{proof}

\subsection{Proof of
Lemma~\ref{lemma:function-composition-generalised-automaton} (Functional
Description of an Automaton)}

\lemmafunctioncompositiongeneralisedautomaton*
\begin{proof}
  The function implemented by $A$ is the one implemented by the induced
  automaton 
  $A' = \langle \Sigma, Q, \delta_{J,\phi}, q_\mathrm{init}, \Gamma, \theta_J
  \rangle$, where
  $\delta_{J,\phi}(q,\sigma) =
  \delta(q,\phi(\pi_J(\sigma)))$, and the output function is 
  $\theta_J(q,\sigma) = \theta(q,\pi_J(\sigma))$.
  In turn, the function implemented by $A'$ is
  $A'(\sigma_1 \dots \sigma_m) =
  \theta_J(\delta_{J,\phi}(q_\mathrm{init},\sigma_1 \dots
  \sigma_{m-1}),\sigma_m)$.
  Thus,
  the function implemented by $A$ is
  \begin{align} \label{eq:lemmafunctioncompositiongeneralisedautomaton-1}
    A(\sigma_1 \dots \sigma_m) & =
  \theta_J(\delta_{J,\phi}(q_\mathrm{init},\sigma_1 \dots
  \sigma_{m-1}),\sigma_m).
  \end{align}
  Observe that
  \begin{align*}
    \phi(\pi_J(\sigma_1)) \dots \phi(\pi_J(\sigma_m))
    = \overline{\phi^*}(\pi_J(\sigma_1) \dots \pi_J(\sigma_m))
    = \overline{\phi^*}(\overline{\pi_J^*}(\sigma_1 \dots \sigma_m)).
  \end{align*}
  Thus,
  \begin{align*}
    \delta_{J,\phi}(q_\mathrm{init},\sigma_1 \dots \sigma_m) 
    = \delta(q_\mathrm{init}, \overline{\phi^*}(\overline{\pi_J^*}(\sigma_1 \dots
    \sigma_m))) 
    = D(\overline{\phi^*}(\overline{\pi_J^*}(\sigma_1 \dots \sigma_m))).
  \end{align*}
  Now, 
  going back to \eqref{eq:lemmafunctioncompositiongeneralisedautomaton-1},
  we have
  \begin{align*}
    A(\sigma_1 \dots \sigma_m)  
    & = \theta_J\big(\delta_{J,\phi}(q_\mathrm{init},\sigma_1 \dots
    \sigma_{m-1}),\sigma_m\big)
    \\
    & =
    \theta_J\big(D(\overline{\phi^*}(\overline{\pi_J^*}(\sigma_1 \dots
    \sigma_{m-1}))),\sigma_m\big)
    \\
    & =
    \theta_J\big(D^\popfunc(\overline{\phi^*}(\overline{\pi_J^*}(\sigma_1 \dots
    \sigma_m))),\sigma_m\big)
    \\
    & =
    \theta\big(D^\popfunc(\overline{\phi^*}(\overline{\pi_J^*}(\sigma_1 \dots
    \sigma_m))),\pi_J(\sigma_m)\big)
    \\
    & =
    \theta\big(D^\popfunc(\overline{\phi^*}(\overline{\pi_J^*}(\sigma_1 \dots
    \sigma_m))),\pi_J^*(\sigma_1 \dots \sigma_m)\big)
    \\
    & =
    \theta\big(D^\popfunc(\overline{\phi^*}(\overline{\pi_J^*}(\sigma_1 \dots
    \sigma_m))),I^*(\overline{\pi_J^*}(\sigma_1 \dots \sigma_m))\big)
    \\
    & =
    \theta\big(D^\popfunc((\overline{\pi_J^*} \circ \overline{\phi^*})(\sigma_1
    \dots \sigma_m)),(\overline{\pi_J^*} \circ I^*)(\sigma_1 \dots
    \sigma_m)\big)
    \\
    & =
    \theta\big((\overline{\pi_J^*} \circ \overline{\phi^*} \circ
    D^\popfunc)(\sigma_1 \dots \sigma_m),(\overline{\pi_J^*} \circ I^*)(\sigma_1
    \dots \sigma_m)\big)
    \\
    & =
    \theta\big(((\overline{\pi_J^*} \circ \overline{\phi^*} \circ
    D^\popfunc) \times (\overline{\pi_J^*} \circ I^*))(\sigma_1 \dots
    \sigma_m)\big)
    \\
    & =
    \big(\big((\overline{\pi_J^*} \circ \overline{\phi^*} \circ D^\popfunc)
    \times (\overline{\pi_J^*} \circ I^*) \big) \circ \theta\big)
    (\sigma_1 \dots \sigma_m)
    \\
    & =
    \big(\overline{\pi_J^*} \circ \big((\overline{\phi^*} \circ D^\popfunc)
    \times I^* \big) \circ \theta\big)(\sigma_1 \dots \sigma_m).
  \end{align*}
  The last equality is by Propostion~\ref{prop:distributivity}.
  This concludes the proof.
\end{proof}

\subsection{Proof of Theorem~\ref{theorem:vc-generalised-automata} (Growth and
Sample Complexity of Infinite Classes of Automata)}

%
%
We prove Theorem~\ref{theorem:vc-generalised-automata} in two lemmas.
First, we prove the bound on the growth in 
Lemma~\ref{lemma:growth-generalised-automata}, then we prove the bound on the
dimension in Lemma~\ref{lemma:vc-dimension-automota}.
Then, the asymptotic bound on the sample complexity follows immediately by
Propostion~\ref{prop:sample-complexity-derivation}.

%
\begin{lemma}
  \label{lemma:growth-generalised-automata}
  Let $\mathcal{A}$ be a class of automata
  $\mathcal{A}(\Phi,\Delta,\Theta;a,m,\Pi,\Gamma)$, and
  let $M$ be the maximum length of a string.
  The growth of $\mathcal{A}$ is bounded as:
  $$
  G(\mathcal{A},\ell) 
  \leq 
  \left|\pi^a_m\right|
  \cdot 
  \left|\Delta \right|
  \cdot 
  G(\Phi,\ell \cdot M) 
  \cdot 
  G(\Theta,\ell).
  $$
\end{lemma}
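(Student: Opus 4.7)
The plan is to lift the functional decomposition $A = \overline{\pi^*_J} \circ ((\overline{\phi^*} \circ D^\popfunc) \times I^*) \circ \theta$ provided by Lemma~\ref{lemma:function-composition-generalised-automaton} from a single automaton to the entire class $\mathcal{A}$, and then bound the growth of each combinator factor using the basic propositions on growth of composition and cross product (Propositions~\ref{prop:growth-composition} and~\ref{prop:growth-parallel-composition}) together with the identities for the $\overline{\cdot}$ and $(\cdot)^*$ constructors (Propositions~\ref{prop:stateless-growth} and~\ref{prop:bar-growth}). Since the four ingredients $\pi_J$, $\phi$, $D$, $\theta$ are chosen independently in the definition of $\mathcal{A}$, the class $\mathcal{A}$ is a subclass of the composition and cross product of the four classes obtained by letting each ingredient range, and by monotonicity of the growth it suffices to bound the growth of that larger class.

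Applying Propositions~\ref{prop:growth-composition} and~\ref{prop:growth-parallel-composition} to the decomposition yields a product of four factors, one per class of ingredients, which I would then bound one at a time. The class of projection-based maps is finite with at most $|\pi_m^a|$ members, so its growth is at most $|\pi_m^a|$. The class induced by $\Delta^\popfunc$ is again finite, of cardinality at most $|\Delta|$, contributing the factor $|\Delta|$. The class $\{I^*\}$ is a singleton and has growth $1$, so it drops out. For the stateful factor corresponding to $\overline{\phi^*}$ with $\phi \in \Phi$, a first application of Proposition~\ref{prop:bar-growth} replaces $\overline{\cdot}$ by an inflation of the sample size by $M$, and a subsequent application of Proposition~\ref{prop:stateless-growth} removes the $(\cdot)^*$, leaving $G(\Phi, \ell \cdot M)$. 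Finally, the class of output functions contributes $G(\Theta, \ell)$. Multiplying the four bounds delivers the claimed inequality.

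The only real subtlety is the type-theoretic bookkeeping: each constructor changes alphabets, and I must verify that every invocation of a growth proposition is applied to a legally-typed class of functions. A related minor point is that not every tuple of independently chosen ingredients yields a well-formed automaton, but this is harmless since passing to all type-consistent combinations only loosens the upper bound. Once the types line up, the calculation is a direct chain of applications of the basic propositions and requires no further quantitative argument.
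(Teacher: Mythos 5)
Your proposal is correct and follows essentially the same route as the paper: it invokes the functional decomposition of Lemma~\ref{lemma:function-composition-generalised-automaton}, peels off the factors via Propositions~\ref{prop:growth-composition} and~\ref{prop:growth-parallel-composition}, bounds the finite classes $\pi_m^a$ and $\Delta$ by their cardinalities, drops the singleton $I^*$, and handles the $\overline{\Phi^*}$ factor with Propositions~\ref{prop:bar-growth} and~\ref{prop:stateless-growth} to obtain $G(\Phi,\ell\cdot M)$. The remark that passing to all type-consistent combinations of ingredients only loosens the bound is a sound way to make explicit what the paper leaves implicit in writing the class-level decomposition.
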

\begin{proof}
  By Lemma~\ref{lemma:function-composition-generalised-automaton}
  the class of functions implemented by $\mathcal{A}$ can be expressed as
  \begin{equation*}
    \mathcal{A} 
    = 
    \overline{(\pi_m^a)^*} 
     \circ 
     \left(
         (\overline{\Phi^*} \circ \Delta^{\popfunc}) \times I^*
     \right)
     \circ 
     \Theta.  
  \end{equation*}
Then,
  \begin{align*}
    G(\mathcal{A},\ell) 
    & = 
    G\left(
        \overline{(\pi_m^a)^*} 
         \circ 
         \left(
             (\overline{\Phi^*} \circ \Delta^{\popfunc}) \times I^*
         \right)
         \circ 
         \Theta, \ell
    \right)
    &\mbox{ by Lemma~\ref{lemma:function-composition-generalised-automaton},}
  \\
    &\leq 
    G\left(
        \overline{(\pi_m^a)^*} 
         \circ 
         \left(
             (\overline{\Phi^*} \circ \Delta^{\popfunc}) \times I^*
         \right), \ell 
    \right)
    \cdot 
    G(\Theta,\ell)
    &\mbox{by Proposition~\ref{prop:growth-composition},}
  \\
    &\leq 
    G\left(\overline{(\pi_m^a)^*}, \ell\right)
    \cdot 
    G\left(
        (\overline{\Phi^*} \circ \Delta^{\popfunc}) \times I^*, \ell 
    \right)
    \cdot
    G(\Theta,\ell)
    &\mbox{by Proposition~\ref{prop:growth-composition},}
  \\
    & \leq
    \left|\pi_m^a\right|
    \cdot 
    G\left(
        (\overline{\Phi^*} \circ \Delta^{\popfunc}) \times I^*, \ell 
    \right)
    \cdot
    G(\Theta,\ell)
    &\mbox{since $\pi_m^a$ is finite,}
  \\
    & \leq
    \left|\pi_m^a\right|
    \cdot 
    G\left(\overline{\Phi^*} \circ \Delta^{\popfunc}, \ell \right)
    \cdot 
    G\left(I^*, \ell \right)
    \cdot 
    G(\Theta,\ell)
    &\mbox{by Proposition~\ref{prop:growth-parallel-composition},}
  \\
    & =
    \left|\pi_m^a\right|
    \cdot 
    G\left(\overline{\Phi^*} \circ \Delta^{\popfunc}, \ell \right)
    \cdot 
    G(\Theta,\ell)
    &\mbox{since $I^*$ is a single function,}
  \\
    & \leq
    \left|\pi_m^a\right|
    \cdot 
    G\left(\overline{\Phi^*}, \ell\right)
    \cdot 
    G\left(\Delta^{\popfunc}, \ell \right)
    \cdot 
    G\left(\Theta,\ell\right)
    &\mbox{by Proposition~\ref{prop:growth-composition},}
  \\
    & \leq
    \left|\pi_m^a\right|
    \cdot 
    \left|\Delta\right|
    \cdot 
    G\left(\overline{\Phi^*}, \ell\right)
    \cdot 
    G\left(\Theta,\ell\right)
    &\mbox{since $\Delta$ is finite,}
  \\
    &\leq 
    \left|\pi_m^a\right|
    \cdot 
    \left|\Delta\right|
    \cdot 
    G(\Phi^*, \ell \cdot M) 
    \cdot 
    G(\Theta,\ell)
    &\mbox{ by Proposition~\ref{prop:bar-growth}, }
  \\
    &\leq 
    \left|\pi_m^a\right|
    \cdot 
    \left|\Delta\right|
    \cdot 
    G(\Phi, \ell \cdot M) 
    \cdot
    G(\Theta,\ell)
    &\mbox{ by Proposition~\ref{prop:stateless-growth}. }
  \end{align*}
  This proves the lemma.
\end{proof}

\begin{lemma}
  \label{lemma:vc-dimension-automota}
  Let $\mathcal{A}$ be a class of automata
  $\mathcal{A}(\Phi,\Delta,\Theta;a,m,\Pi,\Gamma)$, 
  let $M$ be the maximum length of a string, and
  let $w = \log\left|\pi_m^a\right| + \log\left|\Delta\right| +
  \dimension(\Phi) + \dimension(\Theta) \geq 2$.
  The dimension of $\mathcal{A}$ is bounded as:
  \begin{align*}
    \dimension(\mathcal{A}) 
    & 
    \leq 2 \cdot w \cdot \log(w \cdot e \cdot M \cdot |\Pi| \cdot |\Gamma|).
  \end{align*}
\end{lemma}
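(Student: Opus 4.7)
The plan is to convert the growth bound from Lemma~\ref{lemma:growth-generalised-automata} into a dimension bound through the standard implicit-inequality argument. Let $h = \dimension(\mathcal{A})$. Recall that $\dimension(\mathcal{A})$ is by definition the VC dimension of $\mathcal{A}_{\mathrm{bin}}$, so $G(\mathcal{A}_{\mathrm{bin}},h)=2^h$, and by Proposition~\ref{prop:inequality-two-growths} we get $2^h \leq G(\mathcal{A},h)$. The strategy is to upper-bound $G(\mathcal{A},h)$ by a polynomial-like expression in $h$, take logarithms, and then solve the resulting transcendental inequality $h \leq f(h)$ to extract an explicit upper bound on $h$.

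First, I will plug the output of Lemma~\ref{lemma:growth-generalised-automata} into $2^h \leq G(\mathcal{A},h)$ and apply Proposition~\ref{prop:bound-growth-vc-haussler} to the stateless factors: $G(\Phi,h\cdot M) \leq (e \cdot h \cdot M \cdot |\Pi|)^{\dimension(\Phi)}$ and $G(\Theta,h) \leq (e \cdot h \cdot |\Gamma|)^{\dimension(\Theta)}$. Taking $\log_2$ on both sides yields
\begin{equation*}
h \;\leq\; \log|\pi_m^a| + \log|\Delta| + \dimension(\Phi)\log(e\,h\,M\,|\Pi|) + \dimension(\Theta)\log(e\,h\,|\Gamma|).
\end{equation*}
Since $M,|\Pi|,|\Gamma|\geq 1$ and $\log(e\,h\,M\,|\Pi|\,|\Gamma|)\geq 1$ under the mild assumption that $h\geq 1$, I can upper-bound each of the two $\log$-free terms $\log|\pi_m^a|$ and $\log|\Delta|$ by themselves multiplied by $\log(e\,h\,M\,|\Pi|\,|\Gamma|)$, and also replace the smaller arguments of the logarithms by the common larger one. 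Recalling the definition of $w$, this consolidates to the single implicit bound
\begin{equation*}
h \;\leq\; w\cdot \log(e\cdot h\cdot M\cdot|\Pi|\cdot|\Gamma|).
\end{equation*}

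The last step is the familiar trick of unwinding an inequality of the form $h \leq A\log(Bh)$ to an explicit bound of the form $h \leq 2A\log(AB)$, here with $A=w$ and $B=e\cdot M\cdot|\Pi|\cdot|\Gamma|$. The clean way is by contradiction: assume $h > 2w\log(wB)$ and substitute back into the implicit inequality; the resulting chain $h/w > 2\log(wB)$ combined with $\log(Bh) < \log(B)+\log h$ and monotonicity produces $\log h > 2\log(wB) - \log B = \log(w^2 B)$, i.e.\ $h > w^2 B$, and then plugging this back into $h \leq w\log(Bh)$ (noting $\log(Bh) \leq 2\log(wB)$ when $h\leq w^2 B\cdot(wB)$ etc.) yields the contradiction. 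The bookkeeping uses $w\geq 2$ and the fact that $\log\log(wB) \leq \log(wB)$ to absorb the doubled logarithm factor, yielding the stated bound
\begin{equation*}
\dimension(\mathcal{A}) \;=\; h \;\leq\; 2\cdot w\cdot \log(w\cdot e\cdot M\cdot|\Pi|\cdot|\Gamma|).
\end{equation*}

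The main obstacle is the final transcendental-inversion step: the argument must correctly absorb the four separate contributions (projection, semiautomaton, input-function, output-function) into the single factor $w$, while also ensuring the logarithm arguments are merged into $w\cdot e\cdot M\cdot|\Pi|\cdot|\Gamma|$ without losing tightness. The hypothesis $w\geq 2$ and nonnegativity of all logs are exactly what make the absorption work; everything else in the proof is a mechanical application of Lemma~\ref{lemma:growth-generalised-automata} and Proposition~\ref{prop:bound-growth-vc-haussler}.
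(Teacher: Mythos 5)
Your proposal is correct and follows essentially the same route as the paper: combine Lemma~\ref{lemma:growth-generalised-automata} with Proposition~\ref{prop:bound-growth-vc-haussler} and Proposition~\ref{prop:inequality-two-growths}, collect the four contributions into the single factor $w$ and the logarithm arguments into $w \cdot e \cdot M \cdot |\Pi| \cdot |\Gamma|$, and then resolve the implicit inequality $h \leq w\log(Bh)$. The only cosmetic difference is that the paper verifies directly that the candidate value $\ell = 2 w \log(w e M |\Pi| |\Gamma|)$ satisfies $G(\mathcal{A},\ell) < 2^{\ell}$ via a chain of sufficient conditions, whereas you invert the transcendental inequality by contradiction; both hinge on the same condition $w \geq 2$ (so that $w e C > 2\log(w e C)$).
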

\begin{proof}
  By Lemma~\ref{lemma:growth-generalised-automata} the upper bound on the 
  growth $G(\mathcal{A},\ell)$ is 
  $$
    G(\mathcal{A},\ell) 
    \leq 
    \left|\pi^a_m\right|
    \cdot 
    \left|\Delta\right|
    \cdot 
    G(\Phi,\ell \cdot M) 
    \cdot 
    G(\Theta,\ell).
  $$
  Applying Proposition~\ref{prop:bound-growth-vc-haussler} 
  to
  $G(\Phi,\ell \cdot M)$ and $G(\Theta,\ell)$ we obtain,
  $$
    G(\mathcal{A},\ell) 
    \leq 
    \left|\pi_m^a\right| 
    \cdot
    \left|\Delta\right| 
    \cdot
    \left(e \cdot \ell \cdot M \cdot |\Pi|\right)^h
    \cdot  
    \left(e \cdot \ell \cdot |\Gamma|\right)^g.
  $$
We loosen the bound on $G(\mathcal{A},\ell)$ by collecting the factors with 
an exponent,
  $$
    G(\mathcal{A},\ell) 
    \leq 
    \left|\pi_m^a\right| 
    \cdot
    \left|\Delta\right| 
    \cdot 
    \left(
        \ell \cdot e \cdot M \cdot |\Pi| \cdot |\Gamma|
    \right)^{h + g}.
  $$
Next we show the claimed upper bound on the dimension of $\mathcal{A}$.
By the definition of dimension,
in order to show that a number $\ell$ is an upper bound on the dimension of
$\mathcal{A}$, 
it suffices to show that $G(\mathcal{A}_\mathrm{bin},\ell) < 2^\ell$---note that 
$G(\mathcal{A}_\mathrm{bin},\ell) = G(\mathcal{A},\ell)$ when the output
alphabet has cardinality two.
Then, since $G(\mathcal{A}_\mathrm{bin},\ell) \leq G(\mathcal{A},\ell)$ by
Proposition~\ref{prop:inequality-two-growths}, a number $\ell$
satisfies $G(\mathcal{A}_\mathrm{bin},\ell) < 2^\ell$ if it satisfies 
$G(\mathcal{A},\ell) < 2^\ell$, and in turn if it satisfies
  $$
    \left|\pi_m^a\right| 
    \cdot
    \left|\Delta\right| 
    \cdot 
    \left(
        \ell \cdot e \cdot M \cdot |\Pi| \cdot |\Gamma|
    \right)^{h + g}
    < 2^{\ell}.
  $$
We show that the former inequality is satisfied for
$
\ell 
= 
2 \cdot w \cdot \log(w \cdot e \cdot M \cdot |\Pi| \cdot |\Gamma|)
= 
2 \cdot w \cdot \log(w \cdot e \cdot C)
$,
where we introduce $C = M \cdot |\Pi| \cdot |\Gamma|$ for brevity.
We keep proceeding by sufficient conditions. Specifically, at each step, 
either
we replace the r.h.s.\ with a smaller expression, 
we simplify a common term between the two sides, or 
we just rewrite an expression in an equivalent form.
\begin{align*}
    \left|\pi_m^a\right| 
    \cdot
    \left|\Delta\right| 
    \cdot 
    \left(
        \ell \cdot e \cdot M \cdot |\Pi| \cdot |\Gamma|
    \right)^{h + g}
    &< 2^{\ell}
\\
    \left|\pi_m^a\right| 
    \cdot
    \left|\Delta\right| 
    \cdot 
    \left(
        \ell \cdot e \cdot C
    \right)^{h + g}
    &< 2^{\ell}
\\
    \left|\pi_m^a\right| 
    \cdot
    \left|\Delta\right| 
    \cdot 
    \big(
        2 \cdot w \cdot \log(w \cdot e \cdot C) \cdot e \cdot C
    \big)^{h + g}
    &< 
    2^{2 \cdot w \cdot \log(w \cdot e \cdot C)},
\\
    \left|\pi_m^a\right| 
    \cdot
    \left|\Delta\right| 
    \cdot 
    \big(
        2 \cdot w \cdot \log(w \cdot e \cdot C) \cdot e \cdot C
    \big)^{h + g}
    &< 
    (w \cdot e \cdot C)^{2 \cdot (h+g)} 
    \cdot 
    (w \cdot e \cdot C)^{
        2 \cdot 
        (\ln(\left|\Delta\right| \cdot \left|\pi_m^a\right|)/\ln 2))},
\\
    \left|\pi_m^a\right| 
    \cdot
    \left|\Delta\right| 
    \cdot 
    \big(
        2 \cdot w \cdot \log(w \cdot e \cdot C) \cdot e \cdot C
    \big)^{h + g}
    &< 
    (w \cdot e \cdot C)^{2 \cdot (h+g)} 
    \cdot 
    (w \cdot e \cdot C)^{
        2 \cdot (\ln(\left|\Delta\right| \cdot \left|\pi_m^a\right|))},
\\
    \left|\pi_m^a\right| 
    \cdot
    \left|\Delta\right| 
    \cdot 
    \big(
        2 \cdot w \cdot \log(w \cdot e \cdot C) \cdot e \cdot C
    \big)^{h + g}
    &< 
    (w \cdot e \cdot C)^{2 \cdot (h+g)} 
    \cdot 
    \left|\pi_m^a\right|
    \cdot
    \left|\Delta\right|,
\\
    \big(
        2 \cdot w \cdot \log(w \cdot e \cdot C) \cdot e \cdot C
    \big)^{h + g}
    &< 
    (w \cdot e \cdot C)^{2 \cdot (h+g)} 
\\
    2 \cdot \log(w \cdot e \cdot C)^{h + g}
    &< 
    (w \cdot e \cdot C)^{h+g} 
\\
    1 
    &< 
    \bigg(
        \frac{(w \cdot e \cdot C)}{2 \cdot \log(w \cdot e \cdot C)}
    \bigg)^{h+g}
\end{align*}
The term on the r.h.s.\ is strictly greater than $1$
whenever $w \geq 2$, considering that $C \geq 1$.
This proves the lemma.
\end{proof}

\subsection{Proof of Lemma~\ref{lemma:function-compoisition-of-cascade}
(Functional Description of an Automata Cascade)}

\lemmafunctioncompoisitionofcascade*
\begin{proof}
  Let 
  $C_d = A_1 \cascade \cdots \cascade A_d = (A_1, \dots, A_d)$.
  We show by induction on $d \geq 1$ that
  \begin{equation*}
    \overline{(I^* \times A_1)} \circ \dots \circ
    \overline{(I^* \times A_{d-1})} \circ A_d.
  \end{equation*}
  considering $C_1 = A_1$ as the base case, which holds trivially.
  In the inductive case, $d \geq 2$, and we assume that the function
  implemented by the cascade 
  $C_{d-1} = A_1 \cascade \cdots \cascade A_{d-1}$ can be expressed as
  \begin{equation*}
    \overline{(I^* \times A_1)} \circ \dots \circ
    \overline{(I^* \times A_{d-2})} \circ A_{d-1}.
  \end{equation*}
  From the definition of cascade, at each step, the input to component $A_d$
  consists of the input to the component $A_{d-1}$ together with the output of
  $A_{d-1}$.
  Over an execution, the sequence of such inputs to $A_d$ is the function
  \begin{equation*}
    \overline{(I^* \times A_1)} \circ \dots \circ
    \overline{(I^* \times A_{d-2})} \circ (\overline{I^* \times A_{d-1}}).
  \end{equation*}
  Then, the overall output of the casacade is returned by $A_d$ applied to such
  an input string, and hence the function of $C_d$ is
  \begin{equation*}
    \overline{(I^* \times A_1)} \circ \dots \circ
    \overline{(I^* \times A_{d-2})} \circ (\overline{I^* \times A_{d-1}}) \circ
    A_d.
  \end{equation*}
  This concludes the proof.
\end{proof}

\subsection{Proof of Theorem~\ref{theorem:v-dimension-cascades} (Growth and
Sample Complexity of Infinite Classes of Automata Cascades)}

%
We prove Theorem~\ref{theorem:v-dimension-cascades} in two lemmas.
First, we prove the bound on the growth of cascades in 
Lemma~\ref{lemma:growth-cascades}, then we prove the bound on the
dimension of cascades in Lemma~\ref{lemma:vc-dimension-cascades}.
Then, the asymptotic bound on the sample complexity follows immediately by
Propostion~\ref{prop:sample-complexity-derivation}.

\begin{lemma}
  \label{lemma:growth-cascades}
  Let $\mathcal{C}$ be a class 
  $\mathcal{C} = \mathcal{A}_1 \cascade \dots \cascade \mathcal{A}_d$ 
  of cascades of automata
  $
  \mathcal{A}_i = 
  \mathcal{A}(\Phi_i,\Delta_i,\Theta_i;a_i,m_i,\Pi_i,\Gamma_i)
  $, 
  and let $M$ be the maximum length of a string.
  The growth of\/ $\mathcal{C}$ is bounded as:
  $$
    G(\mathcal{C}, \ell) 
    \leq 
    \prod_{i=1}^d 
    \left|\pi_{m_i}^{a_i}\right|
    \cdot 
    \left|\Delta_i \right| 
    \cdot
    G(\Phi_i,\ell \cdot M) 
    \cdot
    G(\Theta_i,\ell \cdot M).
  $$
\end{lemma}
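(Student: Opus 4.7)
The plan is to combine the functional description of cascades from Lemma~\ref{lemma:function-compoisition-of-cascade} with the description of single automata from Lemma~\ref{lemma:function-composition-generalised-automaton}, and then bound the growth of the resulting expression by applying the growth propositions for composition and cross product (Propositions~\ref{prop:growth-composition}, \ref{prop:growth-parallel-composition}, \ref{prop:stateless-growth}, \ref{prop:parallel-equigrowth}, \ref{prop:bar-growth}) together with the algebraic identities for pushing $\overline{\cdot}$ through composition (Propositions~\ref{prop:bar} and~\ref{prop:stateless-bar}).

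First I would write
$$\mathcal{C} \;=\; \overline{(I^* \times \mathcal{A}_1)} \circ \cdots \circ \overline{(I^* \times \mathcal{A}_{d-1})} \circ \mathcal{A}_d$$
using Lemma~\ref{lemma:function-compoisition-of-cascade}, and then apply Proposition~\ref{prop:growth-composition} iteratively to obtain
$$G(\mathcal{C}, \ell) \;\leq\; G(\mathcal{A}_d, \ell) \cdot \prod_{i=1}^{d-1} G\bigl(\overline{(I^* \times \mathcal{A}_i)}, \ell\bigr).$$
The last factor is handled directly by Lemma~\ref{lemma:growth-generalised-automata} (part (i) of Theorem~\ref{theorem:vc-generalised-automata}), producing $|\pi_{m_d}^{a_d}| \cdot |\Delta_d| \cdot G(\Phi_d, \ell \cdot M) \cdot G(\Theta_d, \ell)$, which is at most the $i=d$ term in the claim.

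For each intermediate factor $\overline{(I^* \times \mathcal{A}_i)}$ with $i<d$, I would use Proposition~\ref{prop:parallel-equigrowth} to replace its growth by that of $\overline{I^*} \times \overline{\mathcal{A}_i}$, then Proposition~\ref{prop:growth-parallel-composition} together with the fact that $\overline{I^*}$ is a single function to reduce it to $G(\overline{\mathcal{A}_i}, \ell)$. To bound this last quantity without invoking Lemma~\ref{lemma:growth-generalised-automata} a second time, I would substitute the decomposition of $\mathcal{A}_i$ from Lemma~\ref{lemma:function-composition-generalised-automaton} and algebraically normalize the expression using Propositions~\ref{prop:bar} and~\ref{prop:stateless-bar}, so that each letter-level class ($\Phi_i^*$ and $\Theta_i^*$) ends up under exactly one $\overline{\cdot}$. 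A single application of Proposition~\ref{prop:bar-growth} per such class then yields $G(\overline{\Phi_i^*}, \ell) \leq G(\Phi_i^*, \ell \cdot M) \leq G(\Phi_i, \ell \cdot M)$ (the second inequality via Proposition~\ref{prop:stateless-growth}), and analogously $G(\overline{\Theta_i^*}, \ell) \leq G(\Theta_i, \ell \cdot M)$, while the finite projection and semiautomaton classes contribute the factors $|\pi_{m_i}^{a_i}|$ and $|\Delta_i|$.

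The main obstacle is careful bookkeeping of sample sizes through the nested compositions. A naive approach that first applies Proposition~\ref{prop:bar-growth} to obtain $G(\mathcal{A}_i, \ell \cdot M)$ and then invokes Lemma~\ref{lemma:growth-generalised-automata} would produce the looser factor $G(\Phi_i, \ell \cdot M^2)$ rather than the claimed $G(\Phi_i, \ell \cdot M)$, because bar-growth would effectively be applied twice. The algebraic normalization via Propositions~\ref{prop:bar} and~\ref{prop:stateless-bar} is precisely what lets us apply bar-growth at most once to each letter-level class, keeping the effective sample size at $\ell \cdot M$. Once this is in place, collecting the individual bounds across $i = 1, \dots, d$ yields the claimed product form.
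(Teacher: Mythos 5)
Your proposal is correct and follows essentially the same route as the paper's proof: decompose via Lemma~\ref{lemma:function-compoisition-of-cascade}, reduce each intermediate factor to $G(\overline{\mathcal{A}_i},\ell)$ via Propositions~\ref{prop:parallel-equigrowth} and~\ref{prop:growth-parallel-composition}, and then push the bar through the single-automaton decomposition with Propositions~\ref{prop:bar} and~\ref{prop:stateless-bar} so that Proposition~\ref{prop:bar-growth} is applied only once per letter-level class. Your observation that the naive route would yield the looser $G(\Phi_i,\ell\cdot M^2)$ is exactly the subtlety the paper's derivation of the bound on $G(\overline{\mathcal{A}},\ell)$ is designed to avoid.
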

\begin{proof}
  By Lemma~\ref{lemma:function-compoisition-of-cascade} the class of 
  functions implemented by $\mathcal{C}$ can be expressed as,
  $$
    \mathcal{C} 
    = 
    \overline{(I^* \times \mathcal{A}_1)} \circ \dots \circ
    \overline{(I^* \times \mathcal{A}_{d-1})} \circ A_d.
  $$
  Then,
  \begin{align*}
      G(\mathcal{C},\ell) 
      &= 
      G\left(
        \overline{(I^* \times \mathcal{A}_1)} \circ \dots \circ
        \overline{(I^* \times \mathcal{A}_{d-1})} \circ A_d, \ell
      \right)
      &\mbox{ by Lemma~\ref{lemma:function-compoisition-of-cascade},}
  \\
      &\leq
      G(\mathcal{A}_d, \ell) 
      \cdot 
      \prod_{i=1}^{d-1} 
      G(\overline{I^* \times \mathcal{A}_i},\ell)
      &\mbox{by Proposition~\ref{prop:growth-composition},}
  \\
      &\leq
      G(\mathcal{A}_d, \ell) \cdot
      \prod_{i=1}^{d-1}
      G(\overline{I^*},\ell) \cdot 
      G(\overline{\mathcal{A}_i},\ell)
      &\mbox{by Proposition~\ref{prop:growth-parallel-composition}
             and~\ref{prop:parallel-equigrowth},}
  \\
      &=
      G(\mathcal{A}_d, \ell) \cdot
      \prod_{i=1}^{d-1}
      G(\overline{\mathcal{A}_i},\ell)
      &\mbox{ since $I^*$ is singleton,}
  \end{align*}
  Next we derive an upper bound on 
  $G(\overline{\mathcal{A}},\ell)$ 
  for an arbitrary class
  $\mathcal{A} = \mathcal{A}(\Phi,\Delta,\Theta;a,m)$. 
  \begin{align*}
    G(\overline{\mathcal{A}}, \ell)
    &= 
    G\left(
      \overline{
        \overline{(\pi_m^a)^*} 
         \circ 
         \left(
             (\overline{\Phi^*} \circ \Delta^{\popfunc}) \times I^*
         \right)
         \circ 
         \Theta}, \ell
    \right)
    &\mbox{ by Lemma~\ref{lemma:function-composition-generalised-automaton},}
\\
    &\leq
    G\left(
        \overline{
            \overline{(\pi_m^a)^*} 
             \circ 
             \left(
                 (\overline{\Phi^*} \circ \Delta^{\popfunc}) \times I^*
             \right)},\ell 
    \right)
    \cdot
    G\left(\overline{\Theta^*}, \ell \right)
    &\mbox{ by Proposition~\ref{prop:stateless-bar} and
            \ref{prop:growth-composition},}
\\
    &\leq
    G\left(
        \overline{(\pi_m^a)^*} 
         \circ 
         \big(
            \overline{(\overline{\Phi^*} \circ \Delta^{\popfunc}) \times I^*}
        \big),\ell 
    \right)
    \cdot
    G\left(\overline{\Theta^*}, \ell \right)
    &\mbox{ by Proposition~\ref{prop:bar},}
\\
    &\leq
    G\left(\overline{(\pi_m^a)^*}, \ell\right)
    \cdot 
    G\left(
        \overline{(\overline{\Phi^*} \circ \Delta^{\popfunc}) \times I^*},\ell 
    \right)
    \cdot
    G\left(\overline{\Theta^*}, \ell \right)
    &\mbox{ by Proposition~\ref{prop:growth-composition},}
\\
    &\leq
    \left|\pi_m^a\right|
    \cdot 
    G\left(
        \overline{(\overline{\Phi^*} \circ \Delta^{\popfunc}) \times I^*},\ell 
    \right)
    \cdot
    G\left(\overline{\Theta^*}, \ell \right)
    &\mbox{since $\pi_m^a$ is finite,}
\\
    &\leq
    \left|\pi_m^a\right|
    \cdot 
    G\left(
        \overline{(\overline{\Phi^*} \circ \Delta^{\popfunc})}
        \times 
        \overline{I^*},\ell 
    \right)
    \cdot
    G\left(\overline{\Theta^*}, \ell \right)
     &\mbox{ by Proposition~\ref{prop:parallel-equigrowth},}
\\
    &\leq
    \left|\pi_m^a\right|
    \cdot 
    G\left(
        \overline{(\overline{\Phi^*} \circ \Delta^{\popfunc})}, \ell
    \right)
    \cdot 
    G\left(\overline{I^*},\ell \right)
    \cdot
    G\left(\overline{\Theta^*}, \ell \right)
      &\mbox{ by Proposition~\ref{prop:growth-parallel-composition},}
\\
    &\leq
    \left|\pi_m^a\right|
    \cdot 
    G\left(
        \overline{(\overline{\Phi^*} \circ \Delta^{\popfunc})}, \ell
    \right)
    \cdot
    G\left(\overline{\Theta^*}, \ell \right)
    &\mbox{ since $I^*$ is a single function,}
\\
    &\leq
      \left|\pi_m^a\right|
      \cdot 
      G\left(\overline{\Phi^*} \circ \overline{\Delta^{\popfunc}},\ell\right)
      \cdot 
      G\left(\overline{\Theta^*},\ell\right)
      &\mbox{ by Proposition~\ref{prop:bar},}
\\
    &\leq
      \left|\pi_m^a\right|
      \cdot 
      G\left(\overline{\Phi^*}, \ell \right)
      \cdot 
      G\left(\overline{\Delta^{\popfunc}},\ell\right)
      \cdot 
      G\left(\overline{\Theta^*},\ell\right)
      &\mbox{ by Proposition~\ref{prop:growth-composition},}
  \\
    &\leq
      \left|\pi_m^a\right|
      \cdot 
      \left|\Delta\right|
      \cdot 
      G\left(\overline{\Phi^*}, \ell\right)
      \cdot 
      G\left(\overline{\Theta^*},\ell\right)
      &\mbox{ since $\Delta$ is a finite class,}
  \\
    &\leq
      \left|\pi_m^a\right|
      \cdot 
      \left|\Delta\right|
      \cdot 
      G\left(\Phi^*,\ell \cdot M\right) 
      \cdot
      G\left(\Theta^*,\ell \cdot M\right)
      &\mbox{ by Proposition~\ref{prop:bar-growth},}
  \\
    &\leq
      \left|\pi_m^a\right|
      \cdot 
      \left|\Delta\right|
      \cdot 
      G\left(\Phi,\ell \cdot M\right) 
      \cdot
      G\left(\Theta,\ell \cdot M\right)
      &\mbox{ by Proposition~\ref{prop:stateless-growth}.}
  \end{align*}
Using Lemma~\ref{lemma:growth-generalised-automata} we derive an upper bound
on the $G(\mathcal{A}_d,\ell)$,
$$
  G(\mathcal{A}_d,\ell) 
   \leq 
    \left|\pi_{m_d}^{a_d}\right| 
    \cdot 
    \left|\Delta_d\right| 
    \cdot 
    G(\Phi_d,\ell \cdot M) 
    \cdot 
    G(\Theta_d,\ell)
   \leq 
    \left|\pi_{m_d}^{a_d}\right| 
    \cdot 
    \left|\Delta_d\right| 
    \cdot 
    G(\Phi_d,\ell \cdot M) 
    \cdot 
    G(\Theta_d,\ell \cdot M).
$$
Then,
$$
    G(\mathcal{C},\ell) 
    \leq 
    \prod_{i=1}^{d}
    \left|\pi_{m_i}^{a_i}\right| 
    \cdot 
    \left|\Delta_i\right| 
    \cdot 
    G(\Phi_i,\ell \cdot M) 
    \cdot 
    G(\Theta_i,\ell \cdot M).
$$
The lemma is proved.
\end{proof}

\begin{lemma}
    \label{lemma:vc-dimension-cascades}
    Let $\mathcal{C}$ be a class 
    $\mathcal{C} = \mathcal{A}_1 \cascade \dots \cascade \mathcal{A}_d$ 
    of cascades where
    $
    \mathcal{A}_i = 
    \mathcal{A}(\Phi_i,\Delta_i,\Theta_i;a_i,m_i,\Pi_i,\Gamma_i)
    $.
    Let $M$ be the maximum length of a string.
    Let $h = \max_i \dimension(\Phi_i)$ 
    be maximum dimension of a class of input functions,
    let $g = \max_i \dimension(\Theta_i)$ be the maximum dimension of a class of
    output functions,
    let $\Delta = \argmax_{\Delta_i} \left|\Delta_i\right|$ be the class of
    semiautomata with maximum cardinality, and
    let $\pi_m^a = \argmax_{\pi_{m_i}^{a_i}} \left|\pi_{m_i}^{a_i}\right|$.
    Let $w = \log\left|\pi_m^a\right| + \log\left|\Delta\right| + h +
    g \geq 2$.
    The dimension of\/ $\mathcal{C}$ is bounded as:
    $$
      \dimension(\mathcal{C}) 
      \leq 
      2 \cdot d \cdot w 
      \cdot \log(d \cdot w \cdot e \cdot M \cdot |\Pi| \cdot |\Gamma|),
    $$
  \end{lemma}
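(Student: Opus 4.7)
The plan is to mirror the approach used in Lemma~\ref{lemma:vc-dimension-automota}, threading an extra factor of $d$ through the argument to account for the $d$ cascade components. The overall skeleton is: start from the growth bound on cascades, convert the growths of $\Phi_i$ and $\Theta_i$ into dimension-based bounds, collapse the product over $i$ by replacing indexed quantities with their maxima, and then invoke the standard dimension-vs-growth trick to extract an upper bound on $\dimension(\mathcal{C})$.

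First, I would invoke Lemma~\ref{lemma:growth-cascades} to obtain
$$G(\mathcal{C}, \ell) \leq \prod_{i=1}^d \left|\pi_{m_i}^{a_i}\right| \cdot \left|\Delta_i\right| \cdot G(\Phi_i, \ell \cdot M) \cdot G(\Theta_i, \ell \cdot M).$$
Then I would bound each $G(\Phi_i, \ell \cdot M)$ and $G(\Theta_i, \ell \cdot M)$ in terms of the corresponding dimension via Proposition~\ref{prop:bound-growth-vc-haussler}, and loosen by replacing each indexed quantity with its maximum as defined in the statement. Writing $C = M \cdot |\Pi| \cdot |\Gamma|$ for brevity and collecting factors yields a bound of the form
$$G(\mathcal{C}, \ell) \leq \left(\left|\pi_m^a\right| \cdot \left|\Delta\right|\right)^d \cdot \left(\ell \cdot e \cdot C\right)^{d(h+g)}.$$

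As in the automaton case, by Proposition~\ref{prop:inequality-two-growths} (applied to $\mathcal{C}_{\mathrm{bin}}$) it suffices to exhibit an $\ell$ with $G(\mathcal{C}, \ell) < 2^\ell$ to upper bound $\dimension(\mathcal{C})$. I would propose $\ell = 2 \cdot d \cdot w \cdot \log(d \cdot w \cdot e \cdot C)$, rewrite $2^\ell = (d w e C)^{2 d w}$, and then split the exponent using $2w = 2(h+g) + 2\log(\left|\pi_m^a\right| \cdot \left|\Delta\right|)$ to peel off a factor strictly larger than $(\left|\pi_m^a\right| \cdot \left|\Delta\right|)^d$ from the right-hand side. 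After this cancellation, the task reduces, following the same chain of sufficient conditions as in Lemma~\ref{lemma:vc-dimension-automota}, to checking $2 \log(dweC) < dweC$, which holds under the assumption $w \geq 2$ together with $C \geq 1$.

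The main obstacle is the algebraic bookkeeping: making sure the extra factor of $d$ appearing in both the exponent of the growth bound and in $\ell$ is handled consistently, so that the ratio $(dweC)/(2 \log(dweC))$ is strictly greater than $1$ at the end. The structure is identical to the single-automaton proof, but the substitutions must be carried out carefully so that the $d$-th power on the left is absorbed exactly, and no hidden dependence on the quantities we have already taken maxima over reappears.
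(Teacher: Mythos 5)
Your proposal is correct and follows essentially the same route as the paper's proof: apply Lemma~\ref{lemma:growth-cascades}, convert $G(\Phi_i,\ell M)$ and $G(\Theta_i,\ell M)$ to dimension-based bounds via Proposition~\ref{prop:bound-growth-vc-haussler}, take maxima to collapse the product into $\left(\left|\pi_m^a\right|\cdot\left|\Delta\right|\right)^d \cdot (\ell e C)^{d(h+g)}$, and then verify $G(\mathcal{C},\ell)<2^\ell$ at $\ell = 2dw\log(dweC)$ by splitting $2dw = 2d(h+g) + 2d\log(\left|\pi_m^a\right|\cdot\left|\Delta\right|)$, exactly as the paper does. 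The only cosmetic difference is that the paper states the final positivity condition as $d\cdot w \geq 2$ rather than $w\geq 2$, which is implied by your hypothesis since $d\geq 1$.
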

\begin{proof}
  By Lemma~\ref{lemma:growth-cascades} the upper bound on 
  $G(\mathcal{C}, \ell)$ is 
  \begin{align*}
    G(\mathcal{C}, \ell) 
    \leq 
    \prod_{i=1}^{d}
    \left|\pi_{m_i}^{a_i}\right| 
    \cdot 
    \left|\Delta_i\right| 
    \cdot 
    G(\Phi_i,\ell \cdot M) 
    \cdot 
    G(\Theta_i,\ell \cdot M).
  \end{align*}
  Applying Proposition~\ref{prop:bound-growth-vc-haussler} 
  to terms
  $G(\Phi_i,\ell \cdot M),  G(\Theta_i,\ell \cdot M)$ we obtain
  $$
    G(\mathcal{C}, \ell) 
    \leq 
    \prod_{i=1}^d
    \left|\pi_{m_i}^{a_i}\right|
    \cdot
    |\Delta_i| 
    \cdot  
    \left(e \cdot \ell \cdot M \cdot |\Pi_i|\right)^{h_i} 
    \cdot
    \left(e \cdot \ell \cdot |\Gamma_i|\right)^{g_i}.
  $$
We loosen the bound by collecting the factors with the exponent 
and by taking the maximum of each indexed quantity,
$$
    G(\mathcal{C}, \ell) 
    \leq 
    \left|\pi_m^a\right|^d
    \cdot 
    \left|\Delta\right|^d 
    \cdot  
    \left(
     \ell \cdot e \cdot M \cdot |\Pi| \cdot |\Gamma|
    \right)^{d \cdot (h+g)} 
$$
Next we show the claimed upper bound on the dimension of $\mathcal{C}$.
By the definition of dimension,
in order to show that a number $\ell$ is an upper bound on the dimension, 
it suffices to show that $G(\mathcal{C}_\mathrm{bin},\ell) < 2^\ell$. 
Then, since $G(\mathcal{C}_\mathrm{bin},\ell) \leq G(\mathcal{C},\ell)$ by
Proposition~\ref{prop:inequality-two-growths}, a number $\ell$ satisfies 
$G(\mathcal{C}_\mathrm{bin},\ell) < 2^\ell$ if it satisfies 
$G(\mathcal{C},\ell) < 2^\ell$, and in turn if it satisfies
$$
    \left(\left|\pi_m^a\right| \cdot \left|\Delta\right|\right)^d 
    \cdot  
    \left(
     \ell \cdot e \cdot M \cdot |\Pi| \cdot |\Gamma|
    \right)^{d \cdot (h+g)} 
    < 
    2^{\ell}.
$$
We show that the former inequality is satisfied for 
$
\ell 
= 2 \cdot d \cdot w \cdot 
  \log( d \cdot w \cdot e \cdot M \cdot |\Pi| \cdot |\Gamma|)
= 2 \cdot d \cdot w \cdot 
  \log( d \cdot w \cdot e \cdot C),
$
where we introduce $C = M \cdot |\Pi| \cdot |\Gamma|$ for brevity.
We keep proceeding by sufficient conditions. 
Specifically, at each step, either
we replace the r.h.s.\ with a smaller expression, 
we simplify a common term between the two sides,
or we just rewrite an expression in an equivalent form.
\begin{align*}
    \left(\left|\pi_m^a\right| \cdot \left|\Delta\right|\right)^d 
    \cdot  
    \left(
     \ell \cdot e \cdot M \cdot |\Pi| \cdot |\Gamma|
    \right)^{d \cdot (h+g)} 
    &< 
    2^{\ell}
\\
    \left(\left|\pi_m^a\right| \cdot \left|\Delta\right|\right)^d 
    \cdot  
    \left( \ell \cdot e \cdot C \right)^{d \cdot (h+g)} 
    &< 
    2^{\ell}
\\
    \left(\left|\pi_m^a\right| \cdot \left|\Delta\right|\right)^d 
    \cdot  
    \big(
         2 \cdot d \cdot w \cdot \log(d \cdot w \cdot e \cdot  C)
         \cdot e \cdot C
    \big)^{d \cdot (h+g)} 
    &< 
    2^{
         2 \cdot d \cdot w \cdot \log(d \cdot w \cdot e \cdot C)
      },
\\
    \left(\left|\pi_m^a\right| \cdot \left|\Delta\right|\right)^d 
    \cdot  
    \big(
         2 \cdot d \cdot w \cdot \log(d \cdot w \cdot e \cdot C)
         \cdot e \cdot C
    \big)^{d \cdot (h+g)} 
    &< 
    (d \cdot w \cdot e \cdot C)^{2 \cdot d \cdot w},
\\
    \left(\left|\pi_m^a\right| \cdot \left|\Delta\right|\right)^d 
    \cdot  
    \big(
         2 \cdot d \cdot w \cdot \log(d \cdot w \cdot e \cdot C)
         \cdot e \cdot C
    \big)^{d \cdot (h+g)} 
    &< 
    (d \cdot w \cdot e \cdot C)^{2 \cdot d \cdot (h + g)}
    \cdot 
    (d \cdot w \cdot e \cdot C)^{
            2 \cdot d \cdot (\log(|\Delta| \cdot |\pi_m^a|))},
\\
    \left(\left|\pi_m^a\right| \cdot \left|\Delta\right|\right)^d 
    \cdot  
    \big(
         2 \cdot d \cdot w \cdot \log(d \cdot w \cdot e \cdot C)
         \cdot e \cdot C
    \big)^{d \cdot (h+g)} 
    &< 
    (d \cdot w \cdot e \cdot C)^{2 \cdot d \cdot (h + g)}
    \cdot 
    (d \cdot w \cdot e \cdot C)^{
        2 \cdot d \cdot (\ln(|\Delta|\cdot |\pi_m^a|)/\ln2)},
\\
    \left(\left|\pi_m^a\right| \cdot \left|\Delta\right|\right)^d 
    \cdot  
    \big(
         2 \cdot d \cdot w \cdot \log(d \cdot w \cdot e \cdot C)
         \cdot e \cdot C
    \big)^{d \cdot (h+g)} 
    &< 
    (d \cdot w \cdot e \cdot C)^{2 \cdot d \cdot (h + g)}
    \cdot 
    (d \cdot w \cdot e \cdot C)^{
        2 \cdot d \cdot \ln(|\Delta| \cdot |\pi_m^a|)},
\\
    \left(\left|\pi_m^a\right| \cdot \left|\Delta\right|\right)^d 
    \cdot  
    \big(
         2 \cdot d \cdot w \cdot \log(d \cdot w \cdot e \cdot C)
         \cdot e \cdot C
    \big)^{d \cdot (h+g)} 
    &< 
    (d \cdot w \cdot e \cdot C)^{2 \cdot d \cdot (h + g)}
    \cdot 
    \left(\left|\pi_m^a\right| \cdot \left|\Delta\right|\right)^{2 \cdot d},
\\
    \big(
         2 \cdot d \cdot w \cdot \log(d \cdot w \cdot e \cdot C)
         \cdot e \cdot C
    \big)^{d \cdot (h+g)} 
    &< 
    (d \cdot w \cdot e \cdot C)^{2 \cdot d \cdot (h + g)},
\\
     2 \cdot \log(d \cdot w \cdot e \cdot C)^{d \cdot (h+g)} 
    &< 
    (d \cdot w \cdot e \cdot C)^{d \cdot (h + g)},
\\
    1 
    &< 
    \left(
        \frac{(d \cdot w \cdot e \cdot C)}
             {2 \cdot \log(d \cdot w \cdot e \cdot C)}
    \right)^{d \cdot (h+g)}.
\end{align*}
The term on the r.h.s.\ is strictly greater than $1$ 
whenever $d \cdot w \geq 2$, considering that $C \geq 1$. The lemma is proved.
\end{proof}

\section{Formal Description of The Running Example}

The sequence task of the running example (Example~\ref{ex:running-example-1}) is
formalised by the following Temporal Datalog program---for Temporal Datalog see 
(Ronca et al.\ 2022).
Let 
$\mathit{wood}$,
$\mathit{iron}$,
$\mathit{fire}$,
$\mathit{steel}$,
and
$\mathit{factory}$ be unary
unary predicates.
They are used to model the input.
For example, $\mathit{wood}(17)$ is in the input trace if the agent collects
wood at time $17$, and $\mathit{factory}(5)$ is in the input trace if the agent
attempts to use the factory at time $5$.
Additionally, let 
$\mathit{getWood}$,
$\mathit{getIron}$,
$\mathit{getFire}$,
$\mathit{getSteel}$,
and
$\mathit{useFactory}$
be unary predicates.
They are defined by the following rules.
\begin{align}
  \mathit{wood}(t)  & \to \mathit{getWood}(t)  
  \\
  \mathit{getWood}(t)  & \to \mathit{getWood}(t+1)  
  \\
  \mathit{iron}(t)  & \to \mathit{getIron}(t)  
  \\
  \mathit{getIron}(t)  & \to \mathit{getIron}(t+1)  
  \\
  \mathit{fire}(t)  & \to \mathit{getFire}(t)  
  \\
  \mathit{getFire}(t)  & \to \mathit{getFire}(t+1)  
  \\
  \mathit{steel}(t)  & \to \mathit{getSteel}(t)  
  \\
  \mathit{getSteel}(t)  & \to \mathit{getSteel}(t+1)  
  \\
  \mathit{getSteel}(t-1) \land \mathit{factory}(t)  & \to \mathit{useFactory}(t)  
  \\
  \mathit{getWood}(t-1) \land \mathit{getIron}(t-1) \land \mathit{getFire}(t-1)
  \land \mathit{factory}(t)  & \to \mathit{useFactory}(t)  
  \\
  \mathit{useFactory}(t)  & \to \mathit{useFactory}(t+1)  
\end{align}
Then, at any given time point $\tau$,
the task has been completed if $\mathit{useFactory}(\tau)$ is entailed by the
input facts and the rules above.

\section*{References for The Appendix}

\smallskip
\par
\noindent
Kakade, S.; and Tewari, A. 2008.
\newblock Lecture Notes of CMSC 35900 (Spring 2008) Learning
  Theory, Lecture 12.
  \newblock Available at: 
  \url{https://home.ttic.edu/%7Etewari/lectures/lecture12.pdf}
\newblock [Last accessed: 30/11/2022].

\smallskip
\par
\noindent
Ronca, A.; Kaminski, M.; Cuenca Grau, B.; and Horrocks, I. 2022.
\newblock The delay and window size problems in rule-based stream reasoning.
\newblock \emph{Artif. Intell.}, 306: 103668.


\fi

\end{document}